\documentclass[sn-mathphys,Numbered]{sn-jnl}


\usepackage{manyfoot}
\usepackage{proof}
\usepackage{bm}
\usepackage{latexsym}
\usepackage{amssymb}
\usepackage{amsmath}
\usepackage{amsfonts}
\usepackage{amsthm}
\usepackage{algorithm}
\usepackage[noend]{algorithmic}
\usepackage{verbatim}
\usepackage{color}
\usepackage{graphicx}

\newcommand{\ifdraft}[1]{#1}
\definecolor{aocolour}{rgb}{0.7,0.8,1}
\newcommand{\ao}[1]{\ifdraft{\noindent\colorbox{aocolour}{A.O.: #1}}}

\newcommand{\PC}{\mathbf{Pr}}
\newcommand{\Conj}{\mathbf{Conj}}
\newcommand{\BCat}{\mathbf{BCat}}
\newcommand{\Cat}{\mathbf{BCat}_{\wedge}}
\newcommand{\Tp}{\mathbf{Cat}}
\newcommand{\C}{\mathop{\raisebox{.2pt}{\&}}}
\newcommand{\SL}{\mathop{/}}
\newcommand{\BS}{\mathop{\backslash}}

\newcommand{\LL}{\mathbf{L}}
\newcommand{\Ls}{\mathbf{L}\!^{\boldsymbol{*}}}
\newcommand{\MALC}{\mathbf{MALC}}
\newcommand{\MALCs}{\mathbf{MALC}^{\boldsymbol{*}}}
\newcommand{\MACLL}{\mathbf{MACLL}}
\newcommand{\MALCD}{\mathbf{MALCD}}
\newcommand{\MALCDs}{\mathbf{MALCD}^{\boldsymbol{*}}}
\newcommand{\conjcat}{\mathcal{C}}

\newcommand{\Par}{\mathop{\raisebox{.6em}{\rotatebox{180}{\&}}}}

\newcommand{\mathsc}[1]{{\normalfont\textsc{#1}}}

\renewcommand{\epsilon}{\varepsilon}



\theoremstyle{thmstyleone}%
\newtheorem{theorem}{Theorem}
\newtheorem{lemma}{Lemma}
\newtheorem{oldtheorem}{Theorem}

\newtheorem{oldlemma}[oldtheorem]{Lemma}
\newtheorem{corollary}{Corollary}

\theoremstyle{thmstyletwo}%
\newtheorem{example}{Example}%
\newtheorem*{claim}{Claim}

\theoremstyle{thmstylethree}%

\raggedbottom

\begin{document}

\title{Conjunctive categorial grammars and Lambek grammars with additives}


\author*[1,2]{\fnm{Stepan L.} \sur{Kuznetsov}}\email{sk@mi-ras.ru}

\author[3]{\fnm{Alexander} \sur{Okhotin}}\email{alexander.okhotin@spbu.ru}

\affil[1]{\orgname{Steklov Mathematical Institute of RAS},
\orgaddress{\street{Gubkina St., 8}, \city{Moscow}, \postcode{119991}, \country{Russia}}}

\affil[2]{\orgname{Department of Computer Science, HSE University},
\orgaddress{\street{Pokrovsky Blvd., 11},
\city{Moscow}, \postcode{109028}, \country{Russia}}}

\affil[3]{\orgdiv{Department of Mathematics and Computer Science}, \orgname{St.~Petersburg State University},
\orgaddress{\street{14th Line V. O., 29}, \city{Saint Petersburg}, \postcode{199178}, \country{Russia}}}


\abstract{%
A new family of categorial grammars is proposed,
defined by enriching basic categorial grammars with a conjunction operation.
It is proved that the formalism obtained in this way
has the same expressive power as conjunctive grammars,
that is, context-free grammars enhanced with conjunction.
It is also shown
that categorial grammars with conjunction
can be naturally embedded
into the Lambek calculus with conjunction and disjunction operations.
This further implies that a certain NP-complete set
can be defined in the Lambek calculus with conjunction.
We also show how to handle some subtle issues connected with the empty string.
Finally, we prove that a language generated by a conjunctive grammar can be
described by a Lambek grammar with disjunction (but without conjunction).
\footnotetext{This article is an extended version of the conference presentation ``Conjunctive categorial grammars'' at the Mathematics of Language
2017 meeting (London, UK, July 13--14, 2017; proceedings published in ACL Anthology, \href{http://aclweb.org/anthology/W17-3414}{W17-3414}).}
}

\keywords{Categorial grammars, conjunctive grammars, Lambek calculus, conjunction}



\maketitle

\sloppy

\section{Introduction}

This paper establishes a connection
between two formal grammar models
that emerged within two different theories of syntax.

One theory is the \emph{immediate constituent analysis},
which has its roots in the traditional grammar,
and was investigated in the early 20th century
by the structural linguists.
It reached universal recognition
under the name ``context-free grammars,''
introduced in Chomsky's early work.
In this paradigm,
a grammar assigns certain properties
to groups of words,
such as ``noun phrase'' (NP), ``verb phrase'' (VP) or ``sentence'' (S).
These properties are known as \emph{syntactic categories},
or, in Chomsky's terminology, \emph{nonterminal symbols}.
Rules of a grammar, such as $\mathrm{S} \to \mathrm{NP}\ \mathrm{VP}$,
show how shorter substrings with known properties
can be concatenated to form longer strings belonging to a certain category.

In the other theory,
which was discovered by Ajdukiewicz~\cite{Ajdukiewicz1935} and by Bar-Hillel~\cite{Barhillel_categorial},
and is nowadays known as \emph{categorial grammars},
syntactic categories are defined in a different way.
``Noun phrases'' are treated as ``the category of phrases \emph{equivalent} to nouns'',
that is, admissible wherever a noun would be admissible.
Then, verb phrases are defined as
``the category of phrases, which would form a complete sentence,
if a noun, or anything equivalent to a noun, is concatenated on the left'',
denoted by $\mathsc{noun} \backslash \mathsc{sentence}$.
A categorial grammar explicitly assigns categories to individual words;
and then, by definition,
a concatenation of any string of type $\mathsc{noun}$
with any string of type $\mathsc{noun} \backslash \mathsc{sentence}$
forms a complete sentence.
The crucial feature of this approach
is that the laws that govern reduction of categories,
namely, $A \, (A \BS B)$ to $B$,
and $(B \SL A) \, A$ to $B$,
are universal.
In contrast, Chomsky's context-free formalism
uses different rules for different categories.

A formal connection between these two models
was established by Bar-Hillel et al.~\cite{BGS1960},
who proved them to be equivalent in power:
a language is defined by a context-free grammar
if and only if
it is defined by a basic categorial grammar
(assuming languages without the empty string).

More than half a century of research
gave birth to many extensions of both basic models.
Categorial grammars were the first to get an interesting extension:
the \emph{Lambek calculus}, introduced by Lambek~\cite{Lambek1958},
augments the model with additional derivation rules.
Later, Pentus~\cite{Pentus1993} established
that this extended model is still equivalent in power
to context-free grammars.
Pentus's translation yields a context-free grammar
of exponential size with respect to the original Lambek grammar.
For the special case of {\em unidirectional} Lambek grammars,
which use only one kind of division operators ($\BS$, $\SL$), but not both,
Kuznetsov~\cite{Kuznetsov2016},
using the ideas of Savateev~\cite{Savateev2009},
presents a polynomial translation into context-free grammars.
Other generalizations of categorial grammars include
combinatory categorial grammars by Steedman~\cite{Steedman},
categorial dependency grammars by Dikovsky and Dekhtyar~\cite{DikovskyDekhtyar},
and others.

Lambek grammars, in their turn, can be generalized further.
Modern extensions of the Lambek calculus with new operations,
such as those considered
by Carpenter~\cite{CarpenterBook},
Morrill~\cite{MorrillBook}
and Moot and Retor\'{e}~\cite{MootRetoreBook},
are capable of describing quite sophisticated syntactic phenomena.

From the point of view of modern logic, the Lambek calculus is a
substructural logical system, namely, a non-commutative variant
of \emph{linear logic},
introduced by Girard~\cite{Girard1987},
see Abrusci~\cite{Abrusci}, Yetter~\cite{Yetter}.
Linear logic offers many logical operations,
and some of them can be used
in the non-commutative case
for extending Lambek grammars.

Morrill~\cite{MorrillBook} and his collaborators, following and extending
Moortgat~\cite{Moortgat1996}, consider a system, based on the Lambek calculus,
with discontinuous connectives, subexponentials for controlled non-linearity,
brackets for controlled non-associativity, and many other operations.
The use of negation in categorial grammars
was considered by Buszkowski~\cite{Buszkowski_negation}.
Kanazawa~\cite{Kanazawa1992} investigated 
the power of Lambek grammars with conjunction and disjunction operations
that are ``additive operations'' in terms of linear logic.

Numerous generalized models have also been introduced
in the paradigm of immediate constituent analysis,
as extensions of the context-free formalism.
One direction is to extend the form of constituents,
that is, sentence fragments
to which syntactic categories are being assigned in a grammar.
The most well-known of these models are the multi-component grammars,
introduced by Vijay-Shanker et al.~\cite{VijayshankerWeirJoshi}
and by Seki et al.~\cite{SekiMatsumuraFujiiKasami},
inspired by an earlier model by Fischer~\cite{Fischer}:
these grammars define the properties of discontinuous constituents,
that is, substrings with a bounded number of gaps.
Extensions of another kind augment the model
by introducing new logical operators to be used in grammar rules:
for instance, \emph{conjunctive grammars},
featuring a conjunction operation,
and \emph{Boolean grammars},
further equipped with negation,
were introduced by Okhotin~\cite{Conjunctive,BooleanGrammars}.
Earlier, Latta and Wall~\cite{LattaWall}
argued for the relevance of such operations
in linguistic descriptions.
The main results on conjunctive grammars indicate that they preserve
the practically useful properties of context-free grammars,
such as efficient parsing algorithms,
while substantially extending their expressive power.
The known results on conjunctive grammars
are presented in a recent survey by Okhotin~\cite{ConjunctiveTokyo}.

A decade ago, Kuznetsov~\cite{Kuznetsov2013}
compared the expressive power of Lambek grammars
with conjunction,
as considered by Kanazawa~\cite{Kanazawa1992},
with that of conjunctive grammars.
It was proved that a large subclass of conjunctive grammars
(namely, conjunctive grammars in Greibach normal form)
can be simulated in the Lambek calculus with conjunction,
but the exact power of the latter remains undetermined.

This paper makes a fresh attempt
at introducing conjunction in categorial grammars.
The new model extends basic categorial grammars,
rather than Lambek grammars,
and for that reason it uses categories and rules
of a simpler form than in the earlier model
of Kanazawa~\cite{Kanazawa1992}
and Kuznetsov~\cite{Kuznetsov2013}.
Yet, it is shown that this model can simulate every conjunctive grammar.
A converse simulation is presented as well,
which implies the equivalence of the two models.

The grammars studied in this paper
are modelled on the classical notions of a basic categorial grammar
and a context-free grammar, presented in the introductory Section~\ref{S:intro}.
In this paper, context-free grammars are defined via logical derivation
rather than Chomskian string rewriting,
and this exposes further similarities between the two models.
Then, Section~\ref{section_definitions_conjunction} similarly presents conjunctive grammars,
defined via logical derivation as well,
and introduces conjunctive categorial grammars.

The equivalence between these two models is
established in Section~\ref{section_equivalence}.
As compared to the classical equivalence result
for context-free grammars and basic categorial grammars,
the new result requires a more elaborate construction.
One particular difficulty
is that the normal form theorems for conjunctive grammars
are weaker than those for the context-free grammars:
in particular, no analogue of the Greibach normal form is
known for conjunctive grammars.
For this reason, the simulation of conjunctive grammars
by the proposed conjunctive categorial grammars
relies on a different normal form by Okhotin and Reitwie{\ss}ner~\cite{OkhotinReitwiessner}.
This leads to a representation
of the whole class of conjunctive grammars,
in contrast to the result by Kuznetsov~\cite{Kuznetsov2013},
which is valid only for grammars in Greibach normal form.

The second result, established in Section~\ref{section_lambek},
is that conjunctive categorial grammars,
as defined in this paper,
can be represented in the Lambek calculus
with the conjunction operation,
as considered by Kanazawa~\cite{Kanazawa1992},
and therefore this extension of the Lambek calculus
is at least as powerful as are the conjunctive grammars.
Furthermore, it is proved that Kanazawa's~\cite{Kanazawa1992} model
can describe an NP-complete language,
which conjunctive grammars cannot describe unless $\mathrm{P}=\mathrm{NP}$.

In Section~\ref{S:empty}, we discuss issues connected with the empty string. Namely,
the Lambek calculus has a natural variant which allows grammars to describe languages
which include the empty string, and we show that any language described by a conjunctive
grammar, even if it includes the empty string, can be also described by a grammar based
on this variation of the Lambek calculus with conjunction.

Finally, in Section~\ref{S:disjunction} we show that our results on embedding
conjunctive grammars can be equivalently established for the Lambek calculus extended with
disjunction instead of conjunction. In Section~\ref{S:distributivity} we prove that the
results keep valid if one adds the distributivity principle for conjunction and disjunction.

\section{Basic Categorial Grammars and Context-Free Grammars}\label{S:intro}

Let $\Sigma$ be a finite \emph{alphabet} of the language being defined,
its elements are called \emph{symbols}.
In linguistic descriptions,
symbols typically represent words of the language.
The set of all strings over $\Sigma$, including the empty one ($\varepsilon$), is denoted by $\Sigma^*$.
The notation $\Sigma^+$ stands for the set of all non-empty strings.
Any subset of $\Sigma^*$ is a (formal) \emph{language.} We distinguish languages with and without
the empty string; the latter are subsets of $\Sigma^+$.

The models considered in this paper
are derived from two classical formal grammar frameworks:
basic categorial grammars
and context-free grammars.

{\em Basic categorial grammars} (BCG) have their roots in the works of Ajdukiewicz~\cite{Ajdukiewicz1935}.
Let $\Sigma$ be an alphabet.
Let $\PC = \{ p, q, r, \dots \}$ be a finite set of \emph{primitive categories},
and let $s \in \PC$ be a designated \emph{target category}
of all syntactically correct sentences.

The set $\BCat$ of {\em basic categories} is defined as the smallest set which satisfies the following conditions.
\begin{enumerate}
\item
	Every primitive category is a basic category.
\item
	If $A \in \BCat$ and $p \in \PC$,
	then $(p \BS A) \in \BCat$ and $(A \SL p) \in \BCat$.
\end{enumerate}

The definition of a basic categorial grammar is given
in terms of logical \emph{propositions},
which are expressions of the form $B(v)$,
where $B \in \BCat$ and $v \in \Sigma^+$.
This proposition states that $v$ is a string of syntactic category $B$.

A categorial grammar is then regarded
as a \emph{logical calculus}
for deriving categorial propositions.
It includes a finite set of axioms (axiomatic propositions) of the form $A(a)$,
where $A \in \BCat$ and $a \in \Sigma$,
and the following inference rules.
\begin{equation*}
	\infer{A(uv)}{p(u) & (p \BS A)(v)}
	\qquad
	\infer{A(uv)}{(A \SL p)(u) & p(v)}
\end{equation*}

The string $w$ belongs to the language generated by the BCG
if and only if
the proposition $s(w)$ is derivable in this calculus.

By definition, BCGs can describe only languages without the empty string.

\begin{example}\label{BCG_example}
The basic categorial grammar
with the following axiomatic propositions,
where $\PC = \{ s, p, q \}$
and $s$ is the target category,
describes the language $\{b a^n c a^n \mid n \geqslant 0 \}$.
\begin{equation*}
	(s \SL p)(b), \quad p(c), \quad (p \SL q)(a), \quad (p \BS q)(a)
\end{equation*}
\end{example}

Another, universally known formal grammar framework
is the phrase-structure formalism,
defined by Chomsky~\cite{Chomsky}
and later renamed into \emph{context-free grammars} (CFG).
In a CFG,
there is a fixed finite set of categories $N$
(usually called ``non-terminal symbols''),
and one of them is designated as the initial symbol $S \in N$.
The grammar is defined by a finite set of rules
(or ``productions'')
of the form $A \to \beta$,
where $A \in N$ and $\beta \in (\Sigma \cup N)^*$.

It is a well-known fact that if a language is context-free and does not contain $\varepsilon$, then it can be described by a CFG
without $\varepsilon$-rules: for each rule $A \to \beta$, $\beta \in (\Sigma \cup N)^+$.

Even though Chomsky's original definition of context-free grammars
was given in terms of string rewriting,
it is more convenient---at least in this paper---%
to present it as a logical derivation
similar to the one in categorial grammars.
Propositions in the context-free framework
are of the form $\beta(u)$,
where $\beta \in (\Sigma \cup N)^*$ and $u \in \Sigma^*$.
Intuitively, such a proposition means
that $u$ can be derived from $\beta$ using the rules of the CFG.
Axioms of the calculus of propositions are of the form $a(a)$, $a \in \Sigma$, and $\varepsilon(\varepsilon)$,
and the rules of inference are as follows.
\begin{align*}
&\infer{(\beta_1 \beta_2)(u_1 u_2)}{\beta_1(u_1) & \beta_2(u_2)}
	\\
&\infer{A(v)}{\beta(v)}
	&& \text{for each rule $A \to \beta$}
\end{align*}

Again, the string $w$ belongs to the language generated by this grammar if and only if
the proposition $S(w)$ is derivable.

\begin{example}
The language from Example~\ref{BCG_example} is described by the following CFG.
\begin{align*}
S &\to b A \\
A &\to aAa \ | \ c
\end{align*}
As usual, ``$A \to aAa \ |\ c$'' is a short-hand notation for two rules,
$A \to aAa$ and $A \to c$.
\end{example}

There is an important difference between BCGs and CFGs: in BCGs, the linguistic information
is stored in the axioms, each associated with an alphabet symbol
(in other words, it is {\em lexicalized}), while the inference rules are
the same for all BCGs.
For CFGs, the situation is opposite:
axioms are trivial,
and all information is kept in the rules.
However, these two formalisms are equivalent in power, provided the languages in question do not include $\varepsilon$.

\begin{oldtheorem}[Bar-Hillel et al.~\cite{BGS1960}]\label{Th:BarHillel}
A language is generated by a BCG if and only if it is generated by a CFG and does not include $\varepsilon$.
\end{oldtheorem}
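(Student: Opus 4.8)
The plan is to establish the two implications separately, in each direction translating derivations of one calculus into derivations of the other; this mirrors the classical argument of Bar-Hillel et al., but is made smoother by the derivation-style presentation of context-free grammars adopted above.

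\emph{From a BCG to a CFG.} Given a BCG, let $N$ be the set of all subformulas of the basic categories occurring in its axiomatic propositions; this set is finite and closed under subformulas. An easy induction on the length of a derivation shows that every basic category appearing anywhere in a BCG-derivation belongs to $N$, and in particular $s \in N$ whenever the generated language is nonempty (the empty case being trivial). I would then form a CFG with nonterminal set $N$, start symbol $s$, and the rules: $A \to a$ for each axiom $A(a)$; $A \to p\,(p \BS A)$ whenever $p \BS A \in N$; and $A \to (A \SL p)\,p$ whenever $A \SL p \in N$. Since every rule has a right-hand side of length at least one, no nonterminal of this CFG derives $\varepsilon$. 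The crux is then the equivalence, for $A \in N$ and $w \in \Sigma^+$, of ``$A(w)$ is derivable in the BCG'' and ``$A$ derives $w$ in the CFG'', proved by induction on the size of a derivation in either system: the two BCG inference rules correspond precisely to the rules $A \to p\,(p \BS A)$ and $A \to (A \SL p)\,p$ used together with the concatenation rule of the CFG calculus, and the axioms $A(a)$ to the rules $A \to a$. Instantiating $A := s$ yields equality of languages, and $\varepsilon$ is excluded automatically.

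\emph{From a CFG to a BCG.} Let $L \subseteq \Sigma^+$ be context-free. By the fact recalled above I may take a CFG for $L$ with no $\varepsilon$-rules, and then, by a classical normal-form theorem, put it in Greibach normal form, so that every rule reads $A \to a\,B_1 \cdots B_k$ with $a \in \Sigma$, $B_i \in N$, $k \geq 0$. I build a BCG whose primitive categories are the nonterminals $N$, with target $s := S$, and whose axioms assign to $a$, for each rule $A \to a\,B_1 \cdots B_k$, the category
\begin{equation*}
\big(\cdots\big(A \SL B_k\big) \SL \cdots \SL B_1\big)
\end{equation*}
(which for $k = 0$ is just $A$, giving the axiom $A(a)$). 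One inclusion is easy: if $A$ derives $w$ in the grammar then $w = a\,v_1 \cdots v_k$ for some rule $A \to a\,B_1 \cdots B_k$ with each $B_i$ deriving $v_i$, and the propositions $B_i(v_i)$, available by induction, combine with the axiom above via $k$ successive applications of the rule deriving $X(uv)$ from $(X \SL p)(u)$ and $p(v)$; hence $L \subseteq L(\mathrm{BCG})$.

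The step I expect to be the main obstacle is the reverse inclusion: one must show that the constructed BCG does not overgenerate, i.e.\ that derivability of $S(w)$ forces $w \in L$. The key structural observation is that every basic category occurring in this BCG is either primitive or a ``right spine'' $\big(\cdots(A \SL B_k) \SL \cdots \SL B_j\big)$ coming from some rule; hence the left-division rule is never applicable, and the right-division rule applied to such a spine can only strip off the arguments $B_j, B_{j-1}, \dots$ in exactly that order. Using this, an induction on the length of a BCG-derivation shows that a proposition $C(w)$ is derivable only when either $C$ is primitive and $C$ derives $w$ in the grammar, or $C = \big(\cdots(A \SL B_k) \SL \cdots \SL B_j\big)$ for some rule $A \to a\,B_1 \cdots B_k$ and $w = a\,v_j v_{j+1} \cdots v_k$ with each $B_\ell$ deriving $v_\ell$. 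Taking $C := S$ gives $L(\mathrm{BCG}) \subseteq L$, and the two inclusions together complete the proof. Aside from this analysis, what remains is routine: the finiteness of the category and nonterminal sets, the correctness of the normal-form transformations, and the bookkeeping in the inductions on derivation length.
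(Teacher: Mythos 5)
The paper itself offers no proof of this statement: Theorem~A is quoted from Bar-Hillel, Gaifman and Shamir, so there is nothing internal to compare against. Your route is the classical one --- a subformula analysis turning a BCG into an $\varepsilon$-free CFG over the subcategories of the axiom categories, and, conversely, Greibach normal form plus a ``spine'' analysis of the constructed BCG --- and its overall structure is sound; the first direction in particular parallels the inverted-subformula-property argument the paper itself uses in Section~3 for the conjunctive case.

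One step does need repair: the invariant you state for the reverse inclusion of the second direction is false as written. With your axiom $\bigl(\cdots(A \SL B_k) \SL \cdots \SL B_1\bigr)(a)$ the \emph{outermost} denominator is $B_1$, so the right-division rule strips the arguments in the order $B_1, B_2, \dots, B_k$, not ``$B_j, B_{j-1}, \dots$''; accordingly, the partial category $\bigl(\cdots(A \SL B_k) \SL \cdots \SL B_j\bigr)$ is derivable exactly for strings of the form $a\,v_1 \cdots v_{j-1}$, where $B_\ell$ derives $v_\ell$ for $\ell < j$ (the arguments already consumed), and not for $a\,v_j \cdots v_k$ as you claim --- already for $j=1$ your version would deny that the axiom category derives the one-symbol string $a$, since $a\,v_1\cdots v_k$ has length at least $k+1$. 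With the invariant corrected in this way the induction does go through: the left-division rule is indeed never applicable (no $\BS$ occurs in the grammar), every category in a derivation is a subexpression of an axiom category, the second premise of each right-division step is a proposition $B_j(v)$ with $B_j$ primitive and is handled by the primitive case of the induction, and instantiating the invariant at the primitive target $S$ gives $L(\mathrm{BCG}) \subseteq L$. So this is the right approach with a fixable indexing slip in the key lemma, rather than a conceptual gap.
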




\section{Conjunction in Grammars}\label{section_definitions_conjunction}

In this section, both grammar formalisms
are enriched with a \emph{conjunction} operation.
Using conjunction, one can impose
multiple syntactic constraints
on the same phrase at the same time.
The extension of context-free grammars with conjunction,
called {\em conjunctive grammars,}
was introduced by Okhotin~\cite{Conjunctive}.


Let $\Sigma$ be the alphabet,
and let $N$ be the set of categories (``non-terminal symbols''),
with $S \in N$ representing all well-formed sentences.
A conjunctive grammar is defined by a finite set of rules
of the form $A \to \beta_1 \C \dots \C \beta_k$,
with $\beta_i \in (\Sigma \cup N)^*$.
If $k$ is $1$, then this is an ordinary rule $A \to \beta$,
as in an ordinary context-free grammar.

Propositions in a conjunctive grammar are of the form
$\beta(u)$,
with $u \in \Sigma^*$
and $\beta \in (\Sigma \cup N)^*$,
the same as for ordinary context-free grammars.

Axioms of the calculus of propositions are of the form
$a(a)$, where $a \in \Sigma$, and $\varepsilon(\varepsilon)$,
and the inference rules
are as follows.
\begin{align*}
	&\infer{(\beta_1 \beta_2)(u_1 u_2)}{\beta_1(u_1) & \beta_2(u_2)}
\intertext{%
The other inference rule is valid
for each grammar rule $A \to \beta_1 \C \ldots \C \beta_k$
and for each string $v$.
}
	&\infer{A(v)}{\beta_1(v) & \dots & \beta_k(v)}
\end{align*}

The string $w$ belongs to the language generated by the grammar
if and only if
the proposition $S(w)$ is derivable from the axioms.

As for context-free grammars, for languages without $\varepsilon$ one may, w.l.o.g., suppose that the conjunctive
grammar does not have $\varepsilon$-rules, i.e., for each rule $A \to \beta_1 \C \dots \C \beta_k$ we have $\beta_1, \ldots, \beta_k \in (\Sigma \cup N)^+$.

\begin{example}\label{conjunctive_example}
The following conjunctive grammar
describes the language $\{ b a^n c a^n c a^n \mid n \geqslant 1 \}$.
\begin{align*}
	S &\to b B c A \mathop{\&} b A c B \\
	A &\to a A \ | \ a \\
	B &\to a B a \ | \ c
\end{align*}
The rules for $A$ and $B$ use no conjunction,
and have the same effect
as in ordinary context-free grammars.
Thus, $bBcA(w)$ is true
for all strings of the form $w=b a^n c a^n c a^i$,
with $n \geqslant 0$, $i \geqslant 1$,
whereas $bAcB(w)$ holds true
for strings of the form $w=b a^i c a^n c a^n$.
The conjunction of these two conditions
is exactly the condition of membership in the desired language,
and the rule for $S$ ensures it
by derivations of the following form.
\begin{equation*}
	\infer{S(b a^n c a^n c a^n)}{bBcA(b a^n c a^n c a^n) & bAcB(b a^n c a^n c a^n)}
\end{equation*}
For the string $w=bacaca$,
the full derivation of its syntactical correctness is given below.
\begin{align*}
	\infer{S(bacaca)}{
		\infer{bBcA(bacaca)}{
			b(b)
			&
			\infer{B(aca)}{
				a(a) & B(c) & a(a)
			}
			&
			c(c)
			&
			A(a)
		}
		&
		\infer{bAcB(bacaca)}{
			b(b)
			&
			A(a)
			&
			c(c)
			&
			\infer{B(aca)}{
				a(a) & B(c) & a(a)
			}			
		}
	}
\end{align*}
\end{example}


The notion of a \emph{conjunctive categorial grammar}
is defined by extending basic categorial grammars
with the conjunction operation.
Let $\PC = \{ p, q, r, \dots \}$ be the set of primitive categories,
$s \in \PC$ is the target category.
\\
The set of {\em conjuncts,} $\Conj$, is defined as the smallest set satisfying the following:
\begin{enumerate}
\item
	every primitive category is a conjunct;
\item
	if $p_1, \dots, p_k \in \PC$, then $(p_1 \wedge \dots \wedge p_k) \in \Conj$.
\end{enumerate}
\noindent
The set of {\em basic categories with conjunction,} $\Cat$, is defined as the smallest set satisfying the following:
\begin{enumerate}
\item
	Every primitive category belongs to $\Cat$.
\item
	If $\conjcat \in \Conj$ and $A \in \Cat$,
	then $(\conjcat \BS A) \in \Cat$
	and $(A \SL \conjcat) \in \Cat$.
\end{enumerate}
\noindent
{\em Categorial propositions} are expressions of the form
$B(v)$, where $v \in \Sigma^+$ and $B \in \Cat \cup \Conj$.
A conjunctive categorial grammar is a logical theory deriving categorial
propositions.
It includes an arbitrary finite set of axioms
of the form $A(a)$, with $A \in \Cat$ and $a \in \Sigma$,
and the following inference rules.
$$
\infer{(p_1 \wedge \dots \wedge p_k)(v)}{p_1(v) & \dots & p_k(v)}
$$ $$
\infer{A(uv)}{\conjcat(u) & (\conjcat \BS A)(v)}
\qquad
\infer{A(vu)}{(A \SL \conjcat)(v) & \conjcat(u)}
$$

The string $w$ belongs to the language generated by this grammar if and only if
the proposition $s(w)$ is derivable.

\begin{example}\label{conjCG_example}
The conjunctive categorial grammar
with the set of primitive categories $\PC=\{s, x, y, p, q, r \}$,
$s$ as the target category, 
and with the following set of axioms,
describes the language $\{ b a^n c a^n c a^n \mid n \geqslant 1 \}$,
the same as in Example~\ref{conjunctive_example}.
\begin{align*}
& r(a), \qquad (r \SL r)(a), \\
& p(c), \qquad (p \SL q)(a), \qquad (p \BS q)(a),\\
& (p \BS (x \SL r))(c), \qquad ((r \BS y) \SL p)(c),\\
& (s \SL (x \wedge y))(b).
\end{align*}
\begin{figure*}
\small
\begin{equation*}
\infer{s(bacaca)}{
	(s \SL (x \wedge y))(b)\hspace*{-2.5cm}
	&
	\infer{(x \wedge y)(acaca)}{
		\infer{x(acaca)}{
			\infer{(x \SL r)(acac)}{
				\infer{p(aca)}{
					(p \SL q)(a)
					&
					\infer{q(ca)}{
						p(c)
						&
						(p \BS q)(a)
					}
				}
				&
				\hspace*{-.8cm}
				(p \BS (x \SL r))(c)
			}
			&
			r(a)
		}
		&
		\infer{y(acaca)}{
			r(a)
			&
			\infer{(r \BS y)(caca)}{
				((r \BS y) \SL p)(c)
				&
				\infer{p(aca)}{
					(p \SL q)(a)
					&
					\infer{q(ca)}{
						p(c)
						&
						(p \BS q)(a)
					}
				}
			}
		}
	}
}
\end{equation*}
\caption{}\label{Fig:conjunctive_categorial_derivation_example}
\end{figure*}%
The category $p$ is defined in the same way as in Example~\ref{BCG_example}.
Then, using further categories without conjunction,
the propositions $x(a^n c a^n c a^i)$
and $y(a^i c a^n c a^n)$,
for all $n \geqslant 0$, $i \geqslant 1$,
are derived as in an ordinary categorial grammar.
The only strings that satisfy both conditions, $x$ and $y$,
are those of the form $a^n c a^n c a^n$,
and these are therefore all strings in the category $s$,
derived as follows.
\begin{equation*}
	\infer{s(b a^n c a^n c a^n)}{(s \SL (x \land y))(b) &
		\infer{(x \land y)(a^n c a^n c a^n)}{x(a^n c a^n c a^n) & y(a^n c a^n c a^n)}
	}
\end{equation*}
A complete derivation of the proposition $s(bacaca)$
is presented in Figure~\ref{Fig:conjunctive_categorial_derivation_example}.
\end{example}

The calculus used in the conjunctive categorial grammar formalism
enjoys the following \emph{inverted subformula property} (ISF):
if a category of the form $(\conjcat \BS A)$ or $(A \SL \conjcat)$
appears somewhere in the derivation,
then it is a subexpression of some category used in an axiom.
(The notion of subexpression on categories is defined in a standard way:
each conjunct (in particular, primitive category) is a subexpression of itself, and subexpressions
of $(\conjcat \BS A)$ include $\conjcat \BS A$, $\conjcat$, and all subexpressions of $A$;
symmetrically for $(A \SL \conjcat)$. To prove the ISF, we trace the rightmost branch of the derivation
upwards; finally we reach an axiom that includes the goal category as a subexpression.)

Another useful property is the fact that the rule for $\wedge$ is invertible:
if $(p_1 \wedge \ldots \wedge p_k)(v)$ is derivable,
then so are $p_1(v)$, \dots, $p_k(v)$.
Indeed, the only way to derive
$(p_1 \wedge \ldots \wedge p_k)(v)$
is by applying this rule.

The calculus used in conjunctive categorial grammars
also enjoys the following \emph{cut elimination} property.

\begin{lemma}\label{Lm:conjCG_cutelim}
Let $A(u)$ (for some $A \in \Cat$ and $u \in \Sigma^+$) be derivable in the given conjunctive
categorial grammar. Consider a new conjunctive categorial grammar over an extended alphabet
$\Sigma \cup \{ b \}$, where $b \notin \Sigma$.
The new grammar has all the same axioms as the original grammar,
and an additional axiom $A(b)$.
Then, if the new grammar derives $B(v_1 b v_2)$,
for some $B\in \Cat$
and arbitrary, possibly empty, strings $v_1$, $v_2$ over $\Sigma$,
then $B(v_1 u v_2)$ is derivable in the original grammar.
\end{lemma}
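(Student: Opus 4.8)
The plan is to prove the lemma by a substitution argument on derivations, in the spirit of proving admissibility of cut: given a derivation of $B(v_1 b v_2)$ in the new grammar, one transforms it into a derivation of $B(v_1 u v_2)$ in the original grammar by replacing each use of the extra axiom $A(b)$ with the assumed derivation of $A(u)$, and simultaneously rewriting the fresh symbol $b$ into the string $u$ everywhere in the derivation.

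To make this precise, for a string $w \in (\Sigma \cup \{b\})^*$ let $\widehat{w} \in \Sigma^*$ be the result of replacing every occurrence of $b$ in $w$ by $u$. Since $b \notin \Sigma$ and $u \in \Sigma^+$, this operation maps $(\Sigma \cup \{b\})^+$ into $\Sigma^+$ and satisfies $\widehat{w_1 w_2} = \widehat{w_1}\,\widehat{w_2}$ and $\widehat{\epsilon} = \epsilon$. I would then prove, by induction on the derivation in the new grammar, the following stronger statement: for every categorial proposition $C(w)$ with $C \in \Cat \cup \Conj$ that is derivable in the new grammar, the proposition $C(\widehat{w})$ is derivable in the original grammar. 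Instantiating this with $C = B$ and $w = v_1 b v_2$, for which $\widehat{w} = v_1 u v_2$, yields the lemma.

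For the base case, an axiom $C(a)$ with $a \in \Sigma$ satisfies $\widehat{a} = a$ and is already an axiom of the original grammar; the only remaining axiom of the new grammar, $A(b)$, has $\widehat{b} = u$, and $A(u)$ is derivable in the original grammar by hypothesis, so one simply inserts that derivation at this leaf. For the inductive step, each of the three inference rules survives the rewriting: in the $\wedge$-rule, premises $p_1(v), \dots, p_k(v)$ and conclusion $(p_1 \wedge \dots \wedge p_k)(v)$ all carry the same string, so the induction hypothesis gives $p_i(\widehat{v})$ and the rule reassembles $(p_1 \wedge \dots \wedge p_k)(\widehat{v})$; in the two application rules the conclusion's string is the concatenation of the two premises' strings, and since the rewriting distributes over concatenation the very same rule instance applies to the rewritten premises. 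Crucially, the rewriting touches only the strings occurring in propositions and never the categories, and the inference rules are the same in both grammars, so validity of the derivation is preserved throughout.

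I do not expect any genuine obstacle here; the only points needing a little care are choosing the right generalisation for the induction (quantifying over all propositions and rewriting \emph{all} occurrences of $b$, not merely the target proposition and a single occurrence) and checking, in the base case, that the grafted fragment really is a derivation in the original grammar delivering exactly the proposition $A(\widehat{b}) = A(u)$ demanded at that leaf. As a sanity check one may observe that in any derivation of $B(v_1 b v_2)$ the symbol $b$ in fact occurs in at most one string per proposition — going from the root towards the leaves, the $\wedge$-rule copies a string into separate subderivations and the application rules only split strings, so the number of occurrences of $b$ never increases — which matches the ``cut-elimination'' intuition of replacing one occurrence, though this observation is not needed for the argument above.
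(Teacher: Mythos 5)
Your proposal is correct and is essentially the paper's own argument: the paper likewise substitutes $u$ for all occurrences of $b$ throughout the given derivation, noting that the inference rules and the old axioms (which do not mention $b$) remain valid, and that the extra axiom $A(b)$ becomes $A(u)$, which is derivable by hypothesis. You merely make the induction on the derivation explicit, which is a fine presentational choice but not a different route.
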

\begin{proof}
Consider the derivation of $B(v_1 b v_2)$ in the extended grammar and substitute $u$ for
all occurrences of $b$. Applications of inference rules remain valid; the same for axioms
of the old grammar (they don't include $b$). The new axiom $A(b)$ becomes $A(u)$, which
is derivable in the old grammar by assumption.
\end{proof}

\section{Equivalence of Conjunctive Grammars and Conjunctive Categorial Grammars}\label{section_equivalence}

The main result of this paper
is an extension of Theorem~\ref{Th:BarHillel}
for grammars with conjunction,
stated as follows.

\begin{theorem}\label{conjunctive_iff_conjunctive_categorial_theorem}
A language without the empty string is generated by a conjunctive grammar if and only if it is generated by
a conjunctive categorial grammar.
\end{theorem}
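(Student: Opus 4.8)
The plan is to prove the two directions separately, following the classical Bar-Hillel--Gaifman--Shamir strategy but adapting it to the conjunctive setting.

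\medskip

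\textbf{From conjunctive categorial grammars to conjunctive grammars.} This direction should be the easier one. Given a conjunctive categorial grammar, I would build a conjunctive grammar whose nonterminals are the categories (both from $\Cat$ and from $\Conj$) that occur as subexpressions of categories appearing in the axioms; by the inverted subformula property (ISF), these are the only categories relevant to any derivation, so this set is finite. For each axiom $A(a)$ I add a rule $A \to a$. For each primitive-category combination I translate the $\wedge$-rule directly into a conjunctive rule $(p_1 \wedge \dots \wedge p_k) \to p_1 \C \dots \C p_k$. For the application rules I add, for every pair of relevant categories, rules $A \to \conjcat \, (\conjcat \BS A)$ and $A \to (A \SL \conjcat) \, \conjcat$. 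A routine induction on derivation length shows that $B(v)$ is derivable in the categorial grammar iff $B(v)$ is derivable in the constructed conjunctive grammar, for every relevant $B$; in particular the target categories agree. One must also note that this conjunctive grammar has no $\varepsilon$-rules, which is fine since we only care about languages without $\varepsilon$.

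\medskip

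\textbf{From conjunctive grammars to conjunctive categorial grammars.} This is the harder direction and, as the introduction warns, it cannot mimic the classical proof, which relies on the Greibach normal form. Instead I would invoke the normal form of Okhotin and Reitwie{\ss}ner: every conjunctive grammar without $\varepsilon$-rules is equivalent to one whose rules are of the form $A \to B_1 C_1 \C \dots \C B_m C_m$ (each conjunct a product of two nonterminals) or $A \to a$. Given such a grammar, I would encode each nonterminal $A$ by a suitable primitive category (or a small bundle of primitive categories, one per ``role'' the nonterminal can play), and simulate a binary branching $A \to BC$ by assigning to the relevant symbols categories built from $\BS$ and $\SL$ so that combining a $B$-labelled string on the left with a $C$-labelled string on the right yields an $A$-labelled string --- exactly as in the Bar-Hillel construction, but now I must thread a conjunction through: the rule $A \to B_1C_1 \C \dots \C B_mC_m$ is handled by first deriving, for each conjunct, a primitive category $a_i$ witnessing $B_iC_i(v)$, then taking $(a_1 \wedge \dots \wedge a_m)$ and reducing it to $A$. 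Axioms $A \to a$ become axiomatic propositions assigning to $a$ the categories needed to start such reductions. The cut-elimination property (Lemma~\ref{Lm:conjCG_cutelim}) and the invertibility of the $\wedge$-rule are the natural tools for the soundness/completeness induction, letting one splice sub-derivations together and peel conjunctions apart.

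\medskip

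\textbf{Main obstacle.} The delicate point is the ``one primitive category can serve only one purpose'' rigidity of basic categories with conjunction: unlike in the Lambek calculus, here $\BS$ and $\SL$ may only be applied with a conjunct on the ``argument'' side and the denominators must be primitive, so the usual trick of nesting slashes to encode a binary rule is tightly constrained. I expect to need several primitive categories per nonterminal, indexed by which side of a concatenation the string sits on and by which pending rule it is feeding, together with a careful bookkeeping of ``glue'' categories (as in the axioms of Example~\ref{conjCG_example}, where categories like $(p \BS (x \SL r))$ act as connectors). Getting this indexing scheme right, and proving that no spurious derivations arise --- i.e.\ that the ISF plus the structure of the encoding forces every derivation of a target proposition to correspond to a genuine conjunctive-grammar derivation --- is where the real work lies.
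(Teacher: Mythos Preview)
Your ``if'' direction (conjunctive categorial $\Rightarrow$ conjunctive) is essentially the paper's construction and is correct.

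The ``only if'' direction has a genuine gap, and it is precisely the one you flag as the main obstacle without resolving. You propose to use a binary normal form with rules $A \to B_1 C_1 \C \dots \C B_m C_m$ and $A \to a$. But with such rules there is no terminal symbol inside a conjunct to which you can attach the ``glue'' category. In a conjunctive categorial grammar every axiom lives on a single letter, and every complex category in $\Cat$ has a conjunct as denominator and a primitive numerator; to pass from $p_{B_i}(u_i)$ and $p_{C_i}(w_i)$ to some $q_i(u_i w_i)$ you would need a category of the form $p_{B_i} \BS q_i$ or $q_i \SL p_{C_i}$ sitting on a letter, and the rule $B_i C_i$ supplies no such letter. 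Your suggestion of introducing extra primitive categories ``indexed by which side of a concatenation the string sits on and by which pending rule it is feeding'' amounts to pushing the context down to the leaves, which is exactly what the (unknown) Greibach normal form would do; you have not shown how to carry this out, and there is no evident way to do it finitely.

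The paper avoids this by using a different normal form of Okhotin and Reitwie{\ss}ner, the \emph{odd normal form}, whose non-unit rules are $A \to B_1 a_1 C_1 \C \dots \C B_k a_k C_k$ with a \emph{terminal symbol $a_i$ in the middle of each conjunct}. That terminal is the anchor: after a further massaging into a two-layer form (nonterminals $A$ with rules $A \to \widetilde{X}_1 \C \dots \C \widetilde{X}_k$, and nonterminals $\widetilde{X}$ with rules $\widetilde{X} \to BaC$, $\widetilde{Y}\to aA$, $\widetilde{Z}\to a$), the axiom attached to the middle letter $a$ is
\[
\Big( \big( (p_{\widetilde{Y}_1} \wedge \dots \wedge p_{\widetilde{Y}_k}) \BS p_{\widetilde{X}} \big) \SL (p_{\widetilde{Z}_1} \wedge \dots \wedge p_{\widetilde{Z}_m}) \Big)(a),
\]
which simultaneously consumes the left part (satisfying $B$'s conjuncts), the right part (satisfying $C$'s conjuncts), and produces $p_{\widetilde{X}}$. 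The paper also preprocesses with single-symbol quotients to arrange the outer layer $\widetilde{S}\to a_i S_i$. With these anchors in place, the ISF and invertibility of the $\wedge$-rule give exactly the soundness/completeness induction you sketched. In short: your overall plan is right, but the specific normal form you chose lacks the terminal anchors that make the encoding go through; switch to the odd normal form and the obstacle disappears.
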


The proof uses the following two known properties of conjunctive grammars.
The first result is their closure under quotient with a single symbol.

\begin{oldlemma}[{Okhotin and Reitwie{\ss}ner~\cite[Thm.~2]{OkhotinReitwiessner}}]\label{Lm:leftfactor}
If $L$ is a language over $\Sigma$ described by a conjunctive grammar,
and $a \in \Sigma$ is any symbol,
then there exists a conjunctive grammar
that describes the language
$a^{-1} L = \{w \mid aw \in L\}$. Moreover, if $\varepsilon \in a^{-1} L$, then there also exists a conjunctive
grammar for the language $a^{-1} L - \{ \varepsilon \}$.
\end{oldlemma}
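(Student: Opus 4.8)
The plan is to build a \emph{left-quotient} (Brzozowski-style derivative) grammar. Starting from a conjunctive grammar $G = (\Sigma, N, S, R)$ with $L_G(A) = \{ w \mid A(w) \text{ is derivable}\}$ and $L = L_G(S)$, I would introduce, for every nonterminal $A \in N$, a fresh nonterminal $A_a$ whose intended meaning is $L_{G'}(A_a) = a^{-1}L_G(A) = \{ w \mid A(aw) \text{ is derivable}\}$, and take $S_a$ as the start symbol. The new grammar $G'$ keeps all of $G$'s nonterminals and rules (so that the original languages $L_G(X)$ remain available for the ``tail'' parts of right-hand sides) and adds new rules for the nonterminals $A_a$. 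As a preliminary step I would compute the set of \emph{nullable} symbols, i.e. those $X \in \Sigma \cup N$ with $\varepsilon \in L_G(X)$, which is routinely decidable for conjunctive grammars by a least-fixed-point marking procedure.

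The rule construction rests on two algebraic identities. First, left quotient commutes with both the union over rules and the conjunction inside a rule:
\[
a^{-1}L_G(A) \;=\; \bigcup_{A \to \beta_1 \C \cdots \C \beta_k} \; \bigcap_{j=1}^{k} a^{-1}L_G(\beta_j).
\]
Second, for a single conjunct $\beta_j = X_1 \cdots X_m$ the quotient splits according to where the leading $a$ is consumed, provided every preceding symbol is nullable:
\[
a^{-1}\bigl(L_G(X_1)\cdots L_G(X_m)\bigr) \;=\; \bigcup_{i} \bigl(a^{-1}L_G(X_i)\bigr)\,L_G(X_{i+1})\cdots L_G(X_m),
\]
the union ranging over all $i$ with $X_1,\dots,X_{i-1}$ nullable. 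Here $a^{-1}L_G(X_i)$ equals $L_{G'}(A'_a)$ if $X_i = A'$ is a nonterminal, equals $\{\varepsilon\}$ if $X_i$ is the terminal $a$, and equals $\emptyset$ if $X_i$ is a terminal other than $a$. Thus each $a^{-1}L_G(\beta_j)$ is a finite union of terms $\gamma_{j,i}$, where $\gamma_{j,i}$ is the tail $X_{i+1}\cdots X_m$ prefixed by $A'_a$ (nonterminal case) or just the tail (terminal-$a$ case), with terms from a terminal $\neq a$ discarded. Distributing the intersection over these unions yields, for each original rule of $A$ and each choice of one surviving split point per conjunct, one new rule $A_a \to \gamma_{1,i_1} \C \cdots \C \gamma_{k,i_k}$; tuples in which any conjunct has no surviving split point contribute nothing. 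This is finitely many rules. A conjunct whose tail is empty (the leading $a$ consumed at the very end of $\beta_j$) becomes an $\varepsilon$-conjunct, which the general definition of conjunctive grammars permits.

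Correctness of $L_{G'}(A_a) = a^{-1}L_G(A)$ I would prove by two inductions: soundness by induction on $G'$-derivations, and completeness by induction on a $G$-derivation of $A(aw)$, reading off from it the split points realized in each conjunct. The \textbf{main obstacle} is the careful bookkeeping of the empty string: getting the nullability side conditions exactly right in the split identity, and checking that distributing the intersection over the split-point unions neither adds nor loses strings precisely in the $\varepsilon$-boundary cases. Finally, for the ``moreover'' clause, note that $\varepsilon \in a^{-1}L$ means $a \in L$; the grammar $G'$ just built may then have $\varepsilon \in L_{G'}(S_a)$, and applying the standard $\varepsilon$-elimination for conjunctive grammars (already invoked in the text to remove $\varepsilon$-rules) produces a conjunctive grammar for $a^{-1}L - \{\varepsilon\}$, completing the proof.
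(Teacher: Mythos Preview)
The paper does not prove this lemma at all: it is stated with a citation to Okhotin and Reitwie{\ss}ner~\cite{OkhotinReitwiessner} as a known external result, and the paper simply invokes it as a black box in the proof of Theorem~\ref{conjunctive_iff_conjunctive_categorial_theorem}. So there is no ``paper's own proof'' to compare against.

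That said, your proposal is a correct proof. The Brzozowski-derivative construction you describe---introducing nonterminals $A_a$ with intended meaning $a^{-1}L_G(A)$, keeping the old rules intact so the old nonterminals still define $L_G(A)$, and generating the rules for $A_a$ by (i) splitting each conjunct $\beta_j = X_1\cdots X_m$ at every position $i$ with $X_1,\dots,X_{i-1}$ nullable, and then (ii) distributing the conjunction over the finitely many resulting choices---is exactly the argument used in the cited paper. The two identities you rely on, $a^{-1}\bigl(\bigcup_r \bigcap_j L_j^{(r)}\bigr) = \bigcup_r \bigcap_j a^{-1}L_j^{(r)}$ and the Brzozowski split of $a^{-1}(L_1\cdots L_m)$, are both sound, and the distributivity of intersection over finite union that produces the individual conjunctive rules is plain set theory. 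The two-way induction you sketch (on $G'$-derivations for soundness, on $G$-derivations of $A(aw)$ for completeness) goes through; since the old nonterminals' rules are unchanged and do not mention the new $A_a$, the least-fixed-point semantics decouples cleanly. The ``moreover'' clause via standard $\varepsilon$-elimination for conjunctive grammars is also the right move.
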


The other result is a normal form theorem.
A conjunctive grammar $G$ with the initial symbol $S$
is in the \emph{odd normal form},
if all its rules are of the following form.
\begin{align*}
	A &\to a
		&& (a \in \Sigma)
		\\
	A &\to B_1 a_1 C_1 \C \ldots \C B_k a_k C_k
		&& (B_i,C_i \in N, \: a_i \in \Sigma)
		\\
	S &\to a A
		&& (a \in \Sigma, \: A \in N)
\end{align*}
Rules of the latter kind
are allowed only if $S$ is never referenced in any rules.

\begin{oldtheorem}[Okhotin and Reitwie{\ss}ner~\cite{OkhotinReitwiessner}]\label{odd_normal_form_theorem}
Every language without the empty string described by a conjunctive grammar
can be described
by a conjunctive grammar in the odd normal form.
\end{oldtheorem}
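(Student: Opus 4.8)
The plan is to refine a coarser, well‑known normal form for conjunctive grammars into the odd normal form, handling by hand the strings that the rule shapes $A\to a$ and $A\to B\,a\,C$ cannot produce on their own. The starting observation is that, for a grammar using only these two shapes and no $\varepsilon$-rules, an induction on derivations shows that every nonterminal derives only strings of odd length (in fact only strings of length $1$ or $\geqslant 3$); the rules $S\to aA$ are present precisely to escape this restriction, since prepending one symbol turns an odd length into an even one. Accordingly, I would first write $L$ as $\bigl(L\cap\Sigma(\Sigma\Sigma)^{*}\bigr)\cup\bigcup_{a\in\Sigma}a\cdot a^{-1}\!\bigl(L\cap(\Sigma\Sigma)^{+}\bigr)$, noting that the odd-length part $L\cap\Sigma(\Sigma\Sigma)^{*}$, and each quotient $a^{-1}\!\bigl(L\cap(\Sigma\Sigma)^{+}\bigr)$ --- which again consists of odd-length strings --- is conjunctive, by closure of conjunctive languages under intersection with regular sets and under left quotient by a symbol (Lemma~\ref{Lm:leftfactor}). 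It then remains to prove the following core statement: every conjunctive language $M\subseteq\Sigma^{+}$ all of whose strings have odd length is generated by a conjunctive grammar whose nonterminals all derive only odd-length strings and whose rules all have the form $A\to a$ or $A\to B_{1}a_{1}C_{1}\C\ldots\C B_{k}a_{k}C_{k}$. Granting it, the grammar for $L$ is reassembled by letting $S$ copy the rules of the start symbol of the grammar for $L\cap\Sigma(\Sigma\Sigma)^{*}$, adding a rule $S\to aA_{a}$ to the start symbol $A_{a}$ of the grammar for $a^{-1}\!\bigl(L\cap(\Sigma\Sigma)^{+}\bigr)$, and adding $S\to a$ for each one-symbol string of $L$; since the copied and adjoined grammars contribute only fresh, odd-length-only nonterminals, $S$ is never referenced and all rule shapes are respected.

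For the core statement I would start from Okhotin's binary normal form~\cite{Conjunctive} --- rules $A\to B_{1}C_{1}\C\ldots\C B_{k}C_{k}$ with $B_{i},C_{i}\in N$, or $A\to a$ --- and refine each nonterminal by the length class of the strings it derives: length exactly $1$, odd length $\geqslant 3$, or even length, by a standard product construction with the regular languages of those classes. In each surviving rule of a nonterminal of odd-length class, every binary conjunct $B_{i}C_{i}$ then has exactly one even-length factor. Alongside this I would close the nonterminal set under one-symbol quotients, adjoining, for each refined nonterminal $Y$ and each symbol $a$, nonterminals for $a^{-1}L(Y)$, for $L(Y)a^{-1}$ (using closure under reversal), and for their combination, each with rules obtained by the standard quotient construction (the only new form introduced being unit conjuncts, removed below). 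The final step is pivot extraction: a binary conjunct $P\,Q$ --- exactly one of $P,Q$ being of even-length class --- is replaced, ranging over all symbols $x$, by $P\,x\,Q'$ when $Q$ is the even one and $x$ is its first symbol (so $Q'$ is the one-symbol left quotient of $Q$), or by $P'\,x\,Q$ when $P$ is the even one and $x$ is its last symbol (so $P'$ is the one-symbol right quotient of $P$), duplicating each rule once per choice of pivot symbol for its conjuncts. After eliminating the unit and $\varepsilon$-rules thereby produced and discarding the now-unreachable even-length nonterminals, the grammar has the required shape, and correctness follows by induction on derivations.

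The step I expect to be the real obstacle is showing that this process does not run away: naively, quotienting $Q$ could spawn a quotient of a quotient of a quotient, which would amount to building a deterministic automaton for $M$ and is impossible when $M$ is not regular. The resolution is a bookkeeping argument: a one-symbol left quotient of a binary-normal-form grammar needs only one extra ``quotiented'' copy of each nonterminal, and in pivot extraction the extra quotient is always taken on the side opposite to any quotient already present (a left-quotient nonterminal is only ever further quotiented on the right, and conversely), so the family of auxiliary nonterminals never exceeds the single-symbol left, right, and left-and-right quotients of the refined nonterminals --- a finite set, closed under all the rewritings. This must be kept consistent with the parity bookkeeping: each one-symbol quotient flips the length-parity class, so one checks that quotients are only ever taken of even-length factors (so that one lands back on odd-length languages), that atomic nonterminals quotient to the empty language and may be deleted, and that any rule whose conjuncts cannot all simultaneously have odd length is discarded before extraction. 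Once these invariants are in place, the remaining verifications and the reassembly around the start symbol are routine; the complete argument appears in Okhotin and Reitwie{\ss}ner~\cite{OkhotinReitwiessner}.
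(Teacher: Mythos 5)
The paper itself does not prove this statement: it is imported as a known result of Okhotin and Reitwie{\ss}ner~\cite{OkhotinReitwiessner} (hence its statement as an attributed theorem), and it is used as a black box, alongside the quotient closure of Lemma~\ref{Lm:leftfactor}, in the proof of Theorem~\ref{conjunctive_iff_conjunctive_categorial_theorem}. So there is no in-paper proof to compare yours against, and your concluding appeal to the same reference for ``the complete argument'' is exactly what the paper does. Taken as a reconstruction of the external proof, your outline is reasonable and goes in the right direction: split $L$ by length parity and first symbol, refine a binary-normal-form grammar by length class so that every conjunct $BC$ of an odd-class nonterminal has exactly one even-class factor, and extract a pivot symbol from that factor via one-symbol quotients, with the opposite-side invariant keeping the set of auxiliary nonterminals within the left, right, and two-sided one-symbol quotients of the refined nonterminals.

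Two of your supporting claims are, however, stated imprecisely, and they are precisely the load-bearing ones. First, ``quotients are only ever taken of even-length factors'' is not literally maintained: constructing the rules of a left-quotient copy $C_x$ of an even-class nonterminal $C$ forces you to quotient the leftmost factor of every conjunct of $C$, and those factors may be odd-class (yielding even-class quotient copies, which must then be handled by a further right-side extraction); it is the left/right opposite-side invariant, not the parity claim, that prevents the process from running away, and that invariant is asserted rather than verified. Second, a length-$1$ nonterminal quotients to $\{\varepsilon\}$ or $\emptyset$, not to the empty language; the $\{\varepsilon\}$ case has to be absorbed by the degenerate (factor-dropping) branch of the quotient construction, which is also where the unit conjuncts come from, and their elimination needs its own argument. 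None of this is a wrong turn, but as written your text is a proof plan whose decisive verifications are deferred to the very paper whose theorem is being proved --- acceptable here only because the paper under review likewise treats the result purely by citation.
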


\begin{proof}[Proof of Theorem~\ref{conjunctive_iff_conjunctive_categorial_theorem}]
{\bf The ``if'' part} is easier.
For a given conjunctive categorial grammar $G$,
the equivalent conjunctive grammar $G'$
has the set $N$ comprised of all categories
used in the axioms of $G$,
and of all their subexpressions (categories and conjuncts).
The conjunctive rules are now as follows.
\begin{align*}
	(p_1 \wedge \ldots \wedge p_k) &\to p_1 \C \ldots \C p_k
		&& ((p_1 \wedge \ldots \wedge p_k) \in N)
	\\
	A &\to \conjcat \, (\conjcat \BS A)
		&& ((\conjcat \BS A) \in N)
		\\
	A &\to (A \SL \conjcat)\, \conjcat
		&& ((A \SL \conjcat) \in N)
			\\
	A &\to a
		&& (A(a) \text{ is an axiom in } G)
\end{align*}
Then, $G$ and $G'$ have the same set of derivations,
and accordingly, a proposition $A(w)$ is derivable in $G'$
if and only if $A(w)$ is derivable in $G$.

For {\bf the ``only if'' part} of the proof,
the first step is to transform a given conjunctive grammar.
Let $\Sigma = \{ a_1, \dots, a_n \}$.
For each symbol $a_i$, consider the quotient $a_i^{-1} L$. Unlike $L$ itself, $a_i^{-1} L$ could
potentially include the empty string. This happens if the one-letter string $a_i$ belongs to $L$.
Let us temporarily remove $\varepsilon$ from $a_i^{-1} L$, and construct a conjunctive grammar
$G_i$ for the language $a_i^{-1} L - \{ \varepsilon \}$. Such a grammar exists by Lemma~\ref{Lm:leftfactor}.

By Theorem~\ref{odd_normal_form_theorem},
the grammar $G_i$ can be assumed to be in the odd normal form.
(Here we use the fact that we have removed $\varepsilon$.)
It can also be assumed that,
for $i\ne j$, the grammars $G_i$ and $G_j$
have disjoint sets of non-terminal symbols.
Let $S_i$ be the initial symbol of $G_i$.
Then this grammar is further modified as follows.
Every rule $$A \to B_1 a_1 C_1 \C
\ldots \C B_k a_k C_k$$
is replaced with $k+1$ new rules:
\begin{equation*}
	A \to \widetilde{X}_1 \C \ldots \C \widetilde{X}_k
		\qquad\mbox{ and }\qquad
	\widetilde{X}_j \to B_j a_j C_j,
\end{equation*}
where $\widetilde{X}_j$ are fresh non-terminals. (For each rule, the $\widetilde{X}_j$'s are different.)
For the sake of uniformity, rules of the form
\begin{equation*}
	S_i \to a A
\end{equation*}
are replaced with
\begin{equation*}
	S_i \to \widetilde{Y}
		\qquad\mbox{ and }\qquad
	\widetilde{Y} \to aA,
\end{equation*}
and rules of the form
\begin{equation*}
	A \to a
\end{equation*}
are replaced with
\begin{equation*}
	A \to \widetilde{Z}
		\mbox{ and }
	\widetilde{Z} \to a.
\end{equation*}

Finally, a new conjunctive grammar for $L$
is obtained by joining these grammars together, for all $i$,
adding the following extra rules
for the new initial symbol $\widetilde{S}$.
\begin{equation*}
	\widetilde{S} \to a_1 S_1,
		\qquad\ldots,\qquad
	\widetilde{S} \to a_n S_n
\end{equation*}
and
\begin{equation*}
    \widetilde{S} \to a_i,\text{ for each $i$ such that $\varepsilon \in a_i^{-1} L$}
\end{equation*}
(The latter set of rules ``restores'' the empty strings removed from the quotients.)

In the resulting grammar, which describes the same language $L$,
all non-terminals are of two types
(with and without a tilde),
and the rules have the following form.
\begin{align*}
	A \to \widetilde{X}_1 \C \ldots \C \widetilde{X}_k
		\qquad
		\mbox{(here $k$ may be 1)}
	\\
	\widetilde{X} \to B a C,
		\qquad
	\widetilde{Y} \to a A,
		\qquad\mbox{ and }\qquad
	\widetilde{Z} \to a
\end{align*}

It is then transformed
to a conjunctive categorial grammar,
with the set of primitive categories
$\PC = \{ p_{\widetilde{X}} \mid \mbox{$\widetilde{X}$ is a non-terminal
decorated with a tilde} \}$,
and with the following axioms.
\begin{enumerate}
\item
	For each rule $\widetilde{Z} \to a$,
	there is an axiom $p_{\widetilde{Z}}(a)$.
\item
	For each pair of rules $\widetilde{Y} \to a A$
	and $A \to \widetilde{X}_1 \C \ldots \C \widetilde{X}_k$,
	the axiom is
	$\Big(p_{\widetilde{Y}} \SL (p_{\widetilde{X}_1} \wedge \ldots \wedge
	p_{\widetilde{X}_k})\Big) (a)$.
\item
	For each triple of rules
	$\widetilde{X} \to B a C$,
	$B \to \widetilde{Y}_1 \C \ldots \C \widetilde{Y}_k$, and
	$C \to \widetilde{Z}_1 \C \ldots \C \widetilde{Z}_m$,
	the axiom is
	\begin{equation*}
		\hspace*{-2em}
		\Big( \big( (p_{\widetilde{Y}_1} \wedge \ldots \wedge
		p_{\widetilde{Y}_k}) \BS p_{\widetilde{X}} \big)  \SL
		(p_{\widetilde{Z}_1} \wedge \ldots \wedge
		p_{\widetilde{Z}_m}) \Big) (a).
	\end{equation*}
\end{enumerate}
The target category is $p_{\widetilde{S}}$.

\begin{claim}
For every non-terminal $\widetilde{X}$ decorated with a tilde,
the proposition $p_{\widetilde{X}}(v)$
is derivable in the newly constructed conjunctive categorial grammar
if and only if
$\widetilde{X}(v)$ is derivable in the original conjunctive grammar.
\end{claim}

\begin{proof}
{\bf The ``if'' part.}
The proof proceeds by induction on the derivation size.
There are three possible cases.

{\it Case 1.} The proposition $\widetilde{X}(v)$ is actually of the
form $\widetilde{Z}(a)$ and is derived from $a(a)$
using the rule $\widetilde{Z} \to a$.
Then $p_{\widetilde{Z}}(a)$ is an axiom in the conjunctive categorial grammar.

{\it Case 2.} The proposition $\widetilde{X}(v)$ is of the form
$\widetilde{Y}(av_1)$ and is derived from $a(a)$ and $A(v_1)$ using a rule
of the form $\widetilde{Y} \to aA$. Next, a rule of
the form $A \to \widetilde{X}_1 \C \ldots \C \widetilde{X}_k$ should be
applied for $A$. Therefore, the propositions $\widetilde{X}_i(v_1)$ are derivable (for
all $i$) in the original conjunctive grammar. Then, by induction hypothesis,
$p_{\widetilde{X}_i}(v_1)$ are derivable in the
conjunctive categorial grammar,
and there is a derivation for $p_{\widetilde{Y}}(av_1)$
shown in Figure~\ref{Fig:first}.

\begin{figure*}
$$
\infer{p_{\widetilde{Y}}(av_1)}
{\bigl(p_{\widetilde{Y}} \SL (p_{\widetilde{X}_1} \wedge \ldots \wedge
p_{\widetilde{X}_k})\bigr) (a) &
\infer{(p_{\widetilde{X}_1} \wedge \ldots \wedge
p_{\widetilde{X}_k})(v_1)}{p_{\widetilde{X}_1}(v_1) & \ldots & p_{\widetilde{X}_k}(v_1)}
}
$$
\caption{}\label{Fig:first}
\end{figure*}

{\it Case 3.}
The proposition $\widetilde{X}(v)$ is of the form $\widetilde{X}(v_1av_2)$ and is derived
from some propositions of the form $B(v_1)$, $a(a)$, and $C(v_2)$,
following the rule $\widetilde{X} \to BaC$.
Next, for $B$ and $C$,
some rules for the form $B \to \widetilde{Y}_1 \C \ldots \C \widetilde{Y}_k$
and $C \to \widetilde{Z}_1 \C \ldots \C \widetilde{Z}_m$
should be applied. Therefore,
the propositions $\widetilde{Y}_i(v_1)$  and $\widetilde{Z}_j(v_2)$ are derivable
(for all $i$, $j$) in the original conjunctive grammar. Then, by induction hypothesis,
$p_{\widetilde{Y}_i}(v_1)$ and $p_{\widetilde{Z}_j}(v_2)$ are derivable in the
conjunctive categorial grammar,
and there is a derivation for $p_{\widetilde{X}}(v_1 a v_2)$,
as shown in Figure~\ref{Fig:second}.

\begin{figure*}
$$
\infer{p_{\widetilde{X}}(v_1 a v_2)}
{(p_{\widetilde{Y}_1} \wedge \ldots \wedge
p_{\widetilde{Y}_k})(v_1)  \hspace*{-.8em} &
\infer{((p_{\widetilde{Y}_1} \wedge \ldots \wedge
p_{\widetilde{Y}_k}) \BS p_{\widetilde{X}})(av_2)}
{\bigl(((p_{\widetilde{Y}_1} \wedge \ldots \wedge
p_{\widetilde{Y}_k}) \BS p_{\widetilde{X}}) \SL (p_{\widetilde{Z}_1} \wedge
\ldots \wedge p_{\widetilde{Z}_m})\bigr)(a) &
(p_{\widetilde{Z}_1} \wedge
\ldots \wedge p_{\widetilde{Z}_m})(v_2)}}
$$
\caption{}\label{Fig:second}
\end{figure*}

{\bf The ``only if'' part.}
This time, it is assumed that $p_{\widetilde{X}}(v)$
is derivable in the newly constructed conjunctive categorial grammar.
The proof is by induction on its derivation.

The {\em axiom case} is trivial: any axiom of the form $p_{\widetilde{Z}}(a)$
is associated with a rule $\widetilde{Z} \to a$ in the original
conjunctive grammar, and then $\widetilde{Z}(a)$ is derivable from $a(a)$.

In the {\em left division} case, $v = v_1 w$,
and the last step of the derivation is as follows.
\begin{equation*}
\infer{p_{\widetilde{X}}(v_1 w)}
{\conjcat(v_1) & (\conjcat \BS p_{\widetilde{X}})(w)}
\end{equation*}

By the ISF (see above), $(\conjcat \BS p_{\widetilde{X}})$
is a subexpression of the category in one of the axioms.
The only possibility is that $(\conjcat \BS p_{\widetilde{X}})$
is a subexpression
$((p_{\widetilde{Y}_1} \wedge \ldots \wedge p_{\widetilde{Y}_k}) \BS p_{\widetilde{X}})$
of an axiom
$((p_{\widetilde{Y}_1} \wedge \ldots \wedge
p_{\widetilde{Y}_k}) \BS p_{\widetilde{X}}) \SL (p_{\widetilde{Z}_1} \wedge
\ldots \wedge p_{\widetilde{Z}_m})$.
Moreover, again by the ISF,
the only way to derive $(\conjcat \BS p_{\widetilde{X}})(w)$
is to apply the right division rule
to the category used in the axiom.
This analysis shows that the derivation must end
in the way depicted in the earlier Figure~\ref{Fig:second},
where $w=a v_2$.

Since the rules for $\wedge$ are invertible (see above), the propositions
$p_{\widetilde{Y}_1}(v_1)$, \dots, $p_{\widetilde{Y}_k}(v_1)$,
$p_{\widetilde{Z}_1}(v_2)$, \dots, $p_{\widetilde{Z}_m}(v_2)$ are derivable.
By induction hypothesis, $\widetilde{Y}_i(v_1)$ and $\widetilde{Z}_j(v_2)$,
for all $i$, $j$,
are derivable in the original conjunctive grammar.
Then,
the derivation uses the rules
$\widetilde{X} \to BaC$,
$B \to \widetilde{Y}_1 \C \ldots \C \widetilde{Y}_k$,
and $C \to \widetilde{Z}_1 \C \ldots \C \widetilde{Z}_m$,
and is of the following form.
\begin{equation*}
	\infer{\widetilde{X}(v_1 a v_2)}
	{\infer{B(v_1)}{\widetilde{Y}_1(v_1) \; \ldots \; \widetilde{Y}_k(v_1)} & a(a) &
	\infer{C(v_2)}{\widetilde{Z}_1(v_2) \; \ldots \; \widetilde{Z}_m(v_2)}}
\end{equation*}

The {\em right division} case is even easier.
Here a derivation ends as follows.
\begin{equation*}
	\infer{p_{\widetilde{Y}}(wv_1)}{(p_{\widetilde{Y}} \SL \conjcat)(w) & \conjcat(v_1)}
\end{equation*}
By the ISF, the left premise could be nothing but an axiom of the form
$\bigr(p_{\widetilde{Y}} \SL (p_{\widetilde{X}_1} \wedge \ldots \wedge
p_{\widetilde{X}_k})\bigr)(a)$ (and $w = a$).
Then, $\conjcat(v_1)$ is
$(p_{\widetilde{X}_1} \wedge \ldots \wedge p_{\widetilde{X}_k})(v_1)$,
and by the invertibility of the $\wedge$ rule,
all $p_{\widetilde{X}_i}(v_1)$ are derivable.
By the induction hypothesis,
$\widetilde{X}_i(v_1)$, for all $i$,
are derivable in the original conjunctive grammar,
and there is the following derivation for $\widetilde{Y}(av_1)$,
using the rules $\widetilde{Y} \to aA$
and $A \to \widetilde{X}_1 \C \ldots \C \widetilde{X}_k$.
\begin{equation*}
	\infer{\widetilde{Y}(av_1)}{
		a(a)
		&
		\infer{A(v_1)}{
			\widetilde{X}_1(v_1)
			& \ldots &
			\widetilde{X}_k(v_k)
		}
	}
\end{equation*}
\end{proof}

This claim immediately yields the main result, since
$\widetilde{S}(w)$ is derivable in the original conjunctive grammar
if and only if
$p_{\widetilde{S}}(w)$ is derivable in the constructed conjunctive categorial grammar.
\end{proof}

\section{Conjunctive Categorial Grammars and Lambek Grammars with Additives}\label{section_lambek}

Lambek~\cite{Lambek1958} suggested a richer logic as a background for categorial grammars,
called {\em the Lambek calculus}. In the Lambek calculus, or $\LL$ for short, syntactic
categories are built from a set of $\PC = \{ p_1, p_2, p_3, \ldots \}$
of primitive categories using three binary operations:
product ($\cdot$), which means concatenation,
left division ($\BS$),
and right division ($\SL$).
The formal recursive definition is as follows.
\begin{enumerate}
\item
	Every primitive category is a category.
\item
	If $A$ and $B$ are categories,
	then $(A \cdot B)$, $(A \BS B)$, and $(B \SL A)$ are also categories.
\end{enumerate}
The set of all Lambek categories is denoted by $\Tp$.
As opposed to basic categories,
deep nesting of division operations is allowed here,
that is, denominators are allowed to be non-primitive.

A Lambek categorial grammar consists of a target category $s \in \Tp$ (usually $s$ is required
to be a primitive category) and a finite number of axiomatic propositions of the form $A(a)$, where
$A$ is a category and $a$ is a letter of the alphabet.

A string $w = a_1 \ldots a_n$ is considered accepted by the grammar,
if, for some categories $A_1$, \dots, $A_n$,
the propositions $A_i(a_i)$ are included in the grammar as axiomatic ones,
and the {\em sequent} $A_1, \dots, A_n \to s$ is derivable in the Lambek calculus,
which consists of the axioms and inference rules listed in Figure~\ref{Fig:Lambek}.

In Lambek's original formulation, left-hand sides of the sequents in all rules are required to be non-empty. This constraint is called
\emph{Lambek's non-emptiness restriction} and is motivated by linguistic applications of Lambek grammars. In fact, the restriction
really needs to be imposed only on $(\to{\BS})$ and $(\to{\SL})$; all other rules keep left-hand sides non-empty. The version of $\LL$ without
Lambek's non-emptiness restriction is denoted by $\Ls$.

\begin{figure*}
$$
A \to A
$$

$$
\infer[(\BS\to)]
{\Gamma, \Pi, A \BS B, \Delta \to D}
{\Pi \to A & \Gamma, B, \Delta \to D}
\qquad
\infer[(\to\BS)]
{\Pi \to A \BS B}
{A, \Pi \to B}
\qquad
\infer[(\cdot\to)]
{\Gamma, A \cdot B, \Delta \to D}
{\Gamma, A, B, \Delta \to D}
$$

$$
\infer[(\SL\to)]
{\Gamma, B \SL A, \Pi, \Delta \to D}
{\Pi \to A & \Gamma, B, \Delta \to D}
\qquad
\infer[(\to\SL)]
{\Pi \to B \SL A}
{\Pi, A \to B}
\qquad
\infer[(\to\cdot)]
{\Gamma, \Delta \to A \cdot B}
{\Gamma \to A & \Delta \to B}
$$
\caption{Axioms and inference rules in the Lambek Calculus
	(for all $A, B \in \Tp$ and $\Gamma, \Pi, \Delta \in \Tp^*$). For $(\to\SL)$ and $(\to\BS)$, in the case of $\LL$, $\Pi$ is required to be non-empty}\label{Fig:Lambek}
\end{figure*}

Note that in Lambek grammars,
arrows tranditionally point in an opposite direction
than in context-free grammars
($\ldots \to s$ vs. $S \to \ldots$).

The following \emph{cut rule}
is not officially included in the system, but is admissible (Lambek~\cite{Lambek1958}).
\begin{equation*}
	\infer[(\mathrm{cut})]
	{\Gamma, \Pi, \Delta \to D}
	{\Pi \to A & \Gamma, A, \Delta \to D}
\end{equation*}

As one can easily see,
all basic categories, as defined in Section~\ref{S:intro},
are also Lambek categories: $\BCat \subset \Tp$.
Moreover,  as noticed by Buszkowski~\cite{Buszkowski1985},
if a basic categorial grammar is regarded as a Lambek categorial grammar
with the same set of axiomatic propositions,
then it describes the same language.

Next, the Lambek calculus is extended
with the so-called ``additive'' conjunction and disjunction,
as defined by Kanazawa~\cite{Kanazawa1992}.
These new operations correspond to the additive operations
in linear logic by Girard~\cite{Girard1987}.

The extended set of categories, denoted by $\Tp_{\wedge,\vee}$, is defined as the smallest set obeying the following:
\begin{enumerate}
\item $\Pr \subset \Tp_{\wedge,\vee}$;
\item if $A,B \in \Tp_{\wedge,\vee}$, then $(A \cdot B), (A \BS B), (B \SL A), (A \wedge B), (A \vee B) \in \Tp_{\wedge,\vee}$.
\end{enumerate}

Inference rules for the new operations
are depicted in Figure~\ref{Fig:additives}.

\begin{figure*}
$$
\infer[(\wedge\to)_1]
{\Gamma, A_1 \wedge A_2, \Delta \to D}
{\Gamma, A_1, \Delta \to D}
\qquad
\infer[(\wedge\to)_2]
{\Gamma, A_1 \wedge A_2, \Delta \to D}
{\Gamma, A_2, \Delta \to D}
\qquad
\infer[(\to\wedge)]
{\Pi \to A_1 \wedge A_2}
{\Pi \to A_1 & \Pi \to A_2}
$$

$$
\infer[(\vee\to)]
{\Gamma, A_1 \vee A_2, \Delta \to D}
{\Gamma, A_1, \Delta \to D & \Gamma, A_2, \Delta \to D}
\qquad
\infer[(\to\vee)_1]
{\Pi \to A_1 \vee A_2}
{\Pi \to A_1}
\qquad
\infer[(\to\vee)_2]
{\Pi \to A_1 \vee A_2}
{\Pi \to A_2}
$$
\caption{Rules for conjunction and disjunction}\label{Fig:additives}
\end{figure*}


This calculus, denoted by $\MALC$ (``multiplicative-additive Lambek calculus''),
 also enjoys cut elimination and the subformula property.

 The version of $\MALC$ without
 Lambek's non-emptiness restriction is denoted by $\MALCs$. In this section, we focus on
 grammars based on $\MALC$; the case of $\MALCs$ is considered below in Section~\ref{S:empty}.

Lambek categories with $\wedge$ and $\vee$ generalize conjunctive categories (and
conjuncts):
\begin{equation*}
	\Cat \cup \Conj \subset \Tp_{\wedge,\vee},
\end{equation*}
and every conjunctive categorial grammar can be translated into
a Lambek grammar with $\wedge$ and $\vee$.
However, one cannot simply take
the axiomatic propositions of a conjunctive categorial grammar
and use them as axiomatic propositions in the sense of Lambek grammars:
this would yield a grammar that is not equivalent to the original one
(for instance, the Lambek grammar
with the axiomatic propositions from Example~\ref{conjCG_example}
does not accept any strings at all).
The construction has to be more subtle.




\begin{theorem}\label{conjunctive_categorial_to_MALC_transformation_theorem}
Let $\Sigma = \{ a_1, \dots, a_n \}$ and consider a conjunctive categorial grammar
with the following axiomatic propositions:
$$\begin{matrix}
A_{1,1} (a_1), & A_{1,2}(a_1), & \ldots & A_{1,k_1}(a_1),\\
A_{2,1} (a_2), & A_{2,2}(a_2), & \ldots & A_{2,k_2}(a_2),\\
\vdots\\
A_{n,1} (a_n), & A_{n,2}(a_n), & \ldots & A_{n,k_n}(a_n).
\end{matrix}$$
(This notation means that for each letter $a \in \Sigma$ the grammar has $k_i$ axioms and the
categories (belonging to $\Cat$) used in these axioms are denoted by $A_{i,1}, \ldots A_{i,k_i}$.)
Then the Lambek grammar with atomic propositions
$(A_{i,1} \wedge A_{i,2} \wedge \ldots \wedge A_{i,k_i})(a_i)$
(for $i = 1, \ldots, n$) describes the same language as the original
conjunctive categorial grammar. (If $k_i = 1$, we take just $A_{i,1}(a_i)$.)
\end{theorem}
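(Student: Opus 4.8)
The plan is to establish the two language inclusions separately, reducing each to a statement relating derivations in the conjunctive categorial grammar $G$ to derivations in $\MALC$. Throughout, write $C_i := A_{i,1}\wedge\dots\wedge A_{i,k_i}$ for the $\MALC$-category attached to $a_i$, and call $G'$ the Lambek grammar in the statement.

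\textbf{The inclusion $L(G)\subseteq L(G')$.} I would prove, by induction on a $G$-derivation, the stronger statement: for every $B\in\Cat\cup\Conj$ and every $v=a_{i_1}\cdots a_{i_m}\in\Sigma^+$, if $B(v)$ is derivable in $G$ then $C_{i_1},\dots,C_{i_m}\to B$ is derivable in $\MALC$. The axiom case $A_{i,t}(a_i)$ reduces to deriving $C_i\to A_{i,t}$, which follows from the identity sequent together with the $(\wedge\to)$ rules; the $\wedge$-introduction case follows by iterating $(\to\wedge)$ over the common antecedent; and each of the two division rules of $G$ is simulated by first deriving $\conjcat,\conjcat\BS B\to B$ (respectively $B\SL\conjcat,\conjcat\to B$) from identities via $(\BS\to)$ (respectively $(\SL\to)$) and then cutting twice against the two inductive premises, using admissibility of cut in $\MALC$. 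Specialising to $B=s$ and $v=w$ gives the inclusion.

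\textbf{The inclusion $L(G')\subseteq L(G)$} is the main part. I would start from a cut-free derivation of $C_{i_1},\dots,C_{i_m}\to s$, available by cut elimination for $\MALC$. The crucial preliminary observation, obtained from the subformula property, is that the derivation lives in a very restricted fragment: since every denominator occurring inside an $A_{i,t}\in\Cat$ lies in $\Conj$, and these categories contain neither products nor disjunctions, every succedent in the derivation is a primitive category or a conjunct from $\Conj$; hence the only rules that can occur are $(\to\wedge)$ (splitting a conjunct target), $(\wedge\to)_1$ and $(\wedge\to)_2$ (resolving a conjunction in the antecedent), and $(\BS\to)$, $(\SL\to)$ (peeling the division spine of some $A_{i,t}$) --- precisely the fragment mirroring the inference rules of the conjunctive categorial calculus. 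On it I would run an induction on the cut-free derivation with the following invariant: for every sequent $D_1,\dots,D_l\to E$ occurring in it and every assignment of strings $v_1,\dots,v_l\in\Sigma^+$ such that each $D_p(v_p)$ is ``derivable in $G$'', the proposition $E(v_1\cdots v_l)$ is derivable in $G$ --- where ``derivable in $G$'' has its literal meaning when $D_p\in\Cat\cup\Conj$ and, for a (sub)conjunction of $\Cat$-categories such as $C_i$ or one of its subtrees, means ``each $A_{i,t}$ occurring in it has $A_{i,t}(v_p)$ derivable in $G$'' (so that $C_i(a_i)$ counts as derivable, each $A_{i,t}(a_i)$ being an axiom). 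The identity, $(\wedge\to)$ and $(\to\wedge)$ cases are immediate from $\wedge$-introduction in $G$ and the invertibility of its $\wedge$ rule. In the $(\BS\to)$ case --- conclusion $\Gamma,\Pi,\conjcat\BS B,\Delta\to E$ from premises $\Pi\to\conjcat$ and $\Gamma,B,\Delta\to E$, with $v_0$ the string assigned to $\conjcat\BS B$ --- the inductive hypothesis on the left premise gives $\conjcat(v_\Pi)$ in $G$, where $v_\Pi$ is the concatenation of the strings assigned to $\Pi$ (non-empty, since no sequent $\to\conjcat$ with $\conjcat\in\Conj$ is derivable in $\MALC$); the left division rule of $G$ then yields $B(v_\Pi v_0)$ from $\conjcat(v_\Pi)$ and $(\conjcat\BS B)(v_0)$; feeding $v_\Pi v_0$ into the inductive hypothesis on the right premise (with the original strings for $\Gamma$ and $\Delta$) gives $E$ of the full string, which coincides with the string of the conclusion. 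The case $(\SL\to)$ is symmetric. Applying the invariant to the root $C_{i_1},\dots,C_{i_m}\to s$ with $v_p=a_{i_p}$ yields $s(w)$ in $G$.

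The main obstacle is the right-to-left direction, and within it two points require care: proving rigorously that a cut-free $\MALC$ derivation from these axioms stays inside the stated fragment (no complex succedent, hence no $(\to\BS)$, $(\to\SL)$, product, or disjunction rule ever applies), and choosing the induction invariant correctly --- in particular the bookkeeping that attaches a substring of $w$ to each antecedent formula and the convention letting the non-$\Cat$ categories $C_i$ participate in it, so that the concatenations in the two division cases line up exactly.
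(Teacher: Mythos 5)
Your proposal is correct. The first inclusion is essentially the paper's own argument (induction on the $G$-derivation, generalized to an arbitrary $D \in \Cat \cup \Conj$, with the division steps simulated via $(\BS\to)$, $(\SL\to)$ and cut), but your converse inclusion takes a genuinely different route. The paper inducts on cut-free $\MALC$-derivations of sequents whose antecedent entries are sub-conjunctions $B'_{i_j}$ of the letter categories; when $(\BS\to)$ fires on some $A_{i_h,k} = \conjcat' \BS A'$, the new antecedent formula $A'$ falls outside that format, and the paper restores it by adjoining a fresh letter $a_{n+1}$ with axiom $A'(a_{n+1})$ and then invoking the grammar-level substitution lemma (Lemma~\ref{Lm:conjCG_cutelim}) to splice the corresponding substring back in. You instead prove a soundness-style invariant for \emph{every} sequent of the cut-free derivation under arbitrary assignments of non-empty strings to antecedent formulae, in effect interpreting each formula by the set of strings for which it is $G$-derivable (with the natural convention for conjunctions of $\Cat$-categories); this handles the division cases directly and dispenses with both the grammar extension and Lemma~\ref{Lm:conjCG_cutelim}. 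The price is the preliminary fragment claim you rightly flag: every succedent occurring in the derivation lies in $\Conj$, so only identity, $(\wedge\to)$, $(\to\wedge)$, $(\BS\to)$, $(\SL\to)$ can occur. This does hold, since in a cut-free derivation all antecedent formulae are (sub)formulae of the letter categories, whose denominators are conjuncts and which contain no product or disjunction, so no right division, product, or disjunction rule is applicable; non-emptiness of $\Pi$ in the division cases is automatic in $\MALC$. Both proofs rest on cut elimination for $\MALC$ and the invertibility of the grammar's $\wedge$-rule; yours is somewhat more uniform (one invariant covering all sequents), while the paper's stays closer to grammar-level statements and reuses an already-established lemma.
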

\begin{proof}
Let $B_i = A_{i,1} \wedge A_{i,2} \wedge \ldots \wedge A_{i,k_i}$.
The new Lambek grammar
uses axiomatic propositions of the form $B_i(a_i)$,
one for each symbol in $\Sigma$.
It is sufficient to prove the following:
for the target category $s \in \PC$
and for a string $a_{i_1} \dots a_{i_m}$,
the proposition $s(a_{i_1} \dots a_{i_m})$ is derivable
in the conjunctive categorial grammar
if and only if
the sequent $B_{i_1}, \dots, B_{i_m} \to s$ is derivable in $\MALC$.

{\bf The ``only if'' part.}
In order to use induction
on the length of derivation in the conjunctive categorial grammar,
the statement is proved not only for $s$,
but for an arbitrary category $D \in \Cat \cup\Conj$.

The proof in the base case is immediate:
if $D(a_i)$ is an axiom,
then $D$ is one of the $A_{i,j}$ in the conjunction $B_i$,
and the sequent $B_i \to A_{i,j}$ is derivable
by several applications of the $(\wedge\to)$ rules.

For the induction step, there are three cases.

{\it Case 1:} $D = (p_1 \wedge \ldots \wedge p_k)$.
Then, by the induction hypothesis,
$B_{i_1}, \dots, B_{i_m} \to p_j$ is derivable in $\MALC$ for every $j$,
and $B_{i_1}, \dots, B_{i_m} \to p_1 \wedge \ldots \wedge p_k$
is derived by the $(\to\wedge)$ rule.

{\it Case 2:} $D(a_{i_1}\dots a_{i_m})$ is derived
from $\conjcat(a_{i_1} \dots a_{i_\ell})$
and $(\conjcat \BS D)(a_{i_{\ell+1}} \dots a_{i_m})$ for
some $\conjcat \in \Conj$.
Then, by the induction hypothesis,
the sequents $B_{i_1}, \dots, B_{i_\ell} \to \conjcat$
and $B_{i_{\ell+1}}, \dots, B_{i_m} \to \conjcat \BS D$ are derivable,
and then $B_{i_1}, \dots, B_{i_m} \to D$ can be derived in the following way.
First,
$B_{i_1}, \dots, B_{i_\ell}, \conjcat \BS D \to D$
is derived from $B_{i_1}, \dots, B_{i_\ell} \to \conjcat$ and $D \to D$,
and then it is combined with $B_{i_{\ell+1}}, \dots, B_{i_m} \to \conjcat \BS D$
using the cut rule,
to get $B_{i_1}, \dots, B_{i_\ell}, B_{i_{\ell+1}}, \dots, B_{i_m} \to D$.

{\it Case 3:} $D(a_{i_1} \dots a_{i_m})$
is derived from $(D \SL \conjcat)(a_{i_1} \dots a_{i_\ell})$
and $\conjcat(a_{i_{\ell+1}} \dots a_{i_m})$.
The proof is symmetric.

{\bf The ``if'' part.}
The following more general statement is claimed.
\emph{For every $j = 1, \dots, m$, let $B'_{i_j}$ be a conjunction of an arbitrary subset
of formulae $A_{i_j,k}$ used in the conjunction $B_{i_j}$;
in other words, $B'_{i_j}$ may coincide with $B_{i_j}$ or lack some of the conjuncts.
Then, for any $\conjcat \in \Conj$ (in particular, for $\conjcat = s \in \PC \subset \Conj$),
if  $B'_{i_1}, \dots, B'_{i_m} \to \conjcat$ is derivable in $\MALC$,
then the proposition $\conjcat(a_{i_1} \dots a_{i_m})$
is derivable in the original conjunctive categorial grammar.}

The claim is proved by induction
on the cut-free derivation
of the sequent $B'_{i_1}, \dots, B'_{i_m} \to \conjcat$
in $\MALC$.

{\it Case 1.} $\conjcat = p_1 \wedge \ldots \wedge p_k$, $k \geqslant 2$.
Since the $(\to\wedge)$ rule in $\MALC$ is invertible
(this follows from the cut elimination),
it can be assumed that all $k-1$ applications of this rule
were applied immediately.
\begin{equation*}
	\infer{B'_{i_1}, \dots, B'_{i_m} \to p_1 \wedge \ldots \wedge p_k}{
		B'_{i_1}, \dots, B'_{i_m} \to p_1
		& \ldots &
		B'_{i_1}, \dots, B'_{i_m} \to p_k
	}
\end{equation*}
Then, by the induction hypothesis,
all propositions $p_j(a_1 \ldots a_m)$
are derivable in the conjunctive categorial grammar,
and from them one can derive $(p_1 \wedge \ldots \wedge p_k)(a_1 \dots a_m)$.

In all other cases, $\conjcat \in \PC$.


{\it Case 2:} an axiom. Then, $m = 1$, $B'_{i_1} = \conjcat$,
and, since all elements of $B'_{i_1}$ should be of the form $A_{i_1,k}$, the proposition
$\conjcat(a_1)$ is an axiom of the conjunctive categorial grammar, and is therefore derivable.

{\it Case 3:} the last rule of the derivation is $(\wedge\to)$.
Then, $B'_{i_\ell} = B''_{i_\ell} \wedge A_{i_\ell,k}$:
\begin{align*}
	\infer{B'_{i_1}, \dots, B''_{i_\ell} \wedge A_{i_\ell,k}, \dots, B'_{i_m} \to \conjcat}{
		B'_{i_1}, \dots, B''_{i_\ell}, \dots, B'_{i_m} \to \conjcat
	}
\intertext{%
	or
}
	\infer{B'_{i_1}, \dots, B''_{i_\ell} \wedge A_{i_\ell,k}, \dots, B'_{i_m} \to \conjcat}{
		B'_{i_1}, \dots, A_{i_\ell,k}, \dots, B'_{i_m} \to \conjcat
	}
\end{align*}
In both cases the induction hypothesis is applied:
since $B''_{i_\ell}$ or $A_{i_\ell,k}$ can act as $B'_{i_\ell}$,
the proposition $\conjcat(a_{i_1}\dots a_{i_\ell} \dots a_{i_m})$
is derivable in the conjunctive categorial grammar.





{\it Case 4:} the last rule is $(\BS\to)$.
In this case,
\mbox{$B'_{i_h} = A_{i_h,k} = \conjcat' \BS A'$,}
for some $h$
and for $\conjcat' \in \Conj$ and $A' \in \Cat$,
and the sequent
$B'_{i_1}, \dots, B'_{i_{\ell - 1}}, B'_{i_\ell}, \dots, B'_{i_{h-1}}, \conjcat' \BS A', B'_{i_{h+1}},
\dots, B'_{a_m} \to \conjcat$
is derived from $B'_{i_\ell}, \dots, B'_{i_{h-1}} \to \conjcat'$
and $B'_{i_1}, \dots, B'_{i_{\ell-1}}, A', B'_{i_{h+1}}, \dots, B'_{i_m} \to \conjcat$.
By the induction hypothesis,
the proposition $\conjcat'(a_{i_\ell} \dots a_{i_{h-1}})$
can be derived in the conjunctive categorial grammar,
and, since $(\conjcat' \BS A')(a_{i_h})$ is an axiom,
the proposition $A'(a_{i_\ell} \dots a_{i_{h-1}} a_{i_h})$ is also derivable.

Now, the conjunctive categorial grammar is extended
by adding a new symbol $a_{n+1}$ to the original alphabet $\Sigma = \{ a_1, \dots, a_n \}$,
with a new axiom, $A'(a_{n+1})$.
For the new grammar,
we have the same $B_j$ for $j = 1, \dots, n$, and $B_{n+1} = A'$.
Since $B'_{i_1}, \dots, B'_{i_{\ell-1}}, A', B'_{i_{h+1}}, \dots, B'_{i_m} \to \conjcat$
is derivable in $\MALC$,
by the induction hypothesis,
the proposition $\conjcat(a_{i_1} \dots a_{i_{\ell-1}} a_{n+1} a_{i_{h+1}} \dots a_{i_m})$
is derivable in the extended conjunctive categorial grammar.

By Lemma~\ref{Lm:conjCG_cutelim},
the desired proposition
$\conjcat(a_{i_1} \dots a_{i_{\ell-1}} a_{i_\ell} \dots a_{i_{h-1}} a_{i_h} a_{i_{h+1}} \dots a_{i_m})$,
where the string $u = a_{i_\ell} \dots a_{i_{h-1}} a_{i_h}$
has been substituted for a fresh symbol $b = a_{n+1}$,
can be derived in the original conjunctive categorial grammar.




{\it Case 5:} the last rule is $(\SL\to)$. Symmetric.
\end{proof}

This embedding immediately implies
that every language generated by a conjunctive grammar
can be generated by a $\MALC$-grammar.
This supersedes the result by Kuznetsov~\cite{Kuznetsov2013}, where this statement was established only
for conjunctive grammars in Greibach normal form.

\section{A $\MALC$-grammar Generating an NP-complete Language}

In the classical case without the conjunction,
a converse result was shown by Pentus~\cite{Pentus1993}:
every language generated by a Lambek grammar is context-free.
Whether an analogous property holds for $\MALC$
(that is, whether every $\MALC$-language is generated by a conjunctive grammar)
remains an open problem.
Establishing any such upper bound on the power of the new model
would require proving a non-trivial variant
of the famous theorem by Pentus~\cite{Pentus1993},
which would likely be difficult.

However, there is some evidence that $\MALC$
should be strictly more powerful than conjunctive grammars.
First, there is a result by Okhotin~\cite{BooleanPComplete}
that conjunctive grammars can describe a certain P-complete language
representing the Circuit Value Problem (CVP) under a suitable encoding.
On the other hand,
the class of languages generated by $\MALC$-grammars
is, by definition, closed under symbol-to-symbol homomorphisms.
These two facts are sufficient
to develop a $\MALC$ represenation for an NP-complete language.

\begin{theorem}\label{np_complete_as_h_of_conj_theorem}
There exists a conjunctive grammar $G=(\Sigma, N, R, S)$,
and a length-preserving homomorphism $h \colon \Sigma \to \Omega$,
for which $h(L(G))$ is an NP-complete language.
\end{theorem}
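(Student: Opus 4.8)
The plan is to obtain the NP-complete language as a nondeterministic projection of a P-complete language that already lies in the conjunctive class. Concretely, I take $G$ to be (an adaptation of) the conjunctive grammar of Okhotin~\cite{BooleanPComplete} whose language $L(G)$ is a P-complete encoding of the Circuit Value Problem. I will arrange the encoding so that it has the following shape: a string $w \in L(G)$ codes a Boolean circuit $C$ together with an assignment $x$ to its input gates, where the bits of $x$ are written in a clearly delimited block over a dedicated pair of symbols $\mathbf{0}, \mathbf{1} \in \Sigma$ that does not occur in the part describing $C$, and $w \in L(G)$ holds exactly when $C$ evaluates to $1$ on $x$; moreover $L(G)$ contains \emph{only} such valid accepting codes. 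If Okhotin's original encoding does not already separate the input block in this way, it is routine to rewrite it so that it does, since a conjunctive grammar is flexible enough to parse a layered code of this form.

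Next I define the length-preserving homomorphism $h \colon \Sigma \to \Omega$ with $\Omega = (\Sigma \setminus \{\mathbf{0}, \mathbf{1}\}) \cup \{\star\}$: it is the identity on every symbol that describes the circuit and sends both $\mathbf{0}$ and $\mathbf{1}$ to a single fresh symbol $\star$. Thus $h(w)$ keeps the full description of $C$ but erases the values of the input bits, leaving a block of $\star$'s of the correct length in their place. Consequently $h(L(G))$ is exactly the set of ``blanked'' circuit codes $v$ such that \emph{some} completion of the $\star$-block by $0/1$ values yields a member of $L(G)$ --- equivalently, $v$ codes a satisfiable circuit. This is the Circuit Satisfiability language under a fixed encoding.

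It remains to check that this language is NP-complete. Membership in NP is immediate: given $v \in \Omega^*$, nondeterministically guess the erased bits (only polynomially many $\star$-positions occur in $v$), forming a candidate $w \in \Sigma^*$ with $h(w) = v$, and then decide $w \in L(G)$ in polynomial time using the (cubic-time) parsing algorithm for conjunctive grammars. For NP-hardness I reduce from Circuit Satisfiability: given a circuit $C$, compute in polynomial time the blanked code $v_C$ of $C$ (its input block filled by $\star$'s). Since $L(G)$ contains only valid codes of (circuit, accepting-assignment) pairs and $h$ alters nothing outside the input block, any $w \in L(G)$ with $h(w) = v_C$ must code $(C, x)$ for some $x$ with $C(x) = 1$; conversely, if $C(x) = 1$ then the code of $(C, x)$ lies in $L(G)$ and maps to $v_C$. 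Hence $v_C \in h(L(G))$ iff $C$ is satisfiable.

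The point requiring the most care is the first one: pinning down the CVP encoding so that the input assignment occupies a cleanly delimited block over a sub-alphabet disjoint from the rest, and so that $L(G)$ is precisely the set of valid accepting codes with no spurious strings --- this is what makes the single symbol-to-symbol homomorphism blank exactly the assignment and guarantees that no blanked code $v_C$ of an unsatisfiable circuit slips into $h(L(G))$. Once the encoding is fixed, the NP upper bound (guess-and-parse) and the NP-hardness reduction (from Circuit-SAT, using that only valid codes are accepted) are both straightforward. Finally, combining this theorem with Theorems~\ref{conjunctive_iff_conjunctive_categorial_theorem} and~\ref{conjunctive_categorial_to_MALC_transformation_theorem} and the closure of $\MALC$-languages under symbol-to-symbol homomorphisms yields the announced $\MALC$-grammar for an NP-complete language.
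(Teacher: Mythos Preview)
Your proposal is correct and follows essentially the same route as the paper: take a conjunctive grammar for the Circuit Value Problem, then apply a symbol-to-symbol homomorphism that collapses the two input-value symbols to a single ``blank'' symbol, turning CVP into Circuit Satisfiability.

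Two small points of comparison. First, you add the requirement that the input bits sit in a \emph{clearly delimited block}; this is stronger than what is actually needed and what the paper uses. All that matters is that the symbols encoding input values (your $\mathbf{0},\mathbf{1}$) occur nowhere else in the encoding, so that erasing them touches only the assignment and nothing about the circuit structure. In the paper's concrete encoding (Sequential NOR circuits over $\{0,1,a,b\}$, Lemma~\ref{sequential_nor_cvp_by_conjunctive_grammar_lemma}), the digits $0$ and $1$ already occur only as input-gate values, so no ``rewriting of the encoding'' is required at all: the homomorphism $0,1\mapsto{?}$, $a\mapsto a$, $b\mapsto b$ works out of the box. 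Second, the paper simply invokes the NP-completeness of Circuit-SAT, whereas you spell out the guess-and-parse argument for membership and the reduction for hardness; your version is more self-contained, the paper's is terser, but the content is the same.
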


The proof of the theorem
is based upon representing the Circuit Value Problem (CVP)
by a conjunctive grammar.
A circuit is a sequence of gates $C_1, \ldots, C_n$,
with a Boolean value computed at each gate.
The first $m$ gates are \emph{input values}, each set to 0 or to 1.
Each of the following gates $C_i$, with $m < i \leqslant n$,
computes some Boolean function,
with any of the earlier gates $C_1, \ldots, C_{i-1}$ as arguments.
In its classical form, the CVP is stated as follows:
``Given a circuit,
with a conjunction or a disjunction of two arguments or a negation of one argument
computed in each gate,
and given a vector of input values,
determine whether the last, $n$-th gate evaluates to 1''.
This is the fundamental P-complete problem.

Many of its variants are known to be P-complete as well.
Among these, the variant particularly suitable
for representation by conjunctive grammars
is the \emph{Sequential NOR Circuit Value Problem} (Seq-NOR-CVP),
where the function computed in each gate
is $C_i=\lnot(C_{i-1} \lor C_j)$,
with the previous gate $C_{i-1}$ as one of the arguments
and with any earlier gate $C_j$, with $j < i$, as the other argument.
Every standard circuit with conjunction, disjunction and negation
can be translated to this form,
with the set of the inputs preserved,
and with the same value computed in the output gate,
see Okhotin~\cite{BooleanPComplete}.
Accordingly, Seq-NOR-CVP remains P-complete.

\begin{lemma}[{Okhotin~\cite[Thm.~3]{BooleanPComplete}}]\label{sequential_nor_cvp_by_conjunctive_grammar_lemma}
Consider the following encoding of Sequential NOR circuits
as strings over the alphabet $\Sigma=\{0, 1, a, b\}$.
Each gate $C_i$ is represented by a substring of the following form:
$0$ for $C_i=0$,
$1$ for $C_i=1$,
$0$ for $C_i=0$,
$a^{n-j-1} b$ for $C_i=\lnot(C_{i-1} \lor C_j)$.
The whole circuit is then represented
as a concatenation of these substrings,
in the reverse order from $C_n$ to $C_1$.

Then there exists a conjunctive grammar
that describes the encodings of all circuits that evaluate to 1.
\end{lemma}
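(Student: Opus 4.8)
The plan is to give $G$ as a conjunctive grammar that, through the conjunction operation, intersects a \emph{syntactic} check (the input is a legal circuit encoding) with a \emph{semantic} check (the leading gate evaluates to $1$), the semantic check being an explicit evaluator of the circuit that recurses along the two references of each gate. It is convenient to arrange the nonterminals so that, for every way of cutting the input $w=xy$, the information recorded for $y$ is bounded --- essentially a description of $y$ relative to its starting position --- and the unary addresses $a^{n-j-1}b$ of the NOR gates are exactly what makes such positional information exploitable.

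In more detail, let $S$ have the single rule $S \to T \C W$, where $W$ generates the \emph{well-formed tails}: concatenations of admissible gate blocks, a block being $0$, $1$, or $a^{\ell}b$ with the address $\ell$ not running past the end of the remaining string. Let $T$ generate precisely the well-formed encodings whose leading gate evaluates to $1$, and let $F$ do the same for the value $0$; put $N_1 = T$, $N_0 = F$. For an input gate the rules are immediate, $T \to 1\,W$ and $F \to 0\,W$; the language $L(W)$ is regular apart from the ``address in range'' clause, a routine counting property well within reach of conjunctive grammars. For a NOR gate $C_i = \lnot(C_{i-1}\lor C_j)$, encoded by a block $a^{\ell}b$ followed by a tail $z'$, De Morgan's laws give $C_i=1 \iff C_{i-1}=0 \wedge C_j=0$ and $C_i=0 \iff C_{i-1}=1 \vee C_j=1$; since $C_{i-1}$ is encoded at the front of $z'$, ``$C_{i-1}$ has value $t$'' simply means ``$z'$ is derivable from $N_t$''. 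Let $G_t$ generate every string $a^{\ell}b\,z'$ with $z'$ derivable from $N_t$, and let $R_t$ generate every string $a^{\ell}b\,z'$ in which the gate block located at the offset named by $\ell$ inside $z'$ is derivable from $N_t$. The NOR rules are then
\[
   T \;\to\; G_0 \C R_0, \qquad F \;\to\; G_1 \ |\ R_1,
\]
the two rules for $F$ realising the disjunction.

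The heart of the construction --- and the step I expect to be hardest --- is the realisation of $R_t$: ``inside $a^{\ell}b\,z'$, the gate block at the offset named by $\ell$ is derivable from $N_t$.'' I would build it from a bounded family of auxiliary nonterminals recursing jointly on the unary counter $a^{\ell}$ and on a prefix of $z'$, each extra $a$ paying for one fixed stretch of $z'$, so that after the $b$ is consumed the parse sits exactly at a gate-block boundary and may hand control to $N_t$. Getting the recursion to land precisely on a block boundary, and matching the number of symbols consumed per $a$ to the affine relation built into the encoding $a^{n-j-1}b$ with blocks listed from $C_n$ down to $C_1$, is where the specific numerology of the encoding enters; the counting mechanism itself is routine --- it is the device that already puts $\{a^nb^nc^n\}$ and $\{wcw : w \in \{a,b\}^*\}$ into the conjunctive class. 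Finally I would check $L(G) = \{\,\text{encodings of Sequential NOR circuits whose }n\text{-th gate evaluates to }1\,\}$ by induction on the number of gates: the ``if'' direction by assembling the derivation tree bottom-up along the computed gate values, and the ``only if'' direction by case analysis on the leading block, applying the induction hypothesis to the tail and, for a NOR gate, also to the block reached through $R_t$.
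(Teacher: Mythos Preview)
Your plan is correct and matches the paper's construction: mutually recursive $T$ and $F$ for ``evaluates to $1$'' and ``evaluates to $0$'', with the NOR case handled as the conjunction of ``the immediate successor has value $t$'' (your $G_t$) and ``the gate addressed by the unary counter has value $t$'' (your $R_t$), the latter realised by a context-free counting mechanism that consumes one gate block per leading $a$. The paper's grammar is a little more compact---well-formedness is folded directly into $T$ and $F$ rather than checked by a separate $W$, and your $R_t$ is simply $C\,N_t$ where the single nonterminal $C$ with rules $C \to aCAb \mid aC0 \mid aC1 \mid b$ (and $A \to aA \mid \varepsilon$) generates exactly the prefixes $a^m b\,x_1\cdots x_m$ with each $x_i$ a gate block, so the recursion lands on a block boundary automatically.
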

\begin{proof}
Under this encoding,
the value of a circuit can be determined as follows.
\begin{itemize}
\item
	A circuit $0 w$, where $w$ is a circuit, has value 0.
\item
	A circuit $1 w$, where $w$ is a circuit, has value 1.
\item
	A circuit $a^m b w$, where $w$ is a circuit, has value 1
		if and only if
	$w$ has value 0, \emph{and}
	$w$ is in $(a^* b \cup \{0, 1\})^m u$,
	where $m \geqslant 0$ and $u$ has value 0.
\end{itemize}
This definition is implemented in the following conjunctive grammar,
where the nonterminals $T$ and $F$
define all well-formed encodings of circuits that evaluate to 1 and to 0, respectively.
\begin{align*}
	T &\to A b F \mathop{\&} C F \ | \ 1T \ | \ 1F \ | \ 1 \\
	F &\to A b T \ | \ C T \ | \ 0T \ | \ 0F \ | \ 0 \\
	A &\to a A \ | \ \epsilon \\
	C &\to a C A b \ | \ aC0 \ | \ aC1 \ | \ b
\end{align*}
The rules for $C$ define all prefixes
of the form $a^m x_1 \ldots x_m$,
with each $x_i$ in $a^* b \cup \{0, 1\}$.
\end{proof}

\begin{proof}[Proof of Theorem~\ref{np_complete_as_h_of_conj_theorem}]
The desired grammar $G$
is given by Lemma~\ref{sequential_nor_cvp_by_conjunctive_grammar_lemma},
it is defined over the alphabet $\Sigma=\{a, b, 0, 1\}$.
Let $\Omega=\{a, b, ?\}$
and let $h \colon \Sigma \to \Omega$
be a homomorphism that maps both digits to the question mark symbol,
leaving all other symbols intact:
$h(0)=h(1)={?}$, $h(a)=a$, $h(b)=b$.
This transforms the Circuit Value Problem
to the Circuit Satisfiability Problem,
which is NP-complete.
\end{proof}

\begin{corollary}
The family of languages generated by $\MALC$-grammars contains an $\mathrm{NP}$-complete language.
\end{corollary}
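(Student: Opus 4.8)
The plan is to chain together results already established in the paper. Theorem~\ref{np_complete_as_h_of_conj_theorem} supplies a conjunctive grammar $G$ over $\Sigma=\{a,b,0,1\}$, coming from Lemma~\ref{sequential_nor_cvp_by_conjunctive_grammar_lemma}, together with a length-preserving homomorphism $h\colon\Sigma\to\Omega$ such that $h(L(G))$ is NP-complete. First I would note that $L(G)$ contains no empty string: in the grammar of Lemma~\ref{sequential_nor_cvp_by_conjunctive_grammar_lemma} every right-hand side of a rule for the start nonterminal $T$ is non-empty, so $L(G)\subseteq\Sigma^+$. Hence Theorem~\ref{conjunctive_iff_conjunctive_categorial_theorem} applies and produces a conjunctive categorial grammar generating $L(G)$, and Theorem~\ref{conjunctive_categorial_to_MALC_transformation_theorem} then converts it into a $\MALC$-grammar $\mathcal{G}$ with $L(\mathcal{G})=L(G)$. (Equivalently, one may cite directly the remark following Theorem~\ref{conjunctive_categorial_to_MALC_transformation_theorem}, that every conjunctive language without $\varepsilon$ is a $\MALC$-language.)

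The only remaining ingredient is closure of the family of $\MALC$-grammar languages under symbol-to-symbol homomorphisms, which is the "by definition" claim made earlier in the section. Given the $\MALC$-grammar $\mathcal{G}$ with target category $s$ and axiomatic propositions $A(a)$, $a\in\Sigma$, I would form a $\MALC$-grammar $\mathcal{G}'$ over $\Omega$ with the same target category $s$ and, for each axiom $A(a)$ of $\mathcal{G}$, an axiom $A(h(a))$. A string $c_1\dots c_m\in\Omega^*$ is accepted by $\mathcal{G}'$ precisely when categories $A_1,\dots,A_m$ can be chosen so that each $A_i(c_i)$ is an axiom of $\mathcal{G}'$ and $A_1,\dots,A_m\to s$ is derivable in $\MALC$; since each such axiom $A_i(c_i)$ of $\mathcal{G}'$ arises from an axiom $A_i(a_i)$ of $\mathcal{G}$ with $h(a_i)=c_i$, this happens exactly when there is a word $a_1\dots a_m\in L(\mathcal{G})$ with $h(a_1\dots a_m)=c_1\dots c_m$. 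Thus $L(\mathcal{G}')=h(L(\mathcal{G}))$. Because $h$ is length-preserving and $L(G)\subseteq\Sigma^+$, the image $h(L(G))$ lies in $\Omega^+$, so there is no conflict with the fact that $\MALC$-grammars cannot generate $\varepsilon$.

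Combining the steps, $\mathcal{G}'$ is a $\MALC$-grammar with $L(\mathcal{G}')=h(L(G))$, which is NP-complete by Theorem~\ref{np_complete_as_h_of_conj_theorem}. I do not expect a genuine obstacle: the whole content of the corollary is carried by Theorems~\ref{conjunctive_iff_conjunctive_categorial_theorem}, \ref{conjunctive_categorial_to_MALC_transformation_theorem} and~\ref{np_complete_as_h_of_conj_theorem}, and the only things requiring (routine) care are the homomorphism-closure construction and the verification that $L(G)$ avoids $\varepsilon$, so that the equivalence theorem is applicable.
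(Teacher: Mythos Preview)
Your proposal is correct and follows essentially the same route as the paper: invoke Theorem~\ref{np_complete_as_h_of_conj_theorem} to obtain $G$ and $h$, pass $L(G)$ through Theorems~\ref{conjunctive_iff_conjunctive_categorial_theorem} and~\ref{conjunctive_categorial_to_MALC_transformation_theorem} to get a $\MALC$-grammar, and then use closure under symbol-to-symbol homomorphisms. You in fact spell out two points the paper leaves implicit---the verification that $L(G)\subseteq\Sigma^+$ and the explicit construction of the image grammar (which the paper delegates to Kanazawa's observation)---so if anything your version is more self-contained.
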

\begin{proof}
Let $G$ be the grammar
and let $h$ be the homomorphism given by Theorem~\ref{np_complete_as_h_of_conj_theorem}.
Since the language $L(G)$
is described by a conjunctive grammar,
by Theorem~\ref{conjunctive_iff_conjunctive_categorial_theorem},
it is also described by a conjunctive categorial grammar,
and then, by Theorem~\ref{conjunctive_categorial_to_MALC_transformation_theorem},
also by a $\MALC$-grammar.
Next, as observed by Kanazawa~\cite{Kanazawa1992},
its symbol-to-symbol homomorphic image $h(L(G))$
must have a $\MALC$-grammar as well.
\end{proof}

It would be interesting to establish
an unconditional separation of these two classes.
Most likely, this is much easier than to separate P from NP,
because conjunctive grammars should have some limitations
that make certain relatively simple languages non-representable.
However, no such limitations are yet known,
as no method for proving non-existence of conjunctive grammars
for particular languages have been discovered,
see a survey by Okhotin~\cite{BooleanSurvey} for a detailed discussion.
This lack of methods is currently the main open question
in the study of conjunctive grammars,
and any advances on this question
will likely help separating $\MALC$-grammars
from conjunctive grammars.

Returning to $\MALC$ itself, the derivability problem in this calculus is known to be
PSPACE-complete. (This result was presented by Kanovich at the CSL 1994 conference; for the proof, see
Kanovich et al.~\cite{Kanovich2019WoLLIC}.) For languages generated by \emph{concrete} $\MALC$-grammars,
however, by now we only know an NP-hard one (see above). It remains an open problem whether there exists
a $\MALC$-grammar generating a harder language (e.g., a PSPACE-complete one). Interestingly enough, an
there is an analogous gap for the original Lambek grammar formalism. On one hand, as shown by Pentus~\cite{Pentus1993},
every language generated by a fixed Lambek grammar is context-free, thus polynomially decidable. On the other hand,
the derivability problem in the Lambek calculus $\mathbf{L}$ is NP-complete (Pentus~\cite{Pentus2006}).


\section{Adding the Empty String}\label{S:empty}

Unlike conjunctive categorial grammars and $\MALC$-grammars, grammars based on $\MALCs$ are capable of describing languages with the empty string~$\varepsilon$. Indeed, in $\MALCs$ one can derive sequents with empty antecedents, e.g., ${} \to p \SL p$.

In this section, we are going to extend Theorem~\ref{conjunctive_categorial_to_MALC_transformation_theorem} and show that even if a language generated by a conjunctive grammar includes $\varepsilon$, it can still be generated by an appropriate $\MALCs$-grammar. First, we notice that the construction in Theorem~\ref{conjunctive_categorial_to_MALC_transformation_theorem} does not depend on Lambek's non-emptiness restriction: if one replaces $\MALC$ with $\MALCs$, the language remains the same. However, for languages including $\varepsilon$ some extra work should be done, and in this section we show how to do it.

The construction is essentially the same as the one used by Kuznetsov~\cite{Kuzn2012IGPL} to build Lambek grammars for context-free languages with the empty string. However, the proof technique is different, based on direct derivation analysis rather than on proof nets (the version of which used in the aforementioned article does not extend to additives).

\begin{theorem}\label{MALCs_empty_string}
 Any language generated by a conjunctive grammar can be generated by a $\MALCs$-grammar.
\end{theorem}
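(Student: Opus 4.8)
The plan is to reduce the case of a conjunctive language $L$ containing $\varepsilon$ to the already-handled case of languages without the empty string (Theorem~\ref{conjunctive_categorial_to_MALC_transformation_theorem}), by treating $\varepsilon$ separately and ``gluing'' it back in using a feature special to $\MALCs$: the derivability of sequents with empty antecedents, such as ${}\to p\SL p$. First I would split $L$ as $L=(L\setminus\{\varepsilon\})\cup(\{\varepsilon\}\cap L)$. The language $L'=L\setminus\{\varepsilon\}\subseteq\Sigma^+$ is still generated by a conjunctive grammar (drop the $\varepsilon$-rule for $S$, as noted in Section~\ref{section_definitions_conjunction}), hence by Theorem~\ref{conjunctive_iff_conjunctive_categorial_theorem} by a conjunctive categorial grammar, and hence, by Theorem~\ref{conjunctive_categorial_to_MALC_transformation_theorem} (which, as remarked at the start of this section, goes through verbatim with $\MALCs$ in place of $\MALC$), by a $\MALCs$-grammar $\mathcal{G}'$ with some target category $s\in\PC$ and axioms $B_i(a_i)$. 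If $\varepsilon\notin L$ we are done, so assume $\varepsilon\in L$.

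The core step is to modify $\mathcal{G}'$ so that it additionally accepts $\varepsilon$ while accepting no new non-empty strings. The natural device is to change the target category from $s$ to $s'=s\vee(t\SL t)$ for a fresh primitive category $t$ that occurs in no axiom of $\mathcal{G}'$. For a non-empty input $a_{i_1}\cdots a_{i_m}$ acceptance now means $B_{i_1},\dots,B_{i_m}\to s\vee(t\SL t)$ is derivable in $\MALCs$; for the empty input it means ${}\to s\vee(t\SL t)$ is derivable, which holds via $(\to\vee)_2$ from ${}\to t\SL t$ (obtained from the axiom $t\to t$ by $(\to\SL)$, legal in $\MALCs$). So $\varepsilon$ is accepted. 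It remains to check that no spurious non-empty strings are accepted: if $B_{i_1},\dots,B_{i_m}\to s\vee(t\SL t)$ with $m\geqslant1$, I would run a cut-free derivation analysis. By the subformula property of $\MALCs$, the category $t$ appears nowhere in the antecedent side, and one argues that a cut-free derivation of $\Gamma\to s\vee(t\SL t)$ with $\Gamma$ built from $B_i$-formulae that do not mention $t$ must in fact proceed through $(\to\vee)_1$ and derive $\Gamma\to s$ (the alternative, deriving $\Gamma\to t\SL t$, would force $\Gamma,t\to t$, and since $t$ is primitive and absent from $\Gamma$ this requires $\Gamma$ to be empty, contradicting $m\geqslant1$). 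Hence the new grammar accepts exactly $L'\cup\{\varepsilon\}=L$.

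I expect the main obstacle to be precisely this last ``no new non-empty strings'' verification: one has to be careful that the disjunction in the target does not interact with the additive and multiplicative machinery already present in $\mathcal{G}'$ to sneak in extra derivations, and that the auxiliary atom $t$ genuinely cannot appear on the left. The clean way is a careful induction on cut-free $\MALCs$-derivations, using the subformula property to control which subformulae of $s\vee(t\SL t)$ and of the $B_i$'s can occur, and the fact that $t$ is fresh and primitive to pin down that any subderivation with conclusion of the form $\Delta\to t\SL t$ or $\Delta\to t$ forces emptiness constraints incompatible with $m\geqslant1$. Once that lemma is in place, the theorem follows by combining it with the three earlier results as above; everything else is bookkeeping.
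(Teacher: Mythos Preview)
Your plan is sound and would give a shorter proof than the paper's. The paper takes a different route, following Kuznetsov~\cite{Kuzn2012IGPL}: rather than altering the target category, it substitutes the formula $D=\bigl((r\BS r)\BS((t\BS t)\BS q)\bigr)\BS q$ for the primitive $p_{\widetilde S}$ throughout the grammar, and then devotes several lemmas (Lemmas~\ref{Lm:principal}--\ref{Lm:emptymain}) to a ``principal occurrence'' analysis of cut-free derivations of $E,\Pi^D\to q$ in order to show that the substitution adds exactly $\varepsilon$. Your disjunctive target $s\vee(t\SL t)$ with $t$ fresh does the same job more directly. The inductive lemma you actually need is: for any non-empty $\Phi$ built from $\BS$, $\SL$, $\wedge$, and primitives other than $t$, if $\Phi\to s\vee(t\SL t)$ is derivable in $\MALCs$ then so is $\Phi\to s$. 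Note that the hypothesis must be stated for arbitrary such $\Phi$, not only for sequences of the $B_i$'s, because left rules decompose the antecedent; once stated this way the induction is routine (left rules preserve the succedent and are replayed verbatim; $(\to\vee)_1$ finishes immediately). The $(\to\vee)_2$ branch reduces, via invertibility of $(\to\SL)$, to $\Phi,t\to t$, and a separate small induction shows this is underivable for non-empty $t$-free $\Phi$---the only case needing a word is $({\SL}\to)$ where $t$ may be swallowed by the side premise's $\Pi$, but then the main premise has a $t$-free antecedent with succedent $t$, which another easy induction rules out.

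What each approach buys: the paper's construction keeps the grammar in the $\{\BS,\SL,\wedge\}$-fragment, and this is later exploited in Section~\ref{S:distributivity}, where the coincidence of $\MALC$ and $\MALCD$ on that fragment is invoked. Your construction introduces $\vee$ into the target; this is harmless for Theorem~\ref{MALCs_empty_string} as stated, but would not plug directly into the distributivity argument without additional work.
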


As noticed above, the interesting case here is the empty string case. Let the language in question be $L \cup \{ \varepsilon \}$, where $L$ is a language without the empty string, generated by a conjunctive grammar. By Theorem~\ref{conjunctive_iff_conjunctive_categorial_theorem}, $L$ is generated by a conjunctive categorial grammar, and Theorem~\ref{conjunctive_categorial_to_MALC_transformation_theorem} yields a Lambek grammar $G$, which generates $L$ both over $\MALC$ and over $\MALCs$. The latter is what we shall use. Recall that the goal category of this Lambek grammar is a primitive one, namely, $s = p_{\widetilde{S}}$.

Now it remains to modify the grammar so that is accepts the empty string and exactly those non-empty strings which were accepted by the original grammar. This is performed by replacing each occurrence of the primitive category $p_{\widetilde{S}}$ with the following formula:
\[
 D = \bigl( (r\BS r) \BS ((t \BS t) \BS q) \bigr) \BS q,
\]
where $q,r,t$ are fresh variables (primitive
categories).

We claim that this substitution yields the desired $\MALCs$-grammar which generates the language $L \cup \{ \varepsilon \}$. Our argument will heavily depend on the format of the original grammar $G$, i.e., that it appeared by construction coming from Theorems~\ref{conjunctive_iff_conjunctive_categorial_theorem} and~\ref{conjunctive_categorial_to_MALC_transformation_theorem}.

It is easy to see that the sequent ${}\to D$, with the empty antecedent, is derivable in $\MALCs$ (see Kuznetsov~\cite[Lem. 12(1)]{Kuzn2012IGPL}):
\[
\infer[(\to{\BS})]
{{} \to \bigl( (r\BS r) \BS ((t \BS t) \BS q) \bigr) \BS q}
{\infer[({\BS}\to)]{(r\BS r) \BS ((t \BS t) \BS q) \to q}
{\infer[(\to{\BS})]{{} \to r \BS r}{r \to r} & \infer[({\BS}\to)]{(t \BS t) \BS q \to q}{\infer[(\to{\BS})]{{} \to t \BS t}{t \to t} & q \to q}}}
\]
Therefore, $\varepsilon$ belongs to the language generated by our grammar.

Moreover, substituting a complex category for a primitive one ($D$ for $p_{\widetilde{S}}$; here and further we denote the substitution by $[p_{\widetilde{S}} := D]$) in a derivable sequent keeps it derivable. Hence, any string from $L$, being generated by the original grammar, is also generated by the new one. Thus, it remains to show that the new grammar does not generate any extra non-empty string, which does not belong to $L$.
In other words, for any non-empty $\Pi$, which is a sequence of formulae associated to letters by grammar~$G$, we have to prove that if the sequent $\Pi[p_{\widetilde{S}} := D] \to D$ is derivable in $\MALCs$, then so is $\Pi \to p_{\widetilde{S}}$.

For brevity, let us introduce the following notation. Let $E = (r\BS r) \BS ((t \BS t) \BS q)$; thus, $D = E \BS q$. Throughout the rest of this section, $\Pi$, possibly with subscripts, will denote sequences of formulae of the following forms:
\begin{enumerate}
\item $p_{\widetilde{Z}}$;
\item $p_{\widetilde{Y}} \SL (p_{\widetilde{X}_1} \wedge \ldots \wedge p_{\widetilde{X}_k})$, where $\widetilde{X}_i \ne \widetilde{S}$ for each $i$;
\item $((p_{\widetilde{Y}_1} \wedge \ldots \wedge p_{\widetilde{Y}_k}) \BS p_{\widetilde{X}}) \SL (p_{\widetilde{Z}_1} \wedge \ldots \wedge p_{\widetilde{Z}_m})$, where $\widetilde{Y}_i \ne \widetilde{S}$ and
$\widetilde{Z}_j \ne \widetilde{S}$ for each $i,j$;
\item conjunctions of such formulae.
\end{enumerate}
We shall call such formulae {\em simple} ones.
Notice that all formulae which $G$ associates to letters are simple. (In particular, in our translation of a conjunctive grammar to $G$, the non-terminal symbol $\widetilde{S}$ is never used to the left of $\to$, whence $p_{\widetilde{S}}$ never appears under division.) A formula will be called {\em $D$-simple,} if it is obtained from a simple formula by the $[p_{\widetilde{S}} := D]$ substitution.

Let $\Pi^D$ denote $\Pi[p_{\widetilde{S}} := D]$.

We are going to analyse the derivation of $\Pi^D \to D$, aiming to reconstruct  a derivation of $\Pi \to \widetilde{p}_S$ from it (which will accomplish our goal). We shall use the techniques of derivation analysis in the Lambek calculus developed in the works of Safiullin~\cite{Safiullin2007} and Kuznetsov~\cite{Kuznetsov2021RSL}.

First, we recall that the $(\to{\BS})$ rule is invertible using cut:
\[
\infer[(\mathrm{cut})]
{A, \Pi \to B}
{\Pi \to A \BS B & \infer[({\BS}\to)]{A, A \BS B \to B}{A \to A & B \to B}}
\]
Thus, our sequent $\Pi^D \to D$, where $D = E \BS q$, is equiderivable with $E, \Pi^D \to q$.

Second, let us trace the cut-free derivation upwards from the goal sequent, going to the right (``main'') premise at each application of $({\BS}\to)$ and $({\SL}\to)$. Besides $({\BS}\to)$ and $({\SL}\to)$, the only other rule which may appear on this trace is $({\wedge}\to)$. Applications of $(\to{\wedge})$ may also occur in the derivation, but not on the given trace. The $({\wedge}\to)$ rule does not branch the trace. Therefore, the trace reaches exactly one instance of the $q \to q$ axiom. Now we track the left $q$ downwards, to an occurrence of $q$ in the antecedent $E, \Pi^D$. This occurrence will be called the {\em principal} occurrence of $q$. Notice that the principal occurrence depends on the concrete derivation, not just on the fact of derivability.

The following lemma is proved exactly as Lemma~5.19 of Kuznetsov~\cite{Kuznetsov2021RSL}, by induction on the cut-free derivation. Adding conjunction does not significantly change the argument.

\begin{lemma}\label{Lm:principal}
Let each formula in $\Phi$ and $\Psi$ be $E$, $t\BS t$, $r \BS r$, or a $D$-simple one. Let $C$ be either equal to $A_1 \BS \ldots \BS A_k \BS q \SL B_1 \ldots \SL B_m$ or be a conjunction including $A_1 \BS \ldots \BS A_k \BS q \SL B_1 \ldots \SL B_m$ along with some other $D$-simple formulae.
(Here $\BS$ associates to the right and $\SL$ associates to the left.) Finally, let the sequent
\[ \Phi, C, \Psi \to q \]
be derivable in $\MALCs$, with the central occurrence of $q$ in the formula $A_1 \BS \ldots \BS A_k \BS q \SL B_1 \ldots \SL B_m$ inside $C$ being the principal one. Then
\(\Phi = \Phi_k, \ldots, \Phi_1,\)
\(\Psi = \Psi_m, \ldots, \Psi_1,\)
and the following sequents are derivable:
\begin{align*}
& \Phi_1 \to A_1,
\qquad \ldots,
\qquad \Phi_k \to A_k, \\
& \Psi_1 \to B_1,
\qquad \ldots,
\qquad \Psi_m \to B_m.
\end{align*}
\end{lemma}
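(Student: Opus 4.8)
The plan is to prove Lemma~\ref{Lm:principal} by induction on the cut-free derivation of $\Phi, C, \Psi \to q$ in $\MALCs$, tracking what happens to the principal occurrence of $q$ and to the ``spine'' formula $A_1 \BS \ldots \BS A_k \BS q \SL B_1 \ldots \SL B_m$. The key structural observation is that, because the principal occurrence is defined by tracing the main premise of each $({\BS}\to)$ and $({\SL}\to)$ (and passing transparently through $({\wedge}\to)$), the last rule of the derivation, when it acts on the spine formula, can only be one of a very restricted set: either $({\wedge}\to)$ peeling $C$ down towards the spine formula, or $({\BS}\to)$ / $({\SL}\to)$ decomposing the outermost $A_1\BS(\cdots)$ or $(\cdots)\SL B_m$ of the spine. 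In each such case the side premise of the rule is a sequent whose right-hand side is one of the $A_i$ or $B_j$, and whose antecedent is a contiguous block of the $\Phi$ or $\Psi$ formulae; the main premise is a strictly smaller instance of the same situation with one fewer divisor, to which the induction hypothesis applies.

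First I would set up the induction carefully: the induction is on the size of the cut-free derivation, and the statement must be general enough that the antecedent formulae are allowed to be $E$, $t\BS t$, $r\BS r$, or $D$-simple, and $C$ is allowed to be either the bare spine formula or a conjunction containing it. Then I would do the case analysis on the last rule. The base case $k=m=0$: then $C$ is (a conjunction containing) $q$ itself, and the principal occurrence being this $q$ forces, by an analysis of which rules can apply at the root, that $\Phi$ and $\Psi$ are empty and the conclusion is the axiom $q\to q$ — here the crucial point is that none of the other antecedent formulae ($E$, $t\BS t$, $r\BS r$, $D$-simple) has $q$ as a possible ``output'', so they cannot be consumed, and in $\MALCs$ an empty context on the relevant side is allowed. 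For the inductive step: if the last rule is $({\wedge}\to)$ applied inside $C$, recurse immediately. If it is $({\BS}\to)$ whose distinguished formula is the outermost $A_1\BS(\text{rest})$ of the spine, then the conclusion splits as $\Phi', \Pi_{A_1}, A_1\BS(\text{rest}) \SL \cdots, \Psi \to q$ with side premise $\Pi_{A_1}\to A_1$ and main premise $\Phi', (\text{rest})\SL\cdots, \Psi\to q$; apply the induction hypothesis to the main premise to peel off $A_2,\ldots,A_k,B_1,\ldots,B_m$, and combine with the side premise. Symmetrically for $({\SL}\to)$ and $B_m$. The remaining possibility — that the last rule acts on one of the $\Phi$ or $\Psi$ formulae, or is $(\to{\wedge})$ / a right rule — is ruled out or handled by noting that such a rule does not touch the principal occurrence, so we permute it above the induction or observe it cannot be last when the right-hand side is the atom $q$ except via the spine.

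The main obstacle I expect is the bookkeeping needed to show that the principal occurrence of $q$ really does sit in the spine formula throughout, and that the ``other'' antecedent formulae ($E = (r\BS r)\BS((t\BS t)\BS q)$, and $t\BS t$, $r\BS r$, and the $D$-simple ones) never hijack the derivation: $E$ itself contains a $q$, so one must argue that $E$'s inner $q$ can never become the principal occurrence in the situations that arise, or equivalently that whenever $E$ is decomposed it is done in the ``trivial'' way that produces the derivable side sequents $\to r\BS r$ and $\to t\BS t$ and leaves an isolated $q$ which is then immediately axiom-matched. This is precisely the place where the specific shape of $E$ and the fact that $r,t,q$ are fresh are used. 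Since the lemma statement says this is proved exactly as Lemma~5.19 of Kuznetsov~\cite{Kuznetsov2021RSL} with conjunction causing no essential change, I would in the write-up lean on that reference for the routine permutation arguments and spell out only the $({\wedge}\to)$ case to confirm that adding the additive conjunction is harmless — $({\wedge}\to)$ on a spine-containing $C$ just shrinks $C$ towards the spine, and $({\wedge}\to)$ on any other antecedent formula commutes past the trace.
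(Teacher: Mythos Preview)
Your proposal is correct and matches the paper's approach exactly: the paper's proof consists solely of the remark that the lemma ``is proved exactly as Lemma~5.19 of Kuznetsov~\cite{Kuznetsov2021RSL}, by induction on the cut-free derivation,'' with the note that ``adding conjunction does not significantly change the argument''---which is precisely your plan of running the induction on the cut-free derivation, deferring the routine case analysis to that reference, and verifying only that $({\wedge}\to)$ is harmless. One small wording slip to fix when you write it up: your claim in the $k=m=0$ case that ``none of the other antecedent formulae \dots\ has $q$ as a possible output'' is not literally true (both $E$ and the $D$-simple formulae have $q$ as their head), as you yourself note two paragraphs later; the actual reason those formulae cannot be consumed is the one you identify afterwards, namely that their $q$ is by hypothesis not the principal one and decomposing them leaves residues that cannot be absorbed.
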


In particular, if $m = 0$, then $\Psi$ should be empty, and symmetrically for $k$ and $\Phi$.

Let us prove several auxiliary statements.

\begin{lemma}\label{Lm:GammaN}
Let $\Gamma = t \BS t, r \BS r$ and let $\Gamma^n$ be the sequence of $n$ copies of $\Gamma$. Then the sequent $\Gamma^n, E, \Pi^D \to q$ is not derivable if $n \geqslant 2$.
\end{lemma}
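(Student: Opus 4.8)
The plan is to analyse the cut-free derivation of $\Gamma^n, E, \Pi^D \to q$ and show it cannot exist. First I would observe that $\Gamma^n$ is a sequence of $n$ copies of the pair $t \BS t, r \BS r$, and that the antecedent $\Gamma^n, E, \Pi^D$ therefore consists only of formulae of the shapes allowed in Lemma~\ref{Lm:principal}: copies of $t \BS t$, copies of $r \BS r$, the single formula $E = (r\BS r) \BS ((t \BS t) \BS q)$, and $D$-simple formulae coming from $\Pi^D$. So Lemma~\ref{Lm:principal} applies once I locate the principal occurrence of $q$. The key point is that the \emph{only} occurrences of $q$ in the whole antecedent are (i) the one buried inside $E$ and (ii) the ones buried inside the $D$-simple formulae (since $D = E \BS q$ contains $q$, and $p_{\widetilde S}$ was replaced by $D$). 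There is no bare $q$ in the antecedent. Hence the principal occurrence sits inside some antecedent formula $C$ which, after peeling off the outer divisions, must be exactly of the form $A_1 \BS \cdots \BS A_k \BS q \SL B_1 \SL \cdots \SL B_m$ (possibly as a conjunct).

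Next I would do the case split on which formula the principal $q$ lives in.

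\textbf{Case A: the principal occurrence is the $q$ inside $E$.} Here $C = E = (r\BS r) \BS ((t\BS t)\BS q)$, so in the notation of Lemma~\ref{Lm:principal} we have $k = 2$, $m = 0$, with $A_1 = r\BS r$ and $A_2 = t \BS t$ (reading $\BS$ to the right: $E = (r\BS r) \BS (t\BS t) \BS q$). The lemma then forces $\Psi$ to be empty and $\Phi = \Phi_2, \Phi_1$ with $\Phi_1 \to r \BS r$ and $\Phi_2 \to t\BS t$ derivable. But the segment of the antecedent strictly to the \emph{right} of $E$ is $\Pi^D$, which is non-empty (since $\Pi$ is a non-empty sequence of simple formulae); this contradicts $\Psi = \varnothing$. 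So Case~A is impossible — and this already covers the heart of the matter, since it is the substring $\Gamma^n, E$ where the copies of $r\BS r, t\BS t$ are ``used up''.

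\textbf{Case B: the principal occurrence is the $q$ inside some $D$-simple formula $F$ of $\Pi^D$.} Recall a simple formula is either $p_{\widetilde Z}$, or $p_{\widetilde Y}\SL(\cdots)$, or $((\cdots)\BS p_{\widetilde X})\SL(\cdots)$, or a conjunction of these; after the substitution $[p_{\widetilde S}:=D]$ and $D = E\BS q$, a bare occurrence of $q$ appears in $F$ only where $p_{\widetilde S}$ had been substituted. I would trace what $A_1 \BS \cdots \BS A_k \BS q \SL B_1 \SL\cdots\SL B_m$ looks like in each subcase and read off the constraints Lemma~\ref{Lm:principal} imposes on the neighbouring segments $\Phi_i, \Psi_j$ of the antecedent. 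The antecedent outside $F$ is a sequence of the form ``copies of $t\BS t, r\BS r$, the formula $E$, and other $D$-simple formulae''. The point is that $E$, and the copies of $t\BS t$ and $r\BS r$, are each forced by Lemma~\ref{Lm:principal} to reduce (on their own, as a one-element $\Phi_i$ or $\Psi_j$, or as part of such a segment) to one of $A_i$ or $B_j$. Since $A_i$ and $B_j$ are subformulae of the $D$-simple $F$, I need to check that $E$, $t \BS t$, $r \BS r$ can never occur as such subformulae of a $D$-simple formula — and more carefully, that a multi-element segment made of these cannot reduce to such a subformula either. This is where the rigid format of $G$ enters: denominators of divisions in the original grammar are conjunctions of primitive $p_{\widetilde X}$'s, and the substitution only turns $p_{\widetilde S}$ into $D$; so a denominator $B_j$ is either a conjunction of $p_{\widetilde X}$'s (with $p_{\widetilde S}$ possibly replaced by $D$) and never $t\BS t$ or $r\BS r$ or $E$. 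Thus no segment of $\Gamma^n, E$ can be ``absorbed'' by $F$, and since $n \ge 2$ there is strictly more than one $\Gamma$-block left over with nowhere to go, contradicting the decomposition $\Phi = \Phi_k,\ldots,\Phi_1$, $\Psi = \Psi_m,\ldots,\Psi_1$ (the $\Phi_i$ and $\Psi_j$ together exhaust the antecedent outside $C$).

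I expect the main obstacle to be Case~B: one must argue carefully that an entire \emph{contiguous segment} of the antecedent consisting of $t\BS t$'s, $r\BS r$'s, $E$'s, and $D$-simple formulae cannot reduce to one of the $A_i$ or $B_j$ appearing in a $D$-simple formula, rather than just arguing about single formulae. This requires understanding precisely which sequents of the form $\Phi_i \to A_i$ with $\Phi_i$ drawn from this pool are derivable, and using the subformula property of $\MALCs$ together with the fact that $q, r, t$ are fresh and hence cannot interact with the $p_{\widetilde X}$-variables except through $E$ and $D$. A clean way to organise this is to first prove a small normalisation statement: in any derivable $\Phi_i \to A_i$ with $A_i$ a subformula of a simple formula and $\Phi_i$ from the pool, the formula $E$ does not occur in $\Phi_i$ at all, and the copies of $t\BS t, r\BS r$ in $\Phi_i$ must be ``paired off'' against a corresponding $E$ — forcing $\Phi_i$ to contain no leftover $\Gamma$-blocks. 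Combined with Case~A disposing of the $E$ inside the antecedent, this leaves the $2n \ge 4$ formulae of $\Gamma^n$ with no licit placement, giving the contradiction. I would also note the degenerate possibility that $\Pi^D$ reduces trivially and $q$'s principal occurrence somehow escapes both cases — but the enumeration above is exhaustive because those are the only $q$'s in the antecedent, so no further case arises.
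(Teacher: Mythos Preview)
Your proposal has two genuine gaps, one in each case.

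\textbf{Case A.} You assume that $\Pi$ is non-empty, but the lemma makes no such hypothesis, and in fact its only use (inside Lemma~\ref{Lm:emptyPi}) applies it to a $\Pi_1$ that may well be empty. When $\Pi$ is empty your argument collapses: then $\Psi = \Pi^D$ is empty, so there is no contradiction with $m=0$. What you actually need to show in this case is that $\Phi = \Gamma^n$ splits as $\Phi_2, \Phi_1$ with $\Phi_1 \to r\BS r$ and $\Phi_2 \to t\BS t$ both derivable, and then argue that the only contiguous prefixes/suffixes of $\Gamma^n$ with these succedents are the empty sequence or the single formula $r\BS r$ (resp.\ $t\BS t$), forcing $n \leqslant 1$ and hence contradicting $n \geqslant 2$. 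This is the real content of Case~A, and you have skipped it.

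\textbf{Case B.} Your claim that ``$E$, $t\BS t$, $r\BS r$ can never occur as subformulae of a $D$-simple formula'' is simply false for $E$: since $D = E \BS q$, the formula $E$ is a subformula of every $D$-simple formula in which a $p_{\widetilde S}$ was actually replaced. Concretely, if the principal $q$ lies in a $D$-simple conjunct such as $\bigl((p_{\widetilde Y_1}\wedge\ldots)\BS D\bigr)\SL(p_{\widetilde Z_1}\wedge\ldots)$, then in the notation of Lemma~\ref{Lm:principal} one of the $A_i$ is exactly $E$, so Lemma~\ref{Lm:principal} produces a derivable sequent $\Phi' \to E$ with $\Phi'$ a prefix of $\Gamma^n, E, \Pi^D$. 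Your ``no absorption'' strategy therefore cannot work as stated. The paper handles this by observing that $\Phi'$ must contain the antecedent~$E$ (otherwise the $q$ on the right has no match in~$\Phi'$), so $\Phi' = \Gamma^n, E, \Pi_1^D$ with $\Pi_1$ a proper prefix of~$\Pi$; inverting $(\to{\BS})$ twice on $E = (r\BS r)\BS((t\BS t)\BS q)$ yields $\Gamma^{n+1}, E, \Pi_1^D \to q$, and one finishes by \emph{induction on $|\Pi|$}. You have no such induction, and your attempted direct combinatorial count of ``leftover $\Gamma$-blocks'' does not go through precisely because $E$ is available as a target and can swallow the entire prefix $\Gamma^n, E, \Pi_1^D$.
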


\begin{proof}
We proceed by induction on the number of formulae in $\Pi$.
Let us suppose the contrary and  locate the principal occurrence of $q$ and consider two cases.

 {\em Case 1:} the principal occurrence of $q$ is in $E$. Then by Lemma~\ref{Lm:principal} $\Pi$ is empty, and we have the following derivable sequents: $\Phi_1 \to t \BS t$ and $\Phi_2 \to s \BS s$, where $\Phi_1, \Phi_2 = \Gamma^n$.
 The only two subsequences $\Phi_1$ of $\Gamma^n$, which make $\Phi_1 \to t \BS t$ derivable, are the empty one and $t \BS t$ itself; similarly, the only two options for $\Phi_2$ are empty or $r \BS r$. This yields $n = 0$ or $n = 1$. Contradiction.

 {\em Case 2:} the principal occurrence of $q$ is somewhere in $\Pi^D$. In this case the formula where this principal occurrence is located is of one of the following forms: $E \BS q$, $E \BS q \SL C_1$, or $C_2 \BS E \BS q \SL C_1$, where $C_i$ are conjunctions of variables. By Lemma~\ref{Lm:principal}, we have a derivable sequent of the form $\Phi_1 \to E$, where $\Phi_1$ is a prefix of $\Gamma^n, E, \Pi^D$.

 Moreover, $\Phi_1$ should include $E$, because otherwise the $q$ occurrence in $E$ on the right would not have a match on the left. Thus, we have a derivable sequent of the form $\Gamma^n, E, \Pi_1^D \to E$. Recalling that $E = (r \BS r) \BS ((t \BS t) \BS q)$ and inverting $(\to\BS)$, we get derivability of $\Gamma^{n+1}, E, \Pi_1^D \to q$.

 Since $\Pi_1$ is shorter than $\Pi$, the induction hypothesis is applicable. Having $n+1 > n \geqslant 2$, we conclude that this sequent is not derivable. Contradiction.
\end{proof}

\begin{lemma}\label{Lm:emptyPi}
 If $E, \Pi^D \to E$ is derivable, then $\Pi$ is empty.
\end{lemma}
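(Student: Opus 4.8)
The plan is to push the whole problem onto the right-hand atom $q$ and then analyse the principal occurrence of $q$, in the spirit of Lemmas~\ref{Lm:principal} and~\ref{Lm:GammaN}. Applying twice the cut-based invertibility of $(\to{\BS})$ recalled above, and recalling that $E = (r\BS r)\BS\big((t\BS t)\BS q\big)$, the sequent $E, \Pi^D \to E$ is equiderivable with $\Gamma, E, \Pi^D \to q$, where $\Gamma = t\BS t, r\BS r$ is the sequence from Lemma~\ref{Lm:GammaN}; that is, with $\Gamma^1, E, \Pi^D \to q$. I assume the latter is derivable, fix a cut-free derivation, and locate the principal occurrence of $q$ in its antecedent. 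Since neither $t\BS t$ nor $r\BS r$ contains the atom $q$, this occurrence lies either in the distinguished $E$ or inside some formula of $\Pi^D$, and these two cases are treated in turn. (No separate induction on $|\Pi|$ is needed, since Lemma~\ref{Lm:GammaN} already handles $n \geqslant 2$.)

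If the principal occurrence is in the distinguished $E$, then Lemma~\ref{Lm:principal} applies with $C = E = (r\BS r)\BS(t\BS t)\BS q$ (so $k=2$ and $m=0$), $\Phi = \Gamma^1$ and $\Psi = \Pi^D$. As $m = 0$, the conclusion of that lemma forces $\Psi$, i.e.\ $\Pi^D$, to be empty; hence $\Pi$ is empty, which is the desired statement.

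If the principal occurrence is inside a formula $F$ of $\Pi^D$, I derive a contradiction. First, $F$ must contain an occurrence of the atom $q$, and among the formulae of $\Pi^D$ these are exactly the images of the simple formulae that mention $p_{\widetilde{S}}$, because $q$ is a fresh variable and $D = E\BS q$. Tracing the rightmost branch of the derivation --- which passes only through $({\BS}\to)$, $({\SL}\to)$ and $({\wedge}\to)$, always going to the main premise, and therefore never descends into a division denominator --- one sees that the principal $q$ is one of the ``outer'' copies of $q$ introduced by the substitution $[p_{\widetilde{S}} := D]$, so its numerator context inside $F$ has the shape $A_1 \BS \ldots \BS A_k \BS q \SL B_1 \ldots \SL B_m$ with $A_k = E$ (any further $A_i$ being conjunctions of primitive categories distinct from $q$), possibly occurring as a conjunct of $F$. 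Lemma~\ref{Lm:principal}, applied with this $C = F$, then supplies a derivable sequent $\Phi_k \to E$ in which $\Phi_k$ is the leftmost block of the part of the antecedent preceding $F$, hence a prefix of $\Gamma^1, E, \Pi_0^D$ for the corresponding prefix $\Pi_0$ of $\Pi$. Inverting $\Phi_k \to E$ twice gives $\Gamma^1, \Phi_k \to q$, so $\Phi_k$ must contain an occurrence of the atom $q$; since $\Phi_k$ is a prefix of $\Gamma^1, E, \Pi_0^D$ whose first $q$-bearing formula is the distinguished $E$, the block $\Phi_k$ must include $\Gamma^1, E$ in full, say $\Phi_k = \Gamma^1, E, \Pi_1^D$ with $\Pi_1$ a prefix of $\Pi_0$. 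Substituting this into $\Gamma^1, \Phi_k \to q$ yields the derivable sequent $\Gamma^2, E, \Pi_1^D \to q$, which Lemma~\ref{Lm:GammaN} says is not derivable (here $n = 2 \geqslant 2$). This contradiction disposes of the second case, so $\Pi$ is empty.

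The main obstacle is the second case, specifically the purely structural step of pinning down the principal occurrence of $q$ inside $F$: one must argue that the rightmost branch of a cut-free derivation cannot reach the copy of $q$ hidden inside a denominator $E$, so that $E$ is genuinely the innermost left denominator $A_k$ attached to the principal $q$. Once this is secured, the atom-matching argument for $q$ forces $\Phi_k$ to absorb the distinguished $E$ in its entirety, and Lemma~\ref{Lm:GammaN} closes the argument.
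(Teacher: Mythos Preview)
Your proof is correct and follows essentially the same approach as the paper's: invert $(\to{\BS})$ to reduce to $\Gamma, E, \Pi^D \to q$, locate the principal $q$, dispose of Case~1 via Lemma~\ref{Lm:principal}, and in Case~2 extract a derivable sequent $\Phi_k \to E$ whose inversion produces $\Gamma^2, E, \Pi_1^D \to q$, contradicting Lemma~\ref{Lm:GammaN}. The paper's Case~2 simply says ``As in Case~2 of the previous lemma'' and states the resulting sequent $t\BS t, r\BS r, E, \Pi_1^D \to E$ directly; you have unpacked that cross-reference, supplying the justification that the principal $q$ must be the outer numerator $q$ of some $D$-copy (since the rightmost branch never enters a denominator) and the atom-matching argument forcing $\Phi_k$ to contain the distinguished $E$.
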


\begin{proof}
 By inverting $(\to{\BS})$, we get derivability of $t \BS t, r \BS r, E, \Pi \to q$. Let us locate the principal occurrence of $q$.

 {\em Case 1:} the principal occurrence is in $E$. Then $\Pi$ is empty by Lemma~\ref{Lm:principal}.

 {\em Case 2:} the principal occurrence is in $\Pi^D$.  As in Case~2 of the previous lemma, we have a derivable sequent of the form $t \BS t, r \BS r, E, \Pi_1^D \to E$. Inverting $(\to{\BS})$ yields derivability of $t \BS t, r \BS r, t \BS t, r \BS r, E, \Pi_1^D \to q$. However, this sequent is not derivable by Lemma~\ref{Lm:GammaN} (with $n=2$). Contradiction.
\end{proof}

\begin{lemma}\label{Lm:sidesequents}
If $\Pi^D \to p_{\widetilde{Y}_1} \wedge \ldots \wedge p_{\widetilde{Y}_m}$, where $\widetilde{Y}_1 \ne \widetilde{S}$, \ldots, $\widetilde{Y}_m \ne \widetilde{S}$, is derivable, then so is
$\Pi \to p_{\widetilde{Y}_1} \wedge \ldots \wedge p_{\widetilde{Y}_m}$.
\end{lemma}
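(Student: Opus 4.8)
The plan is to first reduce to the case of a single conjunct on the right, and then run an induction on a cut‑free $\MALCs$‑derivation. Since $(\to\wedge)$ is invertible in $\MALCs$ (a consequence of cut elimination), a straightforward induction on $m$ reduces the lemma to the case $m=1$: the sequents $\Pi^D \to p_{\widetilde{Y}_1} \wedge \ldots \wedge p_{\widetilde{Y}_{m-1}}$ and $\Pi^D \to p_{\widetilde{Y}_m}$ are derivable, and once they are handled one recombines them with $(\to\wedge)$ in the original grammar. So from now on the target is $\Pi^D \to p_{\widetilde{Y}}$ with $\widetilde{Y} \ne \widetilde{S}$, whose succedent is a single primitive category distinct from the fresh variables $q,r,t$ and from $p_{\widetilde{S}}$; I would prove this by induction on a cut‑free derivation.

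The second step is to read off the possible bottom rule. No formula occurring in $\Pi^D$ contains $\cdot$ or $\vee$, so $(\cdot\to)$ and $(\vee\to)$ are excluded; the succedent is primitive, so no right rule (in particular no $(\to\wedge)$) can be last. This leaves an axiom, $(\wedge\to)$, $(\SL\to)$, or $(\BS\to)$. The axiom case forces $\Pi^D$ to be the single formula $p_{\widetilde{Y}}$, hence $\Pi$ is the single simple formula $p_{\widetilde{Y}}$, and $\Pi \to p_{\widetilde{Y}}$ is again an axiom. In the $(\wedge\to)$ case the active formula is a top‑level conjunction of $\Pi^D$; since no subformula of $D$ is a conjunction, it is $\Theta^D$ for a genuine simple formula $\Theta = \Theta_1 \wedge \Theta_2$ of type~(4), and applying the induction hypothesis to the premise (with $\Theta_i$ replacing $\Theta$) followed by $(\wedge\to)_i$ in the original grammar closes this case.

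The only simple formulas headed by a division are those of types~(2) and~(3), and their head is $\SL$; moreover the substitution $[p_{\widetilde{S}} := D]$ changes the head of a simple formula only on $p_{\widetilde{S}}$ itself, which it turns into $D = E \BS q$. Hence if the bottom rule is $(\SL\to)$ it acts on a $D$‑simple formula $F^D$ with $F$ of type~(2) or~(3). Decomposing $F^D$ spawns a left premise of the shape $\Pi_0^D \to (p_{\widetilde{X}_1} \wedge \ldots \wedge p_{\widetilde{X}_k})$ with every $\widetilde{X}_i \ne \widetilde{S}$ --- a strictly shorter instance of the present lemma --- and, tracing how the remaining $\BS$‑headed piece of a type‑(3) axiom is later made principal by a $(\BS\to)$, two further premises of the same shape $\Pi_j^D \to (\text{conjunction of }p_{\widetilde{Y}_i}\text{'s with }\widetilde{Y}_i \ne \widetilde{S})$ together with a premise $\Gamma^D, p_{\widetilde{X}}^D, \Delta^D \to p_{\widetilde{Y}}$ over a sequence of $D$‑simple formulas. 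The induction hypothesis applies to each, and reassembling the resulting derivations with $(\SL\to)$ and $(\BS\to)$ (using the admissible cut where convenient) in the original grammar yields $\Pi \to p_{\widetilde{Y}}$. It is cleanest to state the induction hypothesis slightly more generally, permitting the antecedent to carry the $\BS$‑headed subformulas of type‑(3) axioms alongside $D$‑simple formulas; this strengthening is routine.

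The case where the bottom rule is $(\BS\to)$ is the crux. Since no simple formula is $\BS$‑headed, the active formula must be a bare copy of $D = E \BS q$, which can occur in $\Pi^D$ only because $\Pi$ contains a standalone $p_{\widetilde{S}}$. The $(\BS\to)$ making it principal demands a left premise $\Pi_0^D \to E$; inverting $(\to\BS)$ twice rewrites this as $t \BS t,\, r \BS r,\, \Pi_0^D \to q$, a sequent whose non‑derivability follows from the principal‑occurrence analysis of Lemmas~\ref{Lm:principal}, \ref{Lm:GammaN} and~\ref{Lm:emptyPi} (a bare $D$ in the antecedent can never be discharged, since the only rule that can make it principal, $(\BS\to)$, forces an underivable premise). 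Thus a standalone $p_{\widetilde{S}}$ in $\Pi$ makes the hypothesis $\Pi^D \to p_{\widetilde{Y}}$ vacuous, the $(\BS\to)$ case cannot occur, and the induction closes; specialising back to general $m$ gives $\Pi \to p_{\widetilde{Y}_1} \wedge \ldots \wedge p_{\widetilde{Y}_m}$. Accordingly, the main obstacle is precisely the control of the fresh variables $q,r,t$ in this last case --- being certain that a bare $D$ standing in the antecedent can never contribute to deriving a $q,r,t$‑free succedent --- which is exactly what the three preceding lemmas secure.
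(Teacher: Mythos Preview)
Your bottom-up induction is a plausible route, but the $({\BS}\to)$ case rests on a claim not secured by the lemmas you cite. You assert that $t\BS t,\, r\BS r,\, \Pi_0^D \to q$ is underivable ``by Lemmas~\ref{Lm:principal}, \ref{Lm:GammaN} and~\ref{Lm:emptyPi}'', but Lemmas~\ref{Lm:GammaN} and~\ref{Lm:emptyPi} both carry an explicit $E$ in the antecedent; your sequent has none, so neither applies. The claim is true, but it needs its own descent argument (locate the principal $q$, necessarily inside some $D$ in $\Pi_0^D$; Lemma~\ref{Lm:principal} then yields $\Gamma,\Pi_0'^D \to E$ with $\Pi_0'$ a strict prefix; invert to $\Gamma^2,\Pi_0'^D \to q$; iterate until $\Pi_0$ is empty and no $q$ remains on the left). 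Your parenthetical justification---``a bare $D$ \ldots\ forces an underivable premise''---is circular, since the forced premise is again $\Psi \to E$ with $\Psi$ possibly containing bare $D$'s. A second difficulty: once you strengthen the induction hypothesis to admit the residues $(C_2 \BS p_{\widetilde{X}}^D)$ in the antecedent (as your $({\SL}\to)$ discussion requires), you have stepped outside the hypothesis of Lemma~\ref{Lm:principal} itself, which only allows $E$, $t\BS t$, $r\BS r$, and $D$-simple formulae in $\Phi,\Psi$. So both the tool and the underivability statement must be extended before your induction closes; neither extension is hard, but neither is written.

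By contrast, the paper's proof is a single global argument that does not invoke Lemmas~\ref{Lm:GammaN} or~\ref{Lm:emptyPi} at all, and needs no strengthened induction hypothesis. It shows directly that in any cut-free proof of $\Pi^D \to p_{\widetilde{Y}_1}\wedge\ldots\wedge p_{\widetilde{Y}_m}$, \emph{no} occurrence of $D$ is ever principal for $({\BS}\to)$: take the rightmost such $D$ and trace its numerator $q$ upward; the succedent can acquire $q$ only by passing to the left premise of a $({\BS}\to)$ whose denominator contains $q$, i.e.\ only by decomposing a $D$ to the right of the tracked $q$---impossible by the choice of ``rightmost''. Hence the trace never reaches the axiom $q \to q$, a contradiction. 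Once no $D$ is decomposed, each $D$ is inert and may be replaced back by $p_{\widetilde{S}}$, leaving a valid derivation of $\Pi \to p_{\widetilde{Y}_1}\wedge\ldots\wedge p_{\widetilde{Y}_m}$.
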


\begin{proof}
  Though $\Pi$ could include occurrences of $p_{\widetilde{S}}$, we claim that these occurrences are not essential ones. Namely, in the cut-free derivation of $\Pi^D \to p_{\widetilde{Y}_1} \wedge \ldots \wedge p_{\widetilde{Y}_m}$ no occurrence of $D$ (substituted for $p_{\widetilde{S}}$) gets decomposed by $({\BS}\to)$. In other words, each $D$ disappears when $({\wedge}\to)$ is applied, another conjunct being chosen. If we manage to prove this claim, we may replace each $D$ back with $p_{\widetilde{S}}$, keeping the derivation valid. This new derivation proves $\Pi \to p_{\widetilde{Y}_1} \wedge \ldots \wedge p_{\widetilde{Y}_m}$.

  In order to prove the claim, let us suppose the contrary and consider the rightmost $D$ which is decomposed using $({\BS}\to)$ on at least one of the branches. Let us trace the numerator $q$ of this $D$ upwards. In each sequent with this occurrence of $q$, the succedent is not $q$. Indeed, for each application of $({\BS}\to)$ it either goes to the right premise (whose succedent keeps the same, and it is not $q$), or goes to the left premise with a formula of the form $p_{\widetilde{Z}_1} \wedge \ldots \wedge p_{\widetilde{Z}_k}$ as its succedent (each $p_{\widetilde{Z}_i}$ is not $p_{\widetilde{S}}$). Having $q$ as its succedent is impossible, since there is no other $D$, decomposed by $({\BS}\to)$, to the right of our $q$.

  On the other hand, the axiom should be of the form $q \to q$, with $q$ in the succedent. Contradiction.
\end{proof}

Now we are ready to state and prove the key lemma and prove the main theorem of this section.

\begin{lemma}\label{Lm:emptymain}
If $E, \Pi^D \to q$ is derivable, then either so is $\Pi \to p_{\widetilde{S}}$, or $\Pi$ is empty.
\end{lemma}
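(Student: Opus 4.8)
The plan is to take a cut-free derivation of $E,\Pi^D\to q$ in $\MALCs$, locate the principal occurrence of $q$ in the antecedent $E,\Pi^D$ exactly as in the paragraph preceding Lemma~\ref{Lm:principal}, and argue by cases on where this occurrence sits. If it lies inside the prepended formula $E=(r\BS r)\BS((t\BS t)\BS q)$, I apply Lemma~\ref{Lm:principal} with $C=E$: since nothing stands to the left of $E$ the pieces $\Phi_1,\Phi_2$ must be empty, so the side sequents become the derivable ${}\to r\BS r$ and ${}\to t\BS t$, and the absence of right divisions in $E$ forces the suffix $\Pi^D$ to be empty; hence $\Pi$ is empty, which is the second alternative of the statement.

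Now suppose the principal $q$ is in $\Pi^D$. A short polarity observation shows that it can only be the numerator $q$ of one of the copies of $D=E\BS q$ that were substituted for $p_{\widetilde{S}}$: the other $q$ inside each such $D$ lies in the denominator $E$, occurs positively, and so is never reached when tracing the left premise of a $q\to q$ axiom downwards. Let $C$ be the $D$-simple formula of $\Pi^D$ containing this copy of $D$. Going through shapes~(1)--(3) of simple formulae, the conjunct of (the underlying simple formula of) $C$ carrying this $D$ must, after $[p_{\widetilde{S}}:=D]$, be one of $E\BS q$ (from shape~(1), the conjunct being $p_{\widetilde{S}}$), $(E\BS q)\SL(p_{\widetilde{X}_1}\wedge\dots)$ (from shape~(2), all $\widetilde{X}_i\ne\widetilde{S}$), or $((p_{\widetilde{Y}_1}\wedge\dots)\BS(E\BS q))\SL(p_{\widetilde{Z}_1}\wedge\dots)$ (from shape~(3), all $\widetilde{Y}_i,\widetilde{Z}_j\ne\widetilde{S}$). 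In each case this is of the shape $A_1\BS\dots\BS A_k\BS q\SL B_1\dots\SL B_m$ required by Lemma~\ref{Lm:principal}, with innermost left denominator $E$, with $m=0$ for shape~(1) and $m=1$ otherwise, and with the right denominator and (for shape~(3)) the outer left denominator being conjunctions of primitive categories all distinct from $p_{\widetilde{S}}$. Applying Lemma~\ref{Lm:principal} then writes the antecedent as $E,\Pi_A^D$ (the part to the left of $C$), then $C$, then $\Pi_B^D$ (empty in shape~(1)), and returns the side sequents, among which some prefix of $E,\Pi_A^D$ derives $E$.

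From these pieces I reconstruct a derivation of $\Pi\to p_{\widetilde{S}}$. In shapes~(1) and~(2) the prefix deriving $E$ is all of $E,\Pi_A^D$, so Lemma~\ref{Lm:emptyPi} makes $\Pi_A$ empty. In shape~(3) it is the leftmost piece; since every non-empty prefix of $E,\Pi_A^D$ begins with the prepended $E$, and the empty prefix is excluded because ${}\to E$ --- which, inverting $(\to\BS)$ twice, reads $t\BS t,r\BS r\to q$ --- is underivable, this piece is $E$ followed by a prefix of $\Pi_A^D$, and by Lemma~\ref{Lm:emptyPi} that prefix must be empty; so the piece equals $E$ and the remaining part $\Pi_A^D$ derives the outer left denominator of $C$, which by Lemma~\ref{Lm:sidesequents} (its indices being $\ne\widetilde{S}$) gives a derivation of $\Pi_A\to p_{\widetilde{Y}_1}\wedge\dots$ over the original grammar. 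Likewise, in shapes~(2) and~(3) the suffix $\Pi_B^D$ derives the right denominator of $C$, and Lemma~\ref{Lm:sidesequents} turns this into $\Pi_B\to p_{\widetilde{X}_1}\wedge\dots$ (respectively $\Pi_B\to p_{\widetilde{Z}_1}\wedge\dots$); in shape~(1), $\Pi_B$ is already empty. Reversing $[p_{\widetilde{S}}:=D]$ in $C$ and assembling: in shape~(1) $\Pi$ is just the single formula $p_{\widetilde{S}}$, so $\Pi\to p_{\widetilde{S}}$ is an axiom; in shape~(2) $\Pi=(p_{\widetilde{S}}\SL(p_{\widetilde{X}_1}\wedge\dots)),\Pi_B$ and one application of $(\SL\to)$ against $p_{\widetilde{S}}\to p_{\widetilde{S}}$ gives $\Pi\to p_{\widetilde{S}}$; in shape~(3) $\Pi=\Pi_A,((p_{\widetilde{Y}_1}\wedge\dots)\BS p_{\widetilde{S}})\SL(p_{\widetilde{Z}_1}\wedge\dots),\Pi_B$ and $(\SL\to)$ followed by $(\BS\to)$, both against $p_{\widetilde{S}}\to p_{\widetilde{S}}$, gives $\Pi\to p_{\widetilde{S}}$; finally, if $C$ is a conjunction one first applies $(\wedge\to)$ to select the conjunct carrying the principal $q$ and then argues as above.

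The assembly and the shape-by-shape bookkeeping are routine. The main obstacle is making the second step precise enough to invoke Lemma~\ref{Lm:principal}: one must confirm that the principal $q$ really is the ``central'' $q$ of the extracted subformula of $C$, and that every formula occurring in the antecedents of the auxiliary sequents is of an admissible kind --- $E$, $t\BS t$, $r\BS r$, or $D$-simple --- so that Lemmas~\ref{Lm:principal}, \ref{Lm:GammaN}, \ref{Lm:emptyPi} and~\ref{Lm:sidesequents} genuinely apply. As with those lemmas, this is handled by cut-free derivation analysis in the style of Safiullin~\cite{Safiullin2007} and Kuznetsov~\cite{Kuznetsov2021RSL}, so the argument needs care but no essentially new technique.
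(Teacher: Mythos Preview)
Your proposal is correct and follows essentially the same approach as the paper's proof: locate the principal occurrence of $q$, split into the two cases ``in $E$'' versus ``in $\Pi^D$'', and in the latter case apply Lemma~\ref{Lm:principal} to extract the side sequents $E,\Pi_1^D\to E$, $\Pi_2^D\to p_{\widetilde{Y}_1}\wedge\dots$, $\Pi_3^D\to p_{\widetilde{Z}_1}\wedge\dots$, then use Lemmas~\ref{Lm:emptyPi} and~\ref{Lm:sidesequents} to clean them up and assemble $\Pi\to p_{\widetilde{S}}$. The only difference is presentational: the paper writes the $D$-simple conjunct in Case~2 uniformly as $(p_{\widetilde{Y}_1}\wedge\dots)\BS(E\BS q)\SL(p_{\widetilde{Z}_1}\wedge\dots)$ (implicitly allowing the outer conjunctions to be absent), whereas you spell out shapes~(1)--(3) separately and make the polarity reason why the principal $q$ must be the numerator of some $D$ explicit.
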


\begin{proof}
 Consider a cut-free derivation of $E, \Pi^D \to q$ and let us locate the principal occurrence of $q$. Consider two cases.

 {\em Case 1:} the principal occurrence of $q$ is in $E$. Then $\Pi$ is empty by Lemma~\ref{Lm:principal}.

 {\em Case 2:} the principal occurrence of $q$ is in $\Pi^D$. Let the formula with this principal occurrence be  $A = (p_{\widetilde{Y}_1} \wedge \ldots \wedge p_{\widetilde{Y}_k}) \BS (E \BS q) \SL (p_{\widetilde{Z}_1} \wedge \ldots \wedge \wedge p_{\widetilde{Z}_m})$, which is an element of a conjunction $B$ in $\Pi$. By Lemma~\ref{Lm:principal}, we have the following derivable sequents:
 \begin{align*}
  & E, \Pi_1^D \to E, \\
  & \Pi_2^D \to p_{\widetilde{Y}_1} \wedge \ldots \wedge p_{\widetilde{Y}_k}, \\
  & \Pi_3^D \to p_{\widetilde{Z}_1} \wedge \ldots \wedge p_{\widetilde{Z}_m}.
 \end{align*}
Here $\Pi_3$ is the part of $\Pi$ to the right of $A$, and $\Pi_1, \Pi_2$ is the one to the left. (Since ${}\to E$ is not derivable, the antecedent $\Psi_1$ for the sequent with succedent $E$ should include the leftmost formula $E$.)

By Lemma~\ref{Lm:emptyPi}, $\Pi_1$ is empty.
By Lemma~\ref{Lm:sidesequents}, the following sequents are derivable:
\begin{align*}
  & \Pi_2 \to p_{\widetilde{Y}_1} \wedge \ldots \wedge p_{\widetilde{Y}_k}, \\
  & \Pi_3 \to p_{\widetilde{Z}_1} \wedge \ldots  \wedge p_{\widetilde{Z}_m}.
\end{align*}

Now we have $\Pi = \Pi_2, B', \Pi_3$, where $B = B'[p_{\widetilde{S}} := D]$. In particular, the conjunction $B'$ includes the formula $A' = (p_{\widetilde{Y}_1} \wedge \ldots \wedge p_{\widetilde{Y}_k}) \BS p_{\widetilde{S}} \SL (p_{\widetilde{Z}_1} \wedge \ldots \wedge p_{\widetilde{Z}_m})$.

Now the goal sequent is derived as follows:
\[
\infer
{\Pi_2, B', \Pi_3 \to p_{\widetilde{S}}}
{\infer[({\wedge}\to)]{\ldots}
{\infer[({\SL}\to)]{\Pi_2, (p_{\widetilde{Y}_1} \wedge \ldots \wedge p_{\widetilde{Y}_k}) \BS p_{\widetilde{S}} \SL (p_{\widetilde{Z}_1} \wedge \ldots \wedge  p_{\widetilde{Z}_m}), \Pi_3 \to
p_{\widetilde{S}}
}{\Pi_3 \to p_{\widetilde{Z}_1} \wedge \ldots \wedge  p_{\widetilde{Z}_m} &
\infer[({\SL}\to)]{
\Pi_2, (p_{\widetilde{Y}_1} \wedge \ldots \wedge p_{\widetilde{Y}_k}) \BS p_{\widetilde{S}} \to
p_{\widetilde{S}}
}
{\Pi_2 \to p_{\widetilde{Y}_1} \wedge \ldots \wedge p_{\widetilde{Y}_k} &
p_{\widetilde{S}} \to p_{\widetilde{S}}}
}}}
\]
\end{proof}

\begin{proof}[Proof of Theorem~\ref{MALCs_empty_string}]
 Let $L$ be a language generated by a conjunctive grammar. If $\varepsilon \notin L$, then by Theorems~\ref{conjunctive_iff_conjunctive_categorial_theorem} and~\ref{conjunctive_categorial_to_MALC_transformation_theorem} we
 construct a $\MALC$-grammar $G$ for $L$. As noticed above, replacing $\MALC$ by $\MALCs$ does not change the language described by $G$. Hence, $G$ is the desired $\MALCs$-grammar.

 If $\varepsilon \notin L$, consider the language $L - \{ \varepsilon \}$ and apply Theorems~\ref{conjunctive_iff_conjunctive_categorial_theorem} and~\ref{conjunctive_categorial_to_MALC_transformation_theorem} to this language, yielding a $\MALC$-grammar $G$. Consider $G$ as a $\MALCs$-grammar and substitute $D$ for $p_{\widetilde{S}}$ everywhere in this grammar: $G^D = G[p_{\widetilde{S}} := D]$. Since the sequent ${} \to D$ is derivable, $\varepsilon$ belongs to the language described by $G'$.

 Now consider a non-empty string and let $\Pi$ be the sequence of formulae, assigned to this string by the original grammar $G$. The corresponding sequence for the new grammar~$G'$ is $\Pi^D$. If $\Pi \to p_{\widetilde{S}}$ were derivable, then so is $\Pi^D \to D$. Therefore, $L - \{ \varepsilon \}$ is a subset of the language described by $G^D$. On the other hand, if $\Pi^D \to D$ were derivable, then, by inverting $(\to{\BS})$, so is
 $E, \Pi^D \to q$. By Lemma~\ref{Lm:emptymain}, we get derivability of $\Pi \to p_{\widetilde{S}}$ (recall that $\Pi$ is non-empty). Hence, any non-empty string of the language described by $G'$ belongs to
 $L - \{ \varepsilon \}$.

 Thus, $G^D$ describes exactly the language $(L - \{ \varepsilon \}) \cup \{ \varepsilon \} = L$.
\end{proof}


\section{Lambek Grammars with Disjunction}\label{S:disjunction}

The translation from conjunctive categorial grammars to grammars based on $\MALC$ (i.e., Lambek grammars with additives) actually uses only one additive operation, namely conjunction. This is unsurprising, since this operation exactly corresponds to the conjunction operation used in conjunctive categorial grammars.

In this section, however, we show that a dual result also holds. Namely, for any conjunctive categorial grammar there exists a Lambek grammar with additive disjunction (but without additive conjunction) which describes the same language.

The idea is based on the following equivalence:
\[
 (A_1 \BS B) \wedge \ldots \wedge (A_n \BS B) \equiv
 (A_1 \vee \ldots \vee A_n) \BS B,
\]
which allows replacing conjunction by disjunction. As usual, $E \equiv F$ means simultaneous derivability of $E \to F$ and $F \to E$. If $E \equiv F$, then replacing $E$ with $F$ in a sequent derivable in $\MALC$ keeps its derivability, i.e., $\MALC$ enjoys the property of equivalent replacement.

Unfortunately, in the conjunctions we want to simulate using disjunction, the formulae are not of the form $A_i \BS B$ with the same $B$. In order to fit the formulae into the given shape, we shall use the ``relative double-negation'' construction (see Buszkowski~\cite{Buszkowski2007}, Kanovich et al.~\cite{Kanovich2019WoLLIC,Kanovich2022IC}):
\[
 A^{ff} = (A \BS f) \BS f,
\]
where $f$ is a variable. (In a ``real'' intuitionistic-style negation, $f$ should be the falsity constant; however, in $\MALC$ we have no such constant.)

The crucial property of this construction is given by the following lemma.
\begin{oldlemma}[{Kanovich et al.~\cite[Lem.~2]{Kanovich2019WoLLIC}}]\label{Lm:doublenegation}
 If $f$ does not occur in $A_1, \ldots, A_n \to B$, then this sequent is equiderivable with $A_1^{ff}, \ldots, A_n^{ff} \to B^{ff}$.
\end{oldlemma}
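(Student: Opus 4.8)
The statement is a known result of Kanovich et al., and the plan is to reprove it by the same kind of cut-free derivation analysis already used in this paper (e.g. for Lemma~\ref{Lm:principal}), treating the two implications separately. Write $A^{ff} = (A\BS f)\BS f$ throughout, and recall that $(\to{\BS})$ is invertible using cut.

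For the ``$\Rightarrow$'' direction, given a derivation of $A_1, \ldots, A_n \to B$, one builds a derivation of $A_1^{ff}, \ldots, A_n^{ff} \to B^{ff}$ mechanically, not using the freshness of $f$. Applying $(\to{\BS})$ bottom-up reduces the goal to $B\BS f, A_1^{ff}, \ldots, A_n^{ff} \to f$. Now decompose the antecedent from right to left: $({\BS}\to)$ applied to $A_n^{ff} = (A_n\BS f)\BS f$, with the trivial right premise $f \to f$, leaves the goal $B\BS f, A_1^{ff}, \ldots, A_{n-1}^{ff} \to A_n\BS f$, and $(\to{\BS})$ applied bottom-up moves $A_n$ to the head of the antecedent: $A_n, B\BS f, A_1^{ff}, \ldots, A_{n-1}^{ff} \to f$. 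Iterating over $A_{n-1}^{ff}, \ldots, A_1^{ff}$, each wrapper deposits its $A_i$ at the head, and after $n$ rounds the goal becomes $A_1, \ldots, A_n, B\BS f \to f$, which follows from the given derivation of $A_1, \ldots, A_n \to B$ by one application of $({\BS}\to)$ against $f \to f$. Every intermediate antecedent here is non-empty, so the construction works in $\MALC$ as well as in $\MALCs$.

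For the ``$\Leftarrow$'' direction — the one that genuinely needs $f$ to be absent from $A_1, \ldots, A_n, B$ — I would analyse a cut-free derivation of $A_1^{ff}, \ldots, A_n^{ff} \to B^{ff}$. Inverting $(\to{\BS})$ yields $B\BS f, A_1^{ff}, \ldots, A_n^{ff} \to f$, a sequent whose succedent is the variable $f$, and I would track the occurrences of $f$ exactly as in the ``principal occurrence'' analysis above. Since $f$ does not occur inside any $A_i$ or $B$, the only occurrences of $f$ in the antecedent sit in the three levels of the double-negation wrappers, and in a cut-free derivation these must be decomposed in the order dictated by the forward construction: no subformula among the $A_i$ or $B$ can be ``stranded'' on the left premise of a left rule, for that premise would then reduce to something of the form $\Gamma, f \to f$ with non-empty $f$-free $\Gamma$, or $\to C$ with $C$ built from $f$-free material, none of which is derivable. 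This forces the derivation to be, up to permutation of independent steps, the mirror image of the forward derivation, so that a derivation of $A_1, \ldots, A_n \to B$ can be read off from it. I would package this as an induction on the cut-free derivation, strengthening the statement to permit arbitrary $f$-free side contexts, so that applications of $({\wedge}\to)$, $(\to{\wedge})$, $({\vee}\to)$, $(\to{\vee})$ coming from additive connectives inside the $A_i$ or $B$ are absorbed uniformly.

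The main obstacle is making this ``$\Leftarrow$'' induction airtight: one must rule out every way in which the $f$-occurrences inside the wrappers could interact with the additive connectives or with one another to produce a non-canonical derivation shape, and one must pick an induction hypothesis general enough that each case reduces to a strictly smaller instance. This is precisely the content of Kanovich et al.~\cite[Lem.~2]{Kanovich2019WoLLIC}, and its proof (or, alternatively, the derivation-analysis machinery of Safiullin~\cite{Safiullin2007} and Kuznetsov~\cite{Kuznetsov2021RSL} already employed in this paper) can be invoked directly.
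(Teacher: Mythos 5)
Your proposal is in line with the paper, which gives no proof of this lemma at all: it is stated as an imported result, attributed to Kanovich et al.~\cite[Lem.~2]{Kanovich2019WoLLIC}, so your final fallback to that citation for the hard direction is exactly what the paper does. Your forward construction is correct and self-contained: peeling the wrappers from right to left with $({\BS}\to)$ against $f \to f$ followed by $(\to{\BS})$ does deposit $A_n, \ldots, A_1$ at the head in the right order, ending in $A_1, \ldots, A_n, B \BS f \to f$; and this explicit construction is genuinely needed, since in the non-commutative setting $A \to (A \BS f)\BS f$ is \emph{not} derivable, so there is no shortcut via a ``double-negation introduction'' sequent. For the converse direction, your principal-occurrence sketch is plausible, but all the real work (excluding stranded $f$-free left premises, taming the additive branching) is precisely the content of the cited lemma, as you acknowledge. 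If you want a self-contained argument, a cleaner route is already available from results quoted in this paper: translate $A_1^{ff}, \ldots, A_n^{ff} \to B^{ff}$ into $\MACLL$ via Theorem~\ref{MALC-into-MACLL}, substitute the multiplicative constant $\bot$ for the fresh variable $f$ (substitution preserves derivability), note that $\widehat{F^{ff}} = (\bar{f} \otimes \widehat{F}) \Par f$ then collapses to a formula equivalent to $\widehat{F}$ (and likewise inside the negated antecedent formulae, since equivalent replacement is admissible in $\MACLL$), and pull the resulting sequent back through the conservative embedding to obtain $A_1, \ldots, A_n \to B$ in $\MALCs$. This is exactly the trick the paper itself employs in the ``if'' part of Theorem~\ref{conjunctive_categorial_to_disjunction}, and it sidesteps the delicate derivation-shape induction you identify as the main obstacle.
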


As in Theorem~\ref{conjunctive_categorial_to_MALC_transformation_theorem} above, let the given conjunctive categorial grammar have the following axiomatic propositions:
$$\begin{matrix}
A_{1,1} (a_1), & A_{1,2}(a_1), & \ldots & A_{1,k_1}(a_1),\\
A_{2,1} (a_2), & A_{2,2}(a_2), & \ldots & A_{2,k_2}(a_2),\\
\vdots\\
A_{n,1} (a_n), & A_{n,2}(a_n), & \ldots & A_{n,k_n}(a_n),
\end{matrix}$$
and let $s$ be its target category.
Let $f$ be a fresh variable, not occurring in the formulae $A_{i,j}$, and $f \ne s$. This variable will be used in the relative double negation constructions.

Let us replace each variable occurrence $p$ in each $A_{i,j}$ with its relative double negation $p^{ff}$, and let us denote the resulting formula by $A'_{i,j}$.  Since $A_{i,j} \in \Cat$, it includes the conjunction operation only in conjuncts of the form $p_1 \wedge \ldots \wedge p_k$. After the aforementioned substitution, such a conjunct becomes $p^{ff}_1 \wedge \ldots \wedge p^{ff}_k$, which is equivalent (in $\MALC$) to the formula $((p_1 \BS f) \vee \ldots \vee (p_k \BS f)) \BS f$. The latter uses disjunction instead of conjunction. Let us apply the equivalent replacement of each $p^{ff}_1 \wedge \ldots \wedge p^{ff}_k$ with $((p_1 \BS f) \vee \ldots \vee (p_k \BS f)) \BS f$ in $A'_{i,j}$ and obtain the formula denoted by $A''_{i,j}$.

Next, define $B'_i = (A'_{i,1} \wedge \ldots \wedge A'_{i,k_i})^{ff}$, for each $i$. Again, $B'_i$ is equivalent to a formula with only disjunction, not conjunction: $B'_i \equiv B''_i =
((A''_{i,1} \BS f) \vee \ldots \vee
(A''_{i,k_i} \BS f)) \BS f$.

This construction yields the desired equivalence theorem.

\begin{theorem}\label{conjunctive_categorial_to_disjunction}
The Lambek grammar with atomic propositions $B''_i(a_i)$, for $i = 1, \ldots, n$, and the goal category $(s^{ff})^{ff}$ describes the same language as the original conjunctive categorial grammar. Moreover, categories of this Lambek grammar use only $\BS$, $\SL$, and $\vee$, but not $\wedge$.
\end{theorem}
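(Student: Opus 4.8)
The plan is to reduce Theorem~\ref{conjunctive_categorial_to_disjunction} to the already-established Theorem~\ref{conjunctive_categorial_to_MALC_transformation_theorem} by a chain of $\MALC$-equivalences, so that essentially no fresh derivation analysis is required. Recall that Theorem~\ref{conjunctive_categorial_to_MALC_transformation_theorem} says: the $\MALC$-grammar with atomic propositions $B_i(a_i)$, where $B_i = A_{i,1} \wedge \ldots \wedge A_{i,k_i}$, and goal category $s$, generates exactly the language of the conjunctive categorial grammar. So it suffices to show that, for every string $a_{i_1} \dots a_{i_m}$,
\[
B_{i_1}, \dots, B_{i_m} \to s
\quad\text{is derivable in }\MALC
\iff
B''_{i_1}, \dots, B''_{i_m} \to (s^{ff})^{ff}
\quad\text{is derivable in }\MALC,
\]
and then observe that the right-hand sequent uses only formulae built from $\BS, \SL, \vee$ (and the atoms, including $f$), never $\wedge$.

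The first step is to record the two algebraic equivalences on which everything rests, and check they are genuine $\MALC$-theorems. One is the disjunction/conjunction swap $(A_1 \vee \ldots \vee A_n) \BS B \equiv (A_1 \BS B) \wedge \ldots \wedge (A_n \BS B)$, already quoted in the text; a two-line derivation in each direction, using $(\vee\to)$, $(\to\wedge)$, $(\wedge\to)$, $({\BS}\to)$, suffices, and this is routine. Combined with it, we need the instance $p_1^{ff} \wedge \ldots \wedge p_k^{ff} \equiv ((p_1 \BS f) \vee \ldots \vee (p_k \BS f)) \BS f$: since $p_j^{ff} = (p_j \BS f)\BS f$, the right-hand side is exactly $\bigl(\bigvee_j (p_j\BS f)\bigr)\BS f$, so this is just the swap equivalence with $B = f$ and $A_j = p_j \BS f$. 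Then $\MALC$'s equivalent-replacement property (which follows from cut admissibility — if $E\equiv F$ then any derivable sequent remains derivable after replacing an occurrence of $E$ by $F$) lets us pass freely between $A'_{i,j}$ and $A''_{i,j}$, and between $B'_i$ and $B''_i$. Thus the sequent with $B''$'s and goal $(s^{ff})^{ff}$ is equiderivable with the one using $B'_i = (A'_{i,1}\wedge\ldots\wedge A'_{i,k_i})^{ff}$ on the left and $(s^{ff})^{ff}$ on the right, with all formulae still the ``double-negated'' $A'_{i,j}$ rather than the raw $A_{i,j}$.

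The second step is to strip the double negations. Here I invoke Lemma~\ref{Lm:doublenegation} (Kanovich et al.): if $f$ does not occur in $C_1,\dots,C_\ell \to E$, then that sequent is equiderivable with $C_1^{ff},\dots,C_\ell^{ff}\to E^{ff}$. Apply it with $C_j = B_{i_j} = A_{i_j,1}\wedge\ldots\wedge A_{i_j,k_{i_j}}$ and $E = s^{ff}$ (note $f$ occurs in neither, since $f$ is fresh and $f\ne s$): the sequent $B_{i_1},\dots,B_{i_m}\to s^{ff}$ is equiderivable with $B_{i_1}^{ff},\dots,B_{i_m}^{ff}\to (s^{ff})^{ff}$. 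And one more application of Lemma~\ref{Lm:doublenegation}, now with $E = s$, gives $B_{i_1},\dots,B_{i_m}\to s$ equiderivable with $B_{i_1}^{ff},\dots,B_{i_m}^{ff}\to s^{ff}$. The remaining gap is between $B_i^{ff} = (A_{i,1}\wedge\ldots\wedge A_{i,k_i})^{ff}$ and $B'_i = (A'_{i,1}\wedge\ldots\wedge A'_{i,k_i})^{ff}$: these differ only in that inside each $A'_{i,j}$ every variable $p$ has been replaced by $p^{ff}$. Since $A_{i,j}\in\Cat$ contains no $f$, Lemma~\ref{Lm:doublenegation} applied variable-by-variable (or its natural ``deep'' form) yields $A_{i,j}\equiv A'_{i,j}$ in the appropriate relativized sense — more precisely, one shows by induction on the structure of $A\in\Cat\cup\Conj$ that $A$ and $A'$ (its double-negation translation) satisfy: $\Gamma\to A$ is derivable iff the $f$-translated $\Gamma'\to A'$ is — and hence $B_i \equiv$ (under $f$-translation) $B'_i$. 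Chaining all these equivalences yields the displayed biconditional, and the ``only $\vee$'' clause is immediate from the explicit form of $B''_i$ and the observation that $(s^{ff})^{ff} = ((((s\BS f)\BS f)\BS f)\BS f)$ uses only $\BS$.

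The step I expect to be the main obstacle is the last one: cleanly justifying $A_{i,j}\equiv A'_{i,j}$ at every nesting depth, i.e.\ that the double-negation translation commutes with the $\BS$/$\SL$ structure of categories in $\Cat$. Lemma~\ref{Lm:doublenegation} as stated is about replacing whole antecedent formulae, not variables buried under several divisions, so one must either cite the ``iterated/relativized double negation'' form of that lemma from Kanovich et al.\ or prove the needed structural induction directly (the inductive cases being $(\conjcat\BS A)' = \conjcat' \BS A'$ and symmetrically, together with the base $p' = p^{ff}$ and the conjunct case $(p_1\wedge\ldots\wedge p_k)' = p_1^{ff}\wedge\ldots\wedge p_k^{ff}$). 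Care is also needed that $f$ never collides with any variable actually used — this is exactly why $f$ is chosen fresh and $f\ne s$ — so that every application of Lemma~\ref{Lm:doublenegation} is legitimate; once that bookkeeping is in place, the rest is a mechanical chain of equiderivabilities resting on Theorem~\ref{conjunctive_categorial_to_MALC_transformation_theorem}.
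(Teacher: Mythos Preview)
Your forward (``only if'') direction is essentially the paper's argument---substitution $p\mapsto p^{ff}$ followed by the easy half of Lemma~\ref{Lm:doublenegation}---and can be made to work. The backward (``if'') direction, however, has a genuine gap, in two places. First, the application of Lemma~\ref{Lm:doublenegation} with $E = s^{ff}$ is illegitimate: $f$ visibly occurs in $s^{ff} = (s\BS f)\BS f$ (freshness of $f$ and $f\ne s$ guarantee only that $f$ is absent from $s$, not from $s^{ff}$), and the backward half of the lemma---the one needed to \emph{strip} a layer of $^{ff}$---genuinely requires that hypothesis. Second, and more seriously, $A_{i,j}\not\equiv A'_{i,j}$ in $\MALC$, so $B_i^{ff}\not\equiv B'_i$: already for $A=p$ the sequent $p^{ff}\to p$ is underivable (one-step cut-free search). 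You retreat to the ``relativized'' claim that $\Gamma\to A$ is derivable iff $\sigma(\Gamma)\to\sigma(A)$ is, where $\sigma$ double-negates every variable; but your proposed structural induction on $A$ is circular---each inductive step merely unwinds one $\BS$ or $\SL$ and lands back on the base case, which is precisely ``undo $\sigma$'', for which you have no argument.

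The paper's ``if'' direction uses a genuinely different tool. It embeds the sequent into the one-sided classical system $\MACLL$, which, unlike $\MALC$, possesses the multiplicative falsity constant $\bot$. Under the translation one has $\widehat{F^{ff}} = (\bar f\otimes\widehat F)\Par f$; substituting $\bot$ for $f$ turns this into $(1\otimes\widehat F)\Par\bot \equiv \widehat F$, so \emph{all} double negations---the outer one on each $B'_i$, the inner ones coming from $p\mapsto p^{ff}$, and those on the target $(s^{ff})^{ff}$---collapse in one stroke, recovering the $\MACLL$-image of $B_{i_1},\ldots,B_{i_m}\to s$. Conservativity of the embedding (Theorem~\ref{MALC-into-MACLL}) then returns this sequent to $\MALCs$, and a final observation handles the $\MALC$/$\MALCs$ distinction. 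This $\bot$-substitution is the missing idea; it cannot be replicated by iterating Lemma~\ref{Lm:doublenegation} inside $\MALC$, because $\MALC$ has no constant that makes $^{ff}$ transparently removable.
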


Before proving this theorem, let us recall the definition of another logical system, namely, {\em multiplicative-additive cyclic linear logic,} denoted by $\MACLL$. This logic is a non-commutative variant of Girard's linear logic (see Girard~\cite{Girard1987}, Yetter~\cite{Yetter}). Moreover, $\MACLL$ does not include exponential operations of linear logic, because we shall not need them in our arguments. We formulate $\MACLL$, following Kanovich et al.~\cite{Kanovich2019}, as a one-sided sequent calculus with tight negations.

{\em Atoms} are variables ($p,q,r,\ldots$, corresponding to primitive categories) and their negations: $\bar{p}, \bar{q}, \bar{r}, \ldots$ Formulae of $\MACLL$ are built from atoms and constants using operations, taken from the following table:

\vskip 3pt
\begin{center}
\begin{tabular}{|l|c|c|c|c|} \hline
& \multicolumn{2}{c|}{binary operations} & \multicolumn{2}{c|}{constants} \\ \hline
& conjunction & disjunction & true & false \\ \hline
multiplicative & $\otimes$ & $\Par$ & $1$ & $\bot$ \\\hline
additive & $\&$ & $\oplus$ & $\top$ & $0$ \\\hline
 \end{tabular}
 \end{center}

\vskip 3pt
Negation of an arbirary formula $A$, denoted by $A^\bot$, is defined externally as follows:
\begin{align*}
& p^\bot = \bar{p} \mbox{ ($p \in \PC$)} \\
& \bar{p}^\bot = p \mbox{ ($p \in \PC$)} \\
& (A \otimes B)^\bot = B^\bot \Par A^\bot && 1^\bot = \bot\\
& (A \Par B)^\bot = B^\bot \otimes A^\bot && \bot^\bot = 1\\
& (A \oplus B)^\bot = A^\bot \mathop{\&} B^\bot && 0^\bot = \top \\
& (A \mathop{\&} B)^\bot = A^\bot \oplus B^\bot && \top^\bot = 0
\end{align*}

Sequents of $\MACLL$ are expressions of the form ${} \to \Gamma$, where $\Gamma = A_1, \ldots, A_n$ is a non-empty sequence of $\MACLL$ formulae.

The notion of substitution of formulae for variables in $\MACLL$, in the presence of tight negations, is a bit more involved. Namely, the operation $[p := A]$ replaces each occurrence of atom~$p$ with $A$ and each occurrence of atom~$\bar{p}$ with $A^\bot$.

Axioms and inference rules of $\MACLL$ are depicted in Figure~\ref{Fig:MACLL}. Notice that there is no rule for constant $0$ (additive falsity); one may only introduce it together with $\top$, using $(\mathrm{ax})$ or $(\top)$.

\begin{figure*}
 $$
{} \to A, A^\bot
$$

$$
\infer[(\otimes)]{{} \to \Gamma, A \otimes B, \Delta}{{}\to \Gamma, A & {} \to B, \Delta}
\qquad
\infer[(\Par)]{{} \to A \Par B, \Gamma}{{} \to A, B, \Gamma}
$$

$$
\infer[(\mathop{\&})]{{}\to A_1 \mathop{\&} A_2, \Gamma}{{} \to A_1, \Gamma & {} \to A_2, \Gamma}
\qquad
\infer[(\oplus)_1]{{}\to A_1 \oplus A_2, \Gamma}{{} \to A_1, \Gamma}
\qquad
\infer[(\oplus)_2]{{}\to A_1 \oplus A_2, \Gamma}{{} \to A_2, \Gamma}
$$

$$
\infer[(1)]{\hspace*{.5em} {}\to 1\hspace*{.5em}}{}
\qquad
\infer[(\bot)]{{}\to \bot, \Gamma}{{}\to \Gamma}
\qquad
\infer[(\top)]{{}\to \top, \Gamma}{}
$$

$$
\infer[(\mathrm{cycle})]{{}\to \Gamma, \Delta}{{}\to \Delta,\Gamma}
$$
 \caption{Axioms and Rules of $\MACLL$}\label{Fig:MACLL}
\end{figure*}

Two formulae $A$ and $B$ are called equivalent (notation: $A \equiv B$), if the sequents $\to A^\bot, B$ and $\to B^\bot, A$ are both derivable. Interchanging equivalent formulae does not alter derivability.

The following translation maps Lambek formulae with additives to $\MACLL$ formulae:
\begin{align*}
& \widehat{p} = p\mbox{ ($p \in \PC$)} && 
 \widehat{A \cdot B} = \widehat{A} \otimes \widehat{B} \\ 
& \widehat{A \BS B} = \widehat{A}^\bot \Par \widehat{B} && \widehat{A \wedge B} = \widehat{A} \mathop{\&} \widehat{B}\\
&  \widehat{B \SL A} = \widehat{B} \Par \widehat{A}^\bot && \widehat{A \vee B} = \widehat{A} \oplus \widehat{B}
\end{align*}
This translation actually gives an exact (conservative) embedding of $\MALCs$ into $\MACLL$.

\begin{oldtheorem}[{Kanovich et al.~\cite[Cor.~4]{Kanovich2019}}]\label{MALC-into-MACLL}
 The sequent $A_1, \ldots, A_n \to B$ is derivable in $\MALCs$ if and only if its translation ${} \to \widehat{A}_n^\bot, \ldots, \widehat{A}_1^\bot, \widehat{B}$ is derivable in $\MACLL$.
\end{oldtheorem}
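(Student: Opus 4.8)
The statement is the faithfulness of the translation $\widehat{(\cdot)}$, and the plan is to prove its two halves by different means: the ``only if'' (soundness) by a direct induction on $\MALCs$-derivations, and the ``if'' (faithfulness) by analysing a cut-free $\MACLL$-derivation of the translated sequent. Write $\widehat{\Gamma}^\bot$ for $\widehat{A}_n^\bot,\ldots,\widehat{A}_1^\bot$, so that the target one-sided sequent is $\to\widehat{\Gamma}^\bot,\widehat{B}$. Call a $\MACLL$-formula an \emph{output} if it has the shape $\widehat{C}$ and an \emph{input} if it has the shape $\widehat{C}^\bot$ for a Lambek formula $C$; note that the two directions of division translate to inputs and outputs via $\widehat{A\BS B}=\widehat{A}^\bot\Par\widehat{B}$, hence $\widehat{A\BS B}^\bot=\widehat{B}^\bot\otimes\widehat{A}$, and symmetrically for $\SL$.

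For soundness I would induct on a derivation of $A_1,\ldots,A_n\to B$ in $\MALCs$, sending each two-sided rule to a short one-sided derivation. The identity $A\to A$ becomes the axiom $\to\widehat{A},\widehat{A}^\bot$ (after $(\mathrm{cycle})$); $(\to\BS)$ and $(\to\SL)$ become $(\Par)$, which assembles $\widehat{A}^\bot\Par\widehat{B}=\widehat{A\BS B}$; the left rules $(\BS\to),(\SL\to)$ and the product rules $(\to\cdot),(\cdot\to)$ become applications of $(\otimes)$ and $(\Par)$ using the negation identities above; and the additive rules of Figure~\ref{Fig:additives} map to the $(\oplus)$- and $\mathbin{\&}$-rules of Figure~\ref{Fig:MACLL}. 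Because $\MALCs$ carries no non-emptiness restriction, the cases of $(\to\BS),(\to\SL)$ with empty $\Pi$ translate without any side condition, and the image sequent, always containing $\widehat{B}$, is a legitimate non-empty $\MACLL$-sequent. This half is routine book-keeping of the cyclic order.

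Faithfulness is the heart of the argument. I would first invoke cut elimination for $\MACLL$ (standard for cyclic multiplicative-additive linear logic; Yetter~\cite{Yetter}, Abrusci~\cite{Abrusci} for the multiplicative core) and take a cut-free derivation $D$ of $\to\widehat{\Gamma}^\bot,\widehat{B}$. By the subformula property every formula in $D$ is a subformula of some $\widehat{A}_i^\bot$ or $\widehat{B}$; such subformulas are precisely the inputs and outputs of Lambek subformulas, and none is a constant, so the rules $(1),(\bot),(\top)$ never appear. The key lemma is a \emph{polarity invariant}: every sequent in $D$ contains exactly one output formula. I would prove this by induction on the cut-free derivation, counting outputs across each rule: an axiom $\to E,E^\bot$ has one output; $(\Par)$, $(\oplus)$, $\mathbin{\&}$ and $(\mathrm{cycle})$ preserve the count; and for $(\otimes)$ a short case analysis (on whether the principal formula is an input- or an output-tensor) shows that both premises again have exactly one output. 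In particular an all-input sequent is underivable, which forces every $(\otimes)$ to split its context so that the unique output and the newly exposed output half land in different premises---exactly the split demanded by the corresponding Lambek rule.

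Granting the invariant, I would decode $D$ into a $\MALCs$-derivation by reading, in each sequent, the unique output as the Lambek succedent and the inputs, in the reverse of their cyclic order, as the antecedent. Each $\MACLL$-rule then reads off as its two-sided counterpart: $(\Par)$ as $(\to\BS)/(\to\SL)$ or $(\cdot\to)$, $(\otimes)$ as $(\to\cdot)$ or $(\BS\to)/(\SL\to)$, $(\oplus)$ as $(\to\vee)/(\wedge\to)$, the $\mathbin{\&}$-rule as $(\to\wedge)/(\vee\to)$, and $(\mathrm{cycle})$ merely repositions the seam between succedent and antecedent. I expect the genuine difficulty to lie precisely here: one must check that the cyclic rotations can be normalised so that the non-commutative antecedent order is recovered unambiguously (the planarity argument of Abrusci and Yetter, now with a single distinguished output formula as the canonical reference point), and that the additive rules, whose side context is shared between the two premises, decode to Lambek additive rules over the same shared antecedent. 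The unique-output invariant is what keeps the whole decoding deterministic; together with soundness it yields the stated equivalence.
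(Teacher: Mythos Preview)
The paper does not prove this statement at all: it is stated as a cited result (an \texttt{oldtheorem}) from Kanovich et al.~\cite{Kanovich2019}, accompanied only by the remark that it is a non-commutative analogue of Schellinx~\cite{Schellinx1991} and that Pentus~\cite{Pentus1998} handled the additive-free case. There is therefore no ``paper's own proof'' to compare your proposal against.

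For what it is worth, your outline is the standard route to such conservativity theorems and is essentially the argument one finds in the cited sources. Two places deserve a little more care. First, the input/output dichotomy should be checked to be well-defined: you need that no $\MACLL$-formula is simultaneously of the form $\widehat{C}$ and of the form $\widehat{D}^\bot$; this holds by a straightforward induction on formula structure (at atoms, $p\ne\bar{q}$; at compound formulas, the immediate subformulas of an output-$\otimes$ are two outputs, whereas those of an input-$\otimes$ are one input and one output, etc.), and once established, the single-output invariant follows from cut-free axioms being atomic $\to p,\bar{p}$. Second, in the $(\otimes)$ case of the invariant you should make explicit that when the principal formula is an input coming from $\BS$ (resp.\ $\SL$), the output subformula lands in the right (resp.\ left) premise, so that the split of the context forced by the invariant matches exactly the split in $(\BS\to)$ (resp.\ $(\SL\to)$). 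With those points spelled out, the decoding to two-sided sequents goes through as you describe, and $(\mathrm{cycle})$ is harmless since the two-sided reading depends only on the cyclic order relative to the unique output.
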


This theorem is a  non-commutative variant of an embedding result by Schellinx~\cite{Schellinx1991} for intuitionistic and classical versions of commutative linear logic. Pentus~\cite{Pentus1998} proved a weaker version of Theorem~\ref{MALC-into-MACLL}, without additives.



\begin{proof}[Proof of Theorem~\ref{conjunctive_categorial_to_disjunction}]
Let us first prove Theorem~\ref{conjunctive_categorial_to_disjunction} for $\MALCs$.
 As in Theorem~\ref{conjunctive_categorial_to_MALC_transformation_theorem}, we establish the following statement. For any string $a_{i_1} \ldots a_{i_m}$ the statement $s(a_{i_1} \ldots a_{i_m})$ is derivable in the conjunctive categorial grammar if and only if the sequent $B''_{i_1}, \ldots, B''_{i_m} \to (s^{ff})^{ff}$ is derivable in $\MALCs$. 

 For convenience, let us replace each $B''_i$ with its equivalent $B'_i$. These formulae contain conjunction, but this is used only inside the proof, not in the final statement. Due to cut elimination, the goal sequents of the form $B''_{i_1}, \ldots, B''_{i_m} \to (s^{ff})^{ff}$ will still have conjunction-free derivations.

 Let us start with {\bf the ``if'' part.} Let the sequent $B'_{i_1}, \ldots, B'_{i_m} \to (s^{ff})^{ff}$ be derivable in $\MALCs$. Consider its translation into $\MACLL$, ${}\to (\widehat{B}'_{i_m})^\bot, \ldots,
 (\widehat{B}'_{i_1})^\bot, \widehat{(s^{ff})^{ff}}$. This sequent is derivable by Theorem~\ref{MALC-into-MACLL}.

 Now let us substitute the multiplicative falsity constant $\bot$ for $f$. The sequent keeps being derivable.
 It is easy to show that after such substitution each formula of the form $\widehat{F^{ff}}$ becomes equivalent to $\widehat{F}$. Indeed,
 $\widehat{F^{ff}} =  (\bar{f} \otimes \widehat{F}) \Par f$ transforms into $(1 \otimes \widehat{F}) \Par \bot \equiv \widehat{F}$ (since $1$ is neutral for $\otimes$ and $\bot$ is neutral for $\Par$). In other words, substituting $\bot$ for $f$ transforms relative double negation into ``real'' double negation, which can be removed due to the classical nature of $\MACLL$.

 Thus, after this substitution we get derivability of the sequent $\to (\widehat{B}_{i_m})^\bot, \ldots,\linebreak (\widehat{B}_{i_1})^\bot, s$ in $\MACLL$. By Theorem~\ref{MALC-into-MACLL}, its preimage $B_{i_1}, \ldots, B_{i_m} \to s$ is derivable in $\MALCs$. As noticed above, for sequents of this form derivability in $\MALCs$ is equivalent to that in $\MALC$. Finally, the ``if'' part of Theorem~\ref{conjunctive_categorial_to_MALC_transformation_theorem} gives derivability of $s(a_{i_1} \ldots a_{i_m})$ in the conjunctive categorial grammar.

 {\bf The ``only if'' part} is proven using the relative double negation lemma. Let $s(a_{i_1} \ldots a_{i_m})$ be derivable in the conjunctive categorial grammar. By the ``only if'' part of Theorem~\ref{conjunctive_categorial_to_MALC_transformation_theorem}, the sequent $B_{i_1}, \ldots, B_{i_m} \to s$ is derivable in $\MALC$, and therefore in $\MALCs$. Let us substitute $p_j^{ff}$ for each variable $p_j$ (in particular, $s^{ff}$ for $s$). Since $B_i = A_{i_1} \wedge \ldots \wedge A_{i,k_i}$, this yields derivability of the following sequent:
 \[
  (A'_{i_1,1} \wedge \ldots \wedge A'_{i_1,k_{i_1}}), \ldots,
  (A'_{i_m,1} \wedge \ldots \wedge A'_{i_m,k_{i_m}}) \to s^{ff}.
 \]
 Now Lemma~\ref{Lm:doublenegation} yields derivability of its relative double negation translation:
 \[
  (A'_{i_1,1} \wedge \ldots \wedge A'_{i_1,k_{i_1}})^{ff}, \ldots,
  (A'_{i_m,1} \wedge \ldots \wedge A'_{i_m,k_{i_m}})^{ff} \to (s^{ff})^{ff},
 \]
or $B'_{i_1}, \ldots, B'_{i_m} \to (s^{ff})^{ff},$ which is equivalent the desired sequent:
 \[
  B''_{i_1}, \ldots, B''_{i_m} \to (s^{ff})^{ff}.
 \]
%
%
%
%
%
%
 
 In order to cover the case of $\MALC$, we claim that for sequents of the form $B''_{i_1}, \ldots, B''_{i_m} \to (s^{ff})^{ff}$ (which are used in the grammar
constructed above) their derivability in $\MALC$ is equivalent to the one in $\MALCs$. Indeed, for any subformula $F$ of such a sequent, ${} \to F$ is not derivable.  Hence, each $\MALCs$ derivation actually obeys Lambek's non-emptiness restriction, i.e., it is a $\MALC$ derivation.
\end{proof}

Languages with the empty string can also be handled in this setting, yielding the following variant of Theorem~\ref{MALCs_empty_string}.

\begin{theorem}\label{conjunctive_categorial_to_disjunction_empty}
Any language described by a conjunctive grammar can be generated by a $\MALCs$-grammar whose categories use only $\BS$, $\SL$, and $\vee$.
\end{theorem}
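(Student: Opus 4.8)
The plan is to combine the two devices built in Sections~\ref{S:empty} and~\ref{S:disjunction}: the empty-string gadget $D = ((r\BS r)\BS((t\BS t)\BS q))\BS q$, which is substituted for the primitive target category, and the relative double-negation construction, which converts $\wedge$ into $\vee$. The crucial observation that makes the combination smooth is that $D$ is itself conjunction-free, so substituting it for a primitive category never reintroduces $\wedge$. The easy case is $\varepsilon\notin L$: then Theorem~\ref{conjunctive_iff_conjunctive_categorial_theorem} gives a conjunctive categorial grammar for $L$, and Theorem~\ref{conjunctive_categorial_to_disjunction}, in its $\MALCs$ form, already yields a grammar whose categories use only $\BS$, $\SL$, and $\vee$. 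So assume $\varepsilon\in L$ and set $L'=L\setminus\{\varepsilon\}$; fix, via Theorem~\ref{conjunctive_iff_conjunctive_categorial_theorem}, a conjunctive categorial grammar for $L'$ whose target $p_{\widetilde{S}}$ is, as in Section~\ref{S:empty}, primitive and never occurs under a division operator in an axiom category.

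First I would build the disjunction-only grammar $G'$ for $L'$ exactly as in Theorem~\ref{conjunctive_categorial_to_disjunction}: a $\MALCs$-grammar with axiomatic propositions $B''_i(a_i)$ and target $(p_{\widetilde{S}}^{ff})^{ff}$, using a fresh negation variable $f$ chosen distinct from $q,r,t$ and from every other variable, and using only $\BS$, $\SL$, $\vee$. Then I would substitute the gadget \emph{on top of} this grammar: $G'' = G'[p_{\widetilde{S}} := D]$. Because the double-negation construction was carried out before this substitution, the variables $q,r,t$ inside $D$ are not themselves double-negated; the only occurrences of $p_{\widetilde{S}}$ in $G'$ sit inside $p_{\widetilde{S}}^{ff}=(p_{\widetilde{S}}\BS f)\BS f$ (coming from the top-level occurrences of $p_{\widetilde{S}}$ to the left of $\SL$ in the categorial grammar) and in the target, so after the substitution $G''$ still uses only $\BS$, $\SL$, $\vee$, with target $(D^{ff})^{ff}$. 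The sequent ${}\to(D^{ff})^{ff}$ is derivable in $\MALCs$ by a short derivation extending the derivation of ${}\to D$ given in Section~\ref{S:empty} with two applications of $({\BS}\to)$ and $(\to{\BS})$, so $\varepsilon$ is accepted by $G''$.

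For non-empty strings I would argue exactly as in the proof of Theorem~\ref{conjunctive_categorial_to_disjunction}. Let $\Pi''$ be the formula sequence that $G''$ associates to a non-empty string $a_{i_1}\ldots a_{i_m}$. Translating a cut-free derivation of $\Pi''\to(D^{ff})^{ff}$ into $\MACLL$ via Theorem~\ref{MALC-into-MACLL} and then substituting the multiplicative falsity $\bot$ for $f$ collapses every relative double negation $\cdot^{ff}$ into an ordinary, removable double negation, while leaving the $D$-gadget untouched, since $f$ does not occur in $D$. Pulling the resulting $\MACLL$ sequent back (again by Theorem~\ref{MALC-into-MACLL}) yields a $\MALCs$-derivation of $\Pi'\to D$, where $\Pi'$ is the formula sequence assigned to the same string by the conjunction-based grammar $G^{D}=G[p_{\widetilde{S}}:=D]$ of Theorem~\ref{MALCs_empty_string} (with $G$ the $\MALC$-grammar produced by Theorem~\ref{conjunctive_categorial_to_MALC_transformation_theorem} from the categorial grammar for $L'$); and Theorem~\ref{MALCs_empty_string} tells us that this sequent is derivable precisely when $a_{i_1}\ldots a_{i_m}\in L'$. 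The converse implication is the ``only if'' argument of Theorem~\ref{conjunctive_categorial_to_disjunction}, built on Lemma~\ref{Lm:doublenegation}, applied verbatim with $D$ playing the role of $p_{\widetilde{S}}$ throughout (legitimately, since $f$ is fresh for $D$ as well). Hence $G''$ generates $L'\cup\{\varepsilon\}=L$, and all its categories use only $\BS$, $\SL$, $\vee$.

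The main difficulty is bookkeeping rather than a new idea. One must check that introducing the negation variable $f$ and the $\cdot^{ff}$ wrappings creates no unexpected matches with the variables $q,r,t$ internal to $D$ — so that, should one prefer to re-run the derivation analysis of Section~\ref{S:empty} directly on $G''$ rather than routing through $\MACLL$, the ``principal occurrence of $q$'' machinery underlying Lemmas~\ref{Lm:principal}--\ref{Lm:emptymain} retains exactly its meaning — and that the two substitutions are composed in the order above, so that $D$ stays conjunction-free and sits structurally outside the double-negation layer. Choosing to establish the non-empty-string correctness through the $\MACLL$ embedding and the $\bot$-for-$f$ substitution, so that the delicate analysis of Section~\ref{S:empty} is invoked only as a black box about the conjunction-based grammar $G^{D}$, is precisely what keeps this obstacle from becoming a genuine one.
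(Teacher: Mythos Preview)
Your proposal is correct and follows essentially the same route as the paper: build the disjunction-only grammar of Theorem~\ref{conjunctive_categorial_to_disjunction}, substitute the gadget $D$ for the primitive target $s=p_{\widetilde{S}}$, verify $\varepsilon$ is accepted via derivability of ${}\to(D^{ff})^{ff}$, and for non-empty strings route through $\MACLL$ with the $\bot$-for-$f$ substitution to land back on the conjunction-based sequent $B_{i_1}[s:=D],\ldots,B_{i_m}[s:=D]\to D$, whereupon the analysis of Section~\ref{S:empty} (Lemma~\ref{Lm:emptymain}) applies. Your write-up is in fact more explicit than the paper's about the bookkeeping (freshness of $f$ relative to $q,r,t$; $D$ being conjunction-free; the order of the two substitutions), which is all to the good.
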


\begin{proof}
Let us first consider the case where the given language $L$ does not include $\varepsilon$.  If $\varepsilon \notin L$, one just takes the grammar constructed in Theorem~\ref{conjunctive_categorial_to_disjunction} as the desired $\MALCs$-grammar.

Now let $\varepsilon \in L$. Then let us substitute, in the given $\MALCs$-grammar with disjunction, $D = \bigl( (r\BS r) \BS ((t \BS t) \BS q) \bigr) \BS q$ for $s$. This substitution adds $\varepsilon$ to the language and keeps there all the non-empty strings from $L$. It remains to show that no extra non-empty string becomes generated by the grammar. This is proved by the same argument as in the ``if'' part of Theorem~\ref{conjunctive_categorial_to_disjunction}. Namely, we replace the sequent $B''_{i_1}[s := D], \ldots, B''_{i_m}[s := D] \to (D^{ff})^{ff}$ with the equivalent sequent
$B'_{i_1}[s := D], \ldots, B'_{i_m}[s := D] \to (D^{ff})^{ff}$ and translate it into $\MACLL$:
\[
{} \to (\widehat{B}'_{i_m})^\bot[s := \widehat{D}], \ldots, (\widehat{B}'_{i_1})^\bot[s := \widehat{D}], \widehat{(D^{ff})^{ff}}.
\]
Next, we substitute $\bot$ for $f$ and get derivability of ${} \to (\widehat{B}_{i_m})^\bot[s := \widehat{D}], \ldots, (\widehat{B}_{i_1})^\bot[s := \widehat{D}], D$ in $\MACLL$, and therefore
derivability of $B_{i_1}[s := D], \ldots, B_{i_m}[s := D] \to D$ in $\MALCs$. By construction from Section~\ref{S:empty} (Lemma~\ref{Lm:emptymain}), this entails, for $m > 0$, derivability of $B_{i_1}, \ldots, B_{i_m} \to s$ in $\MALCs$. Hence, the given non-empty string belongs to the original language $L$.
\end{proof}

Finally, as a corollary, we get the following version of Theorem~I by Kanazawa~\cite{Kanazawa1992}.

\begin{corollary}
Any finite intersection of context-free languages is described by a $\MALC$-grammar using only $\BS$, $\SL$, and $\vee$. The same holds for $\MALC$, provided the given intersection does not include $\varepsilon$.
\end{corollary}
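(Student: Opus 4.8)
The plan is to deduce the corollary from the two embedding theorems of this section together with two elementary facts about conjunctive grammars. First I would observe that every context-free grammar is literally a conjunctive grammar --- a rule $A \to \beta$ is the $k = 1$ case of a conjunctive rule $A \to \beta_1 \C \ldots \C \beta_k$ --- so every context-free language is a conjunctive language. Next I would show that conjunctive languages are closed under finite intersection: given conjunctive grammars $G_1, \ldots, G_\ell$ for $L_1, \ldots, L_\ell$, rename nonterminals so that their nonterminal sets are pairwise disjoint, take the union of all rules, and add a fresh initial symbol $S$ with the single rule $S \to S_1 \C \ldots \C S_\ell$, where $S_j$ is the start symbol of $G_j$. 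By the inference rule for conjunctive rules, $S(w)$ is derivable exactly when every $S_j(w)$ is, i.e.\ exactly when $w \in L_1 \cap \ldots \cap L_\ell$. Hence any finite intersection $L$ of context-free languages is generated by a conjunctive grammar.

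With this in hand, the first assertion is immediate from Theorem~\ref{conjunctive_categorial_to_disjunction_empty}: it turns the conjunctive grammar for $L$ into a $\MALCs$-grammar for $L$ whose categories use only $\BS$, $\SL$, and $\vee$. For the second assertion I would treat the case $\varepsilon \notin L$ separately: then $L$ is an $\varepsilon$-free conjunctive language, so by Theorem~\ref{conjunctive_iff_conjunctive_categorial_theorem} it has a conjunctive categorial grammar, and Theorem~\ref{conjunctive_categorial_to_disjunction} yields a Lambek grammar with categories over $\BS$, $\SL$, $\vee$ and goal category $(s^{ff})^{ff}$ generating $L$; and, as observed at the end of the proof of Theorem~\ref{conjunctive_categorial_to_disjunction}, for the sequents produced by this grammar derivability in $\MALC$ coincides with derivability in $\MALCs$, since no subformula of such a sequent is derivable from the empty antecedent, so Lambek's non-emptiness restriction is vacuously satisfied. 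This gives the $\MALC$ version.

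I do not expect any genuine obstacle: the corollary is essentially a repackaging of Theorems~\ref{conjunctive_categorial_to_disjunction} and~\ref{conjunctive_categorial_to_disjunction_empty}. The only steps needing care are the intersection-closure construction above (trivial once one puts a single conjunctive rule at a fresh start symbol) and the empty-string bookkeeping: a $\MALC$-grammar can never accept $\varepsilon$, because the empty string of letters corresponds to the empty antecedent, which is forbidden in $\MALC$; this is what forces the split into the two cases, with the $\varepsilon$-free case routed through Theorem~\ref{conjunctive_categorial_to_disjunction} (without the $D$-substitution of Section~\ref{S:empty}) and the general case through Theorem~\ref{conjunctive_categorial_to_disjunction_empty}.
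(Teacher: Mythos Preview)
Your proposal is correct and matches the paper's approach: the corollary is stated without proof, as an immediate consequence of Theorems~\ref{conjunctive_categorial_to_disjunction} and~\ref{conjunctive_categorial_to_disjunction_empty} together with the evident fact that finite intersections of context-free languages are conjunctive. Your explicit closure-under-intersection construction and the routing of the two cases through the two theorems is exactly the intended argument; note also that the first sentence of the corollary as printed should read $\MALCs$ rather than $\MALC$, which you have correctly inferred.
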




\section{$\MALC$ with Distributivity}\label{S:distributivity}

In this section, we shall discuss issues connected with distributivity. It is well known that the distributivity law for $\wedge$ and $\vee$, namely,
\[
(A \vee B) \wedge C \to (A \wedge C) \vee (B \wedge C),
\]
is not derivable in substructural logics like the $\MALC$ or $\MALCs$ (see Ono and Komori~\cite{OnoKomori}). On the other hand, natural interpretations of the Lambek calculus
on formal languages suggest this principle to be valid. This motivates considering extensions of $\MALC$ and $\MALCs$ with the distributivity principle as an extra axiom. We denote
such extensions by $\MALCD$ and $\MALCDs$ respectively. Cut-free calculi for such systems were given by Kozak~\cite{Kozak}.

We shall show that distributivity does not affect our results on embedding conjunctive grammars into multiplicative-additive Lambek grammars. The easier part here is the conjunction-only one.
Namely, the fragments of $\MALC$ and $\MALCD$ in the language of $\BS$, $\SL$, and $\wedge$ simply coincide, and the same for $\MALCs$ and $\MALCDs$ (see Kanovich et al.~\cite[Thm.~6]{Kanovich2022IC}). In other words,
in this restricted language adding the distributivity principle does not extend the set of derivable sequents. This gives the following corollaries.
\begin{theorem}
\begin{enumerate}
\item If a language without the empty string is described by a conjunctive grammar, then it can also be described by a $\MALCD$-grammar using only $\BS$, $\SL$, $\wedge$.
\item If a language is described by a conjunctive grammar, then it can also be described by a $\MALCDs$-grammar using only $\BS$, $\SL$, $\wedge$.
\end{enumerate}
\end{theorem}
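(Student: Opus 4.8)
The plan is to piece the two statements together from the constructions already established, using the coincidence of the $\{\BS,\SL,\wedge\}$-fragments of $\MALC$ and $\MALCD$ (and of $\MALCs$ and $\MALCDs$) recorded just above, so that in this restricted language one never has to actually reason with the distributivity axiom. The only genuine work is to check that the grammars produced by the earlier theorems do indeed stay inside that fragment.

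For part~(1): starting from a conjunctive grammar for a language $L$ with $\varepsilon\notin L$, Theorem~\ref{conjunctive_iff_conjunctive_categorial_theorem} gives a conjunctive categorial grammar for $L$, and Theorem~\ref{conjunctive_categorial_to_MALC_transformation_theorem} then gives a $\MALC$-grammar $G$ for $L$. I would first verify that every category occurring in $G$ is built using only $\BS$, $\SL$, and $\wedge$: the axioms of the conjunctive categorial grammar built in the proof of Theorem~\ref{conjunctive_iff_conjunctive_categorial_theorem} use only $\BS$, $\SL$, and flat conjunctions $p_{\widetilde X_1}\wedge\dots\wedge p_{\widetilde X_k}$, and the translation in Theorem~\ref{conjunctive_categorial_to_MALC_transformation_theorem} merely conjoins several such categories assigned to the same letter, introducing neither $\cdot$ nor $\vee$. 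Hence $G$ is a grammar in the $\{\BS,\SL,\wedge\}$-fragment, and since a sequent of that fragment is derivable in $\MALC$ iff it is derivable in $\MALCD$ (Kanovich et al.~\cite[Thm.~6]{Kanovich2022IC}), the very same $G$, read now as a $\MALCD$-grammar, accepts exactly $L$.

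For part~(2): given an arbitrary conjunctive grammar for a language $L$, Theorem~\ref{MALCs_empty_string} yields a $\MALCs$-grammar for $L$; this is $G$ above read over $\MALCs$ when $\varepsilon\notin L$, and $G^D=G[p_{\widetilde S}:=D]$ with $D=\bigl((r\BS r)\BS((t\BS t)\BS q)\bigr)\BS q$ when $\varepsilon\in L$. I would observe that the substitution $[p_{\widetilde S}:=D]$ only inserts occurrences of $\BS$, so $G^D$ still lies in the $\{\BS,\SL,\wedge\}$-fragment. Applying the $\MALCs$/$\MALCDs$ half of \cite[Thm.~6]{Kanovich2022IC} in the same way, the grammar read as a $\MALCDs$-grammar describes the same language $L$.

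There is essentially no hard step here — the substantive content lives in Theorems~\ref{conjunctive_iff_conjunctive_categorial_theorem}, \ref{conjunctive_categorial_to_MALC_transformation_theorem}, \ref{MALCs_empty_string} and in the cited fragment-coincidence result. The one point demanding care, and the part I would write out explicitly, is the bookkeeping claim that none of those constructions ever leaves the $\{\BS,\SL,\wedge\}$-fragment — in particular that the empty-string patch $D$ contains no product and no disjunction. Once that is checked, the fragment-coincidence theorem transports derivability verbatim and both statements follow immediately.
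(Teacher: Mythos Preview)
Your proposal is correct and matches the paper's approach exactly. In fact, the paper does not even write out a separate proof for this theorem: it simply notes that the $\{\BS,\SL,\wedge\}$-fragments of $\MALC$ and $\MALCD$ (respectively $\MALCs$ and $\MALCDs$) coincide by Kanovich et al.~\cite[Thm.~6]{Kanovich2022IC} and calls the two statements ``corollaries'' of the earlier constructions, so your slightly more detailed verification that the grammars from Theorems~\ref{conjunctive_iff_conjunctive_categorial_theorem}, \ref{conjunctive_categorial_to_MALC_transformation_theorem}, and~\ref{MALCs_empty_string} never leave that fragment (including the check on $D$) is exactly the bookkeeping the paper leaves implicit.
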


The situation with disjunction is a bit trickier. Namely, there exists a sequent using only $\BS$, $\SL$, and $\vee$, which is derivable in $\MALCD$, but not in $\MALC$ or even $\MALCs$ (Kanovich et al.~\cite[Thm.~7]{Kanovich2022IC}). Nevertheless, the results for conjunctive grammars still hold.
\begin{theorem}
\begin{enumerate}
\item If a language without the empty string is described by a conjunctive grammar, then it can also be described by a $\MALCD$-grammar using only $\BS$, $\SL$, $\vee$.
\item If a language is described by a conjunctive grammar, then it can also be described by a $\MALCDs$-grammar using only $\BS$, $\SL$, $\vee$.
\end{enumerate}
\end{theorem}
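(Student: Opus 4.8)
The plan is to reduce both parts to the already-proven non-distributive versions, Theorems~\ref{conjunctive_categorial_to_disjunction} and~\ref{conjunctive_categorial_to_disjunction_empty}, by exploiting the fact quoted just above from Kanovich et al.~\cite[Thm.~6]{Kanovich2022IC}: on the fragment in the language of $\BS$, $\SL$, $\wedge$ the systems $\MALC$ and $\MALCD$ (respectively $\MALCs$ and $\MALCDs$) derive exactly the same sequents. Concretely, I would take the very same Lambek grammar produced by Theorem~\ref{conjunctive_categorial_to_disjunction} for part~1 (atomic propositions $B''_i(a_i)$, goal category $(s^{ff})^{ff}$), or by Theorem~\ref{conjunctive_categorial_to_disjunction_empty} for part~2 (in the $\varepsilon\in L$ case, its $[s:=D]$-modified version), and simply reinterpret acceptance over $\MALCD$, respectively $\MALCDs$. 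Since $\MALC\subseteq\MALCD$ and $\MALCs\subseteq\MALCDs$, the reinterpreted grammar accepts at least the language $L$ accepted over the non-distributive system, so the only remaining task is to rule out overgeneration.

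For overgeneration, suppose a non-empty string $a_{i_1}\dots a_{i_m}$ is accepted, i.e.\ the sequent $B''_{i_1},\dots,B''_{i_m}\to(s^{ff})^{ff}$ (or its $[s:=D]$-variant) is derivable in $\MALCD$, respectively $\MALCDs$. One cannot invoke fragment-coincidence directly, because the $B''_i$ contain $\vee$, and Kanovich et al.~\cite[Thm.~7]{Kanovich2022IC} shows that on the $\{\BS,\SL,\vee\}$-fragment distributivity genuinely enlarges the set of derivable sequents. The key step is therefore to first replace each $B''_i$ by the $\MALC$-equivalent $B'_i=(A'_{i,1}\wedge\dots\wedge A'_{i,k_i})^{ff}$. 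This replacement is legitimate inside $\MALCD$/$\MALCDs$ because the equivalence $B'_i\equiv B''_i$ rests only on $(A_1\BS B)\wedge\dots\wedge(A_n\BS B)\equiv(A_1\vee\dots\vee A_n)\BS B$, which needs no distributivity. Crucially, $B'_i$ — unlike $B''_i$ — contains no $\vee$: it is built from primitive categories (among them the fresh $f$, and, in the empty-string case, the variables $q,r,t$ occurring in $D$) using only $\BS$, $\SL$, $\wedge$, while $(s^{ff})^{ff}$ and $(D^{ff})^{ff}$ use only $\BS$. Hence the rewritten goal sequent lies in the $\{\BS,\SL,\wedge\}$-fragment, so by Kanovich et al.~\cite[Thm.~6]{Kanovich2022IC} its derivability in $\MALCD$ (resp.\ $\MALCDs$) entails derivability in $\MALC$, hence in $\MALCs$. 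At this point the hypotheses of the ``if'' part of Theorem~\ref{conjunctive_categorial_to_disjunction} (for part~1), respectively of the overgeneration argument inside Theorem~\ref{conjunctive_categorial_to_disjunction_empty} together with Lemma~\ref{Lm:emptymain} (for part~2, including the $\varepsilon\in L$ subcase), are met verbatim, and they yield $a_{i_1}\dots a_{i_m}\in L$. Combining the two containments, the reinterpreted grammar describes exactly $L$ and uses only $\BS$, $\SL$, $\vee$.

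The main obstacle I anticipate is purely careful bookkeeping: checking that $B''_i\rightsquigarrow B'_i$ really eliminates every occurrence of $\vee$, that the auxiliary variables $f$ (and $q,r,t$ when $\varepsilon\in L$) do not spoil membership in the $\{\BS,\SL,\wedge\}$-fragment — the fragment restriction concerns connectives, not atoms, so this goes through — and that the $[s:=D]$-substitution used for $\varepsilon\in L$ introduces no new $\vee$, which it does not, since $D$ is built with $\BS$ alone. Everything downstream of the fragment-coincidence step is reused unchanged from the non-distributive proofs, so no fresh combinatorial analysis of derivations is needed.
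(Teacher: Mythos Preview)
Your proposal is correct and follows essentially the same approach as the paper: reuse the grammars from Theorems~\ref{conjunctive_categorial_to_disjunction} and~\ref{conjunctive_categorial_to_disjunction_empty}, pass from $B''_i$ to the $\MALC$-equivalent $B'_i$ to land in the $\{\BS,\SL,\wedge\}$-fragment, invoke the fragment-coincidence result of Kanovich et al.\ to drop distributivity, and conclude that the language is unchanged. The paper's proof is more terse---it packages both containments into the single observation that derivability of the relevant sequents ``does not depend on distributivity''---but the underlying mechanism is identical to yours.
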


\begin{proof}
The sequents used in grammars provided by Theorems~\ref{conjunctive_categorial_to_disjunction} and~\ref{conjunctive_categorial_to_disjunction_empty} use $\BS$, $\SL$, and $\vee$, but can be equivalently rewritten in the language of $\BS$, $\SL$, $\wedge$. Namely, one replaces $B''_i$ by $B'_i$ and $B''_i[s := D]$ with $B'_i[s := D]$. For such sequents, derivability in $\MALCD$ is equivalent to the one in $\MALC$, and derivability in $\MALCDs$ is equivalent to the one in $\MALCs$. Next, since the equivalences $B''_i \equiv B'_i$ and $B''_i[s := D] \equiv B'_i[s := D]$ keep valid in the presence of distributivity, we conclude that derivability sequents used for grammars provided by  Theorems~\ref{conjunctive_categorial_to_disjunction} and~\ref{conjunctive_categorial_to_disjunction_empty} does not depend on distributivity. Therefore, if one considers the same grammars as grammars over $\MALCD$ or $\MALCDs$ respectively, they will describe the same languages.
\end{proof}

It is also worth mentioning that for the {\em non-associative} version of $\MALCD$ the situation is significantly different, and the expressive power is much smaller. Namely, as shown by Buszkowski and
Farulewski~\cite{BuszkowskiFarulewski}, grammars based on the distributive non-associative multiplicative-additive Lambek calculus describe exactly the class of context-free languages.


\section{Conclusion}

We have developed a new family of categorial grammars, called {\em conjunctive categorial grammars,}  which extend basic categorial grammars with conjunction. We have shown that these grammars are equivalent
 to conjunctive grammars, proposed by Okhotin, provided the languages do not include the empty string. Conjunctive categorial grammars may be embedded into Lambek grammars extended with additive operations
 (this formalism was proposed by Kanazawa), and even one of the two additives---conjunction or disjunction---is sufficient for such an embedding. We have shown, however, that there exists a Lambek grammar with additive conjunction
 ($\MALC$-grammar)  which  describes an NP-hard language. Therefore, under the condition that $\mathrm{P} \ne \mathrm{NP}$, Kanazawa's framework is strictly stronger than conjunctive grammars.

 Several questions are left open for further research.
  First, we do not yet have an unconditional separation of Lambek grammars with additives and conjunctive grammars. Second, it would be interesting to establish natural bounds on the class of languages described by $\MALC$-grammars.
  In particular, it is unknown whether such grammars can describe languages which are not images of conjunctive languages under symbol-to-symbol homomorphisms. Third, there is an open question whether there exists
  a $\MALC$-grammar describing a language harder than NP-complete (e.g., a PSPACE-complete one).


\subsection*{Acknowledgements}

The part of this article due to Stepan Kuznetsov was prepared within the framework of the HSE University Basic Research Program. The work of Stepan Kuznetsov was supported by the Theoretical Physics and Mathematics Advancement Foundation ``BASIS.''


\end{document}